\documentclass[journal]{IEEEtran}
\usepackage{amsmath,amsfonts}
\usepackage{algorithmic}
\newtheorem{definition}{Definition}

\newtheorem{theorem}{Theorem}
\newtheorem{lemma}{Lemma}
\newtheorem{proof}{Proof}
\usepackage{booktabs}
\usepackage{array}
\usepackage[caption=false,font=normalsize,labelfont=sf,textfont=sf]{subfig}
\usepackage{textcomp}
\usepackage{stfloats}
\usepackage{url}
\usepackage{verbatim}
\usepackage{graphicx}
\usepackage{cite}
\usepackage{xcolor}
\usepackage[ruled,vlined]{algorithm2e}
\usepackage{amsmath, amssymb}

\SetKwInput{KwIn}{Input}
\SetKwInput{KwOut}{Output}
\SetKwComment{tcp}{\# }{}
\hyphenation{op-tical net-works semi-conduc-tor IEEE-Xplore}

\begin{document}

\title{Plasticity-Aware Mixture of Experts for Learning Under QoE Shifts in Adaptive Video Streaming}

\author{Zhiqiang He, \IEEEmembership{Member,~IEEE}, Zhi Liu, \IEEEmembership{Senior Member,~IEEE}

\thanks{This work of Zhiqiang He was supported by JST SPRING, Grant Number JPMJSP2131. The work of Zhi Liu was partially supported by ROIS NII Open Collaborative Research 2025-(252FA-23456), JSPS KAKENHI Grant Number 24K14927 and JST ASPIRE JPMJAP2432.}

\thanks{Zhiqiang He and Zhi Liu are with the Department of Computer and Network Engineering, the University of Electro-Communications, Japan. Email: hezhiqiang@ieee.org, liu@ieee.org.}

\thanks{The corresponding author is Zhi Liu.}
}

\markboth{Accepted by Journal: IEEE Transactions on Multimedia}%
{}


\maketitle

\begin{abstract}
Adaptive video streaming systems are designed to optimize Quality of Experience (QoE) and, in turn, enhance user satisfaction. However, differences in user profiles and video content lead to different weights for QoE factors, resulting in user-specific QoE functions and, thus, varying optimization objectives. This variability poses significant challenges for neural networks, as they often struggle to generalize under evolving targets—a phenomenon known as plasticity loss that prevents conventional models from adapting effectively to changing optimization objectives. To address this limitation, we propose the Plasticity-Aware Mixture of Experts (PA-MoE), a novel learning framework that dynamically modulates network plasticity by balancing memory retention with selective forgetting. In particular, PA-MoE leverages noise injection to promote the selective forgetting of outdated knowledge, thereby endowing neural networks with enhanced adaptive capabilities. In addition, we present a rigorous theoretical analysis of PA-MoE by deriving a regret bound that quantifies its learning performance. Experimental evaluations demonstrate that PA-MoE achieves a 45.5\% improvement in QoE over competitive baselines in dynamic streaming environments. Further analysis reveals that the model effectively mitigates plasticity loss by optimizing neuron utilization. Finally, a parameter sensitivity study is performed by injecting varying levels of noise, and the results align closely with our theoretical predictions. The code is available at https://github.com/tinyzqh/PA-MoE.
\end{abstract}

\begin{IEEEkeywords}
Plasticity loss, Rate Adaption, Reinforcement Learning, Non-stationary.
\end{IEEEkeywords}

\section{Introduction}

\IEEEPARstart{A}daptive video streaming (AVS) has become the foundation of modern multimedia applications. To ensure user satisfaction, major platforms like Netflix and YouTube are continually improving playback quality. This improvement is guided by the concept of Quality of Experience (QoE), which incorporates key factors such as bitrate, rebuffering time, and latency \cite{10314007}. Traditionally, AVS systems optimize QoE by dynamically adjusting streaming parameters in response to changes in network conditions, ensuring a smooth viewing experience even with varying bandwidth. However, real-world deployments have shown that network bandwidth is not the only factor affecting QoE \cite{dobrian2011understanding, 10633863, 10185649}. Recent research has sought to improve QoE from the \textit{encoding}, \textit{management}, and \textit{assessment} perspectives, such as adaptive quantization for video encoding \cite{taha2023smart}, QoE-driven management systems \cite{taha2021qoe}, and automated QoE evaluation models \cite{taha2021automated}. However, these approaches primarily target system-level adaptation and do not fully capture the evolving nature of QoE. Recent evidence suggests that individual user preferences \cite{tang2023successor} and important regions within video content \cite{choi2024data} also play a role in QoE dynamics, making the optimization objective constantly changing. This presents a significant challenge: the optimization criteria are constantly shifting, requiring adaptive models that can track and respond to these changes in real time.  In summary, effectively addressing the AVS optimization problem in the face of dynamically changing QoE requires the development of robust, adaptive algorithms that can balance the demands of network variability and evolving user-centric factors \cite{10314007}.

Classical methods like Model Predictive Control\cite{bayesmpc} struggle with changing objectives. Existing research on QoE dynamics can be broadly divided into two stages. The first stage focuses on the user level, where factors such as user preferences \cite{tang2023successor} and regions of interest within the video \cite{choi2024data,8761156} are taken into account. In this stage, data including watch time, likes, and timestamps are utilized to predict user interests, which then inform the adjustment of QoE parameters \cite{lee2022qrator, kimura2021context}. The second stage builds on these predictions to optimize QoE, enabling the streaming system to adapt to dynamically changing user demands \cite{zhang2021sensei, zhang2022enabling}. However, two-stage optimization methods encounter several challenges that hinder effective QoE optimization. Inaccuracies in user-level predictions can cascade into subsequent stages, resulting in suboptimal outcomes. Furthermore, their sequential nature, dependence on pre-trained models, and reliance on historical data delay real-time adjustments and updates, while also limiting the ability to generalize when faced with QoE variations beyond the training distribution. Existing approaches for handling dynamic QoE weights, such as multi-objective Q-networks \cite{abels2019dynamic} and meta-reinforcement learning \cite{kan2022improving}, typically rely on a key assumption: the dynamics of the weight changes are known in advance and can be used either as input to the model \cite{abels2019dynamic} or for pretraining \cite{kan2022improving}. In contrast, our method removes this critical assumption by directly addressing the problem from the perspective of plasticity, making our approach more aligned with realistic industrial scenarios. Some researchers have even posited that supporting a large number of QoE objectives is inherently infeasible \cite{9796953}. In this work, we challenge that notion by leveraging a neural network capable of effectively managing a broad spectrum of QoE objectives.

The impact of QoE variations on the optimization objective is evident when considering that different content types have distinct QoE preferences. For instance, while news broadcasts may prioritize stable playback even at lower bitrates, sports events demand higher frame rates and minimal stuttering for an optimal viewing experience. To address these challenges, we propose a unified, single-policy learning approach that overcomes the limitations of two-stage methods by directly adapting to varying QoE objectives within one framework. Nevertheless, employing a single-stage neural network—where a single set of parameters must rapidly learn and adjust to multiple dynamic objectives—presents a formidable challenge for deep neural network–based AVS optimization. Similar to human neurons \cite{puderbaugh2023neuroplasticity}, neural networks are susceptible to plasticity loss; frequent switching between changing objectives disrupts established representations and hinders the retention of previously acquired knowledge \cite{dohare2024loss}. This degradation in plasticity undermines the model’s ability to generalize and efficiently adapt to new QoE shifts, complicating the overall optimization process \cite{LyleZNPPD23, gulcehre2022an, Hare_Tortoise}. Fortunately, existing research demonstrates that Mixture of Experts (MoE) architectures can effectively mitigate this issue by preserving network plasticity through the dynamic allocation of specialized experts for different tasks \cite{moe_rl}. Moreover, the injection of noise into neural networks has been shown to counteract plasticity loss and enhance overall performance by improving the model's robustness to shifting objectives \cite{ash2020warm, d2022sample, nikishin2022primacy, nauman2025bigger}.

Despite significant advancements, current approaches still have notable limitations in dynamically adapting QoE optimization in AVS. While standard mixture-of-experts (MoE) architectures help mitigate plasticity loss, they lack explicit mechanisms to balance forgetting with retention. This can lead to scenarios where outdated experts retain irrelevant knowledge while over-specialized experts struggle to generalize across new QoE objectives \cite{sun2022paco,hendawy2024multitask}. To address these challenges, we propose Plasticity-Aware Mixture of Experts (PA-MoE), a novel learning framework designed to regulate neural plasticity in response to dynamic QoE shifts. Our approach draws inspiration from human adaptive behavior, where selective suppression of irrelevant memories \cite{anderson2001suppressing} facilitates adjustment to changing environments. PA-MoE integrates a controlled forgetting mechanism via noise injection \cite{nikishin2022primacy}, systematically discarding obsolete knowledge while preserving essential shared information through adaptive weight adjustments. By striking an optimal balance between adaptability and stability, PA-MoE ensures efficient learning and robust performance across a wide range of QoE optimization objectives in AVS. Furthermore, our approach is supported by a rigorous theoretical analysis that substantiates its efficacy.

In summary, this paper makes the following contributions:

\begin{list}{$\bullet$}{\setlength{\leftmargin}{10.0pt} \setlength{\itemindent}{0.pt} \setlength{\labelsep}{0.5em}}
  \item \textbf{Novel Problem Analysis:} We are the first to identify the loss of neural network plasticity induced by fluctuations in QoE, providing a new perspective on optimizing systems for shifts in QoE due to changes in user preferences and video content.
  \item \textbf{Algorithm Design:} We propose PA-MoE, a novel algorithm that incorporates controlled noise injection into experts to actively eliminate outdated knowledge, thus maintaining the adaptive capability of the MoE framework.
  \item \textbf{Theoretical Foundation:} We provide a rigorous theoretical analysis of PA-MoE by deriving its performance regret bound and demonstrating how noise injection dynamically balances the trade-off between forgetting and memory retention.
  \item \textbf{Implementation and Evaluation:} Extensive experiments show that PA-MoE rapidly adapts to QoE shifts, achieving a 45.5\% performance improvement over conventional MoE models. Comprehensive analysis across multiple metrics elucidates the mechanisms behind PA-MoE's effectiveness in adapting to dynamic system changes, while a sensitivity study on noise injection further validates our theoretical framework.
\end{list}

\section{Background And Motivation}

\subsection{Problem Formulation}

In adaptive video streaming \cite{kan2022improving}, a video is typically divided into $I$ consecutive chuncks $\{V_1, V_2, \dots, V_I\}$, with each chunk lasting $t_{i}$ seconds. These chunks are encoded at multiple bitrates $b \in \mathbb{B} = \{b_1, b_2, ..., b_J\}$ in ascending order, such that the size of chunk $i$, denoted as $d_{i}^{b}$, increases with the bitrate $b$. The goal is for a reinforcement learning agent to select the optimal bitrate for each chunk in order to maximize QoE. To prevent interruptions during transitions between segments $V_{i}$ and $V_{i+1}$, a playback buffer $B_{f}$ is used to prefetch upcoming segments, allowing for simultaneous playback and downloading. The buffer occupancy, $B_{f}(t)$, is limited to the interval $[0, B_f^{max}]$ to account for server or client limitations.

The download time for segment $V_{i}^{b}$ is composed of two main components: the network round-trip time $t_{RTT}$ and data transfer time $d_{i}^{b}/S_{i}$ represents the download speed. In order to account for network variability, a noise factor $\eta(i)$ is introduced. As a result, the download start time for segment $V_{i+1}$ can be calculated as $t_{i+1}^{d} = t_{i}^{d} + \left( \frac{d_{i}^{b}}{S_{i}} + t_{RTT} \right) \times \eta(i) + \Delta t_{i}^{w}$, where $\Delta t_i^w$ is a deliberate waiting period implemented to manage buffer constraints. The evolution of buffer occupancy is described by:

\begin{equation}
    B_{f}(t_{i+1}) = \left( \left(B_{f}\left(t_{i} \right) - t_{\text{delay}} \right)_{+} + t_{i} - \Delta t_{i}^{w}  \right)_{+}, 
\end{equation}
with the network delay defined as $t_{\text{delay}} = \left( \frac{d_{i}^{b}}{S_{i}} + t_{RTT} \right) \times \eta(i)$. To prevent buffer overflow, the waiting time is determined by
\begin{equation}
    \Delta t_{i}^{w} = \left( \left(B_{f}\left(t_{i} \right) - t_{\text{delay}} \right)_{+} + t_{i} - B_{f}^{max}  \right)_{+}.
\end{equation}

In adaptive video streaming, the central objective is to optimize the QoE. A typical QoE model incorporates three critical components. The first component is the \textbf{BitRate} factor, which measures the average video quality, given by $\frac{1}{I} \sum_{i=1}^{I}q(V_{i}^{b})$, where $q(\cdot)$ is a monotonically increasing quality function; The second component is the \textbf{Smooth} factor,  which penalizes quality fluctuations. This is represented by the term $\frac{1}{I-1} \sum_{i=1}^{I-1}|q(V_{i+1}^{b}) - q(V_{i}^{b})|$. This captures the perceptual impact of abrupt changes in quality. The third component is the \textbf{Rebuffer} penalty, which accounts for playback interruptions due to buffer underruns caused by network delays and transmission time. These factors are combined into a weighted sum to balance video quality, smoothness, and playback stability:

\begin{align}
    \begin{split}
        QoE^{I} = &\mu_{1} \sum_{i=1}^{I}  q(V_{i}^{b}) - \mu_{2} \sum_{i=1}^{I-1}|q(V_{i + 1}^{b}) - q(V_{i}^{b})| - \\
        &  \mu_{3} \sum_{i=1}^{I} \left( \left( \frac{d_{i}^{b}}{S_{i}} + t_{RTT} \right) \times \eta(i) - B_{f}(t_{i}) \right)_{+}.
    \end{split}
\end{align}

Accordingly, the overall problem is naturally formulated as a sequential decision-making process, which can be modeled as a Markov Decision Process (MDP). This formulation enables the application of state-of-the-art reinforcement learning algorithms to effectively address the QoE optimization challenge in adaptive video streaming.

The state and action spaces are aligned with mainstream approaches such as Pensieve \cite{mao2017neural}. The environment follows a standard reinforcement learning formulation: the agent interacts with the streaming system by observing network and playback states and selecting bitrate actions accordingly. The input is the state $s_t \in \mathbb{R}^{6 \times 8}$, comprising six categories of information, each tracked over the most recent eight time steps: (i) normalized last selected bitrate; (ii) normalized buffer occupancy; (iii) measured throughput (downloaded size per unit time); (iv) normalized delay; (v) sizes of the next video chunk for all bitrate levels; and (vi) remaining number of chunks until the end of the video. PA-MoE outputs a discrete bitrate action $a_t \in \{0,1,2,3,4,5\}$, where each value corresponds to one of the six available bitrate levels in the streaming system.

\subsection{Markov Decision Process With Shift QoE}
An MDP is formally defined as a tuple $(\mathcal{S}, \mathcal{A}, \mathcal{P}, \mathcal{R}, \gamma)$, where $\mathcal{S}$ represents the state space containing all the pertinent information for decision-making, and $\mathcal{A}$ denotes the action space comprising all feasible decisions. The transition function $\mathcal{P}: \mathcal{S} \times \mathcal{A} \rightarrow \Delta(\mathcal{S})$ captures the stochastic evolution of states, while the reward function $\mathcal{R}: \mathcal{S} \times \mathcal{A} \rightarrow \mathbb{R}$ delivers QoE-aligned feedback to guide the learning process. The discount factor $\gamma \in [0,1)$ is employed to balance short-term and long-term rewards.

In our formulation, the state space $\mathcal{S}$ includes important streaming indicators such as the quality of the last downloaded chunk, buffer occupancy, network throughput and delay, sizes of upcoming chunks, and the number of remaining chunks. This comprehensive representation allows the agent to make informed decisions based on the context. The action space $\mathcal{A}$ consists of multiple discrete bitrate levels, each corresponding to a different quality version. Lastly, the reward $\mathcal{R}$ is directly determined by the QoE metric, effectively balancing the trade-offs between video quality, smoothness, and playback stability.

By utilizing reinforcement learning (RL), the system is able to adaptively learn optimal control strategies, even in uncertain environments. This effectively addresses the challenges posed by dynamic and unpredictable system behavior. However, when there are shifts in QoE due to variations in video content, the reward function experiences significant changes across the state space. This variability transforms the problem into a non-stationary MDP, which can be seen as a sequence of distinct MDPs $\{M_1, M_2, ..., M_N\}$. Each MDP $M_n = (\mathcal{S}, \mathcal{A}, \mathcal{P}, \mathcal{R}_n, \gamma)$ represents a unique optimization problem defined by its own reward structure. This non-stationarity poses a fundamental challenge for conventional learning algorithms, as policies optimized for one network environment may fail catastrophically when deployed in another. Therefore, there is a critical need for adaptive strategies that can generalize across these dynamic and evolving conditions.

\subsection{Impact of Shift QoE on RL}
To demonstrate the impact of QoE shifts on reinforcement learning–based optimization and the resulting loss of neural plasticity, we conducted an experimental case study. Specifically, we designed three distinct video types—documentaries (D), live streams (L), and news (N)—each with different coefficients for the QoE components in our system model. These coefficients are represented as ${\mu_{1}, \mu_{2}, \mu_{3}}$, where the value $\beta = 6$ indicates a strong emphasis on a particular QoE component. The coefficient configurations for each video type are as follows: Documentaries (D): ${\mu_{1}=1, \mu_{2}=\beta, \mu_{3}=1}$; Live streams (L): ${\mu_{1}=1, \mu_{2}=1, \mu_{3}=\beta}$; News (N): ${\mu_{1}=\beta, \mu_{2}=1, \mu_{3}=1}$. Figure~\ref{changing_qoe_action} illustrates the action outputs of reinforcement learning agents using the Proximal Policy Optimization (PPO) \cite{schulman2017proximal} algorithm across various network architectures as the system conditions evolve. The variation of QoE and the rationale for selecting Action as the evaluation metric are provided in Appendix \ref{shift_qoe_qoe_problem}.

\begin{figure}[ht]
  \centering
  \includegraphics[width=2.3in]{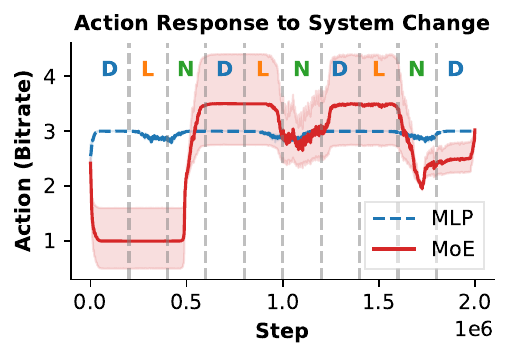}
  \caption{Action output variation across different network architectures under shifted QoE reward conditions. D, L, and N represent distinct QoE metrics. An ideal agent adapts its actions as QoE changes, but the adaptability varies with network design.}
  \label{changing_qoe_action}
\end{figure}

This example highlights the impact of changes in QoE parameters on the adaptability of learning agents. Specifically, agents using a multilayer perceptron (MLP) architecture struggle to quickly adjust their actions in response to QoE changes, while those utilizing a mixture-of-experts (MoE) architecture show a much stronger ability to adapt to evolving QoE. However, the changes in actions do not follow a clear pattern that directly corresponds to the periodic shifts in QoE.

\section{Related Work}

We evaluate PA-MoE in comparison with related approaches along three key dimensions: adaptive video streaming under varying QoE metrics, MDP adaptation and optimization in video streaming, and the application of Mixture of Experts methods in policy learning.

\subsection{Content Aware Video Streaming}

QoE metrics are broadly classified into pixel-based quality and streaming quality \cite{dobrian2011understanding}. Pixel-based quality evaluates intrinsic visual fidelity—such as sharpness, color accuracy, and contrast—independently of network-induced distortions, whereas streaming quality encompasses dynamic factors including playback smoothness, quality degradation, buffering frequency, and user-specific sensitivity to quality variations \cite{zhang2021sensei}. In this work, we concentrate on QoE variations driven by streaming quality.

Conventional approaches typically estimate QoE by inferring users' regions of interest within a video \cite{8761156}. Some methods utilize deep learning models, such as CNNs and Transformers \cite{choi2024data}, while others rely on engagement signals, such as timestamps, likes, and watch duration, to approximate user preferences \cite{kimura2021context}. However, prior studies have primarily focused on modeling QoE variations within a single video type. In contrast, our study investigates QoE shifts across different video types \cite{9699071}, providing a broader perspective on adaptive streaming optimization.

Once user-preference-aware QoE is estimated, traditional techniques optimize adaptive streaming through buffer-based strategies \cite{lee2022qrator, kimura2021context}, reinforcement learning \cite{tang2023successor, 9796953}, apprenticeship learning \cite{9699071}, or user-specific models \cite{zhang2022enabling}. These two-stage optimization methods introduce additional computational overhead, and the performance of the first stage does not have a clear correlation with the overall performance \cite{he2025understanding}. In this paper, we propose a unified policy that dynamically modulates network memory and plasticity through noise injection, thereby enabling rapid adaptation to evolving QoE patterns.

\subsection{Video Streaming with QoE Shifts}

If a single policy is used to handle multiple MDPs simultaneously, it falls under the category of Multi-Task Learning (MTL) \cite{WilliOFDC24}. On the other hand, training a policy in a streaming manner to adapt to continuously changing MDPs is classified as Continual Learning \cite{AbbasZM0M23}. For example, concurrently training a policy to handle $M_{1}$ and $M_{2}$ would be considered to MTL, while training sequentially on $M_{1}$ for a set number of steps, then on $M_{2}$ and cycling through (e.g., $M_1 \to M_2 \to M_1 \to M_2$) would be classified as Continual Learning. In our work, we focus on streaming tasks that involve dynamic QoE shifts in video streaming, which can be modeled as a sequence of MDPs (e.g., $M_1 \to M_2 \to \cdots, \to M_1 \to \cdots$) \cite{elsayed2024streaming}.

Modeling system changes as a Continual Learning problem is a common approach \cite{zhang2022towards}. However, existing methods typically decompose the challenge. For example, some methods first employ a monitor to detect system changes before applying adjustments such as retraining \cite{zhang2022towards}; others train a QoE identifier \cite{wu2024mansy}  and a network-bandwidth classifier \cite{zhang2025novel}; and in meta-reinforcement learning, an autoencoder is used to compress prior knowledge about network bandwidth \cite{kan2025merina+}. Meta reinforcement learning enables systems to "learn how to learn" across multiple tasks \cite{wang2024mmvs, li2023metaabr}, facilitating rapid adaptation. However, this approach often assumes that tasks share certain stationary properties in their distributions, which is often unrealistic in real-world adaptive video streaming applications.

In contrast, our approach does not rely on a two-stage optimization or require prior knowledge of task relationships or distributions. Instead, we directly adjust the plasticity of neural networks through noise injection, enabling them to dynamically adapt to varying optimization objectives based on the data.

\subsection{Mixture of Experts}

The MoE architectures utilizes a router mechanism to dynamically assign input states to specialized expert networks for decision-making \cite{cai2024survey}. This approach has been widely adopted in large-scale models \cite{locmoe_ijcai24,MoELLaVA,openmoe} and has shown exceptional performance in multi-task learning and Continual Learning (CL) scenarios \cite{li2025theory}. Therefore, it is natural and intuitive to consider applying it to handle the objective shifts in AVS. However, a key challenge in MoE is effectively allocating tasks to experts. To address this, various strategies have been explored, including linear programming \cite{lewis2021base}, reinforcement learning \cite{roller2021hash}, optimal transport \cite{liu2023sparsity}, and greedy top-k token selection \cite{zhou2022mixture}. Recent advancements, such as Soft MoE, have introduced slot-based connectivity to allow for more flexible integration of experts \cite{puigcerver2024from}. Other approaches have also explored parameter composition to optimize task-specific expert selection \cite{sun2022paco}. In our video streaming task, even the basic Sparse MoE (SMoE) can efficiently allocate tasks to achieve load balancing. To better investigate the problem and eliminate the influence of other factors, we opted for the simplest form of SMoE \cite{shazeer2017}.

Our approach differs from existing strategies that primarily aim to reduce computational overhead during inference. Instead, we focus on enhancing MoE adaptability, which is crucial for real-world applications where the policy network needs to quickly adjust to new tasks. To achieve this, we introduce a knowledge forgetting module that selectively discards outdated information to preserve plasticity. Additionally, our memory mechanism allows the neural network to autonomously adjust its weights, effectively counterbalancing the effects of active forgetting. Previous research has mainly focused on knowledge sharing, but methods such as direct parameter copying or weight sharing can result in negative transfer, as not all information is beneficial to share \cite{hendawy2024multitask}. In contrast, our approach exclusively injects forgetting, giving the network the ability to independently retain valuable memories.

Moreover, recent studies have shown that RL policy networks often have a significant number of underutilized parameters, with around 90\% of neuron outputs remaining inactive \cite{sokar2023dormant}. Injecting noise into network parameters has found to be effective in maintaining plasticity and improving overall performance \cite{sokar2023dormant, schwarzer2023bigger, doro2023sampleefficient}. While noise perturbations are commonly used in traditional dense architectures \cite{rl_pruned_network, nikishin2022primacy}, our work maintains the plasticity of MoE models by injecting noise into the gradients. While noise is commonly added in reinforcement learning to encourage exploration \cite{plappert2018parameter}, our method differs in two key aspects: (1) noise is injected continuously rather than only at the beginning of each episode; and (2) the objective is to enhance plasticity rather than exploration, offering a new perspective on Mixture-of-Experts (MoE) models and plasticity loss.  Additionally, we provide a theoretical analysis of the regret bound associated with our proposed PA-MoE method to support our approach.

\section{Proposed Method}

\subsection{Architectural Paradigm of the MoE Framework}

The challenge posed by QoE shifts is that they render the underlying problem non-stationary, leading to a loss of agent plasticity and hindering rapid adaptation for dynamic knowledge transfer \cite{sun2022paco, hendawy2024multitask}. In order to address these issues, we propose the PA-MoE method, which directly injects plasticity into the experts of the MoE architecture. In a typical MoE model, there are $N$ experts, a router, and a gating network—usually implemented as a series of linear layers \cite{fedus2022switch}.

\begin{figure}[!htbp]
\centering
\includegraphics[width=3.5in]{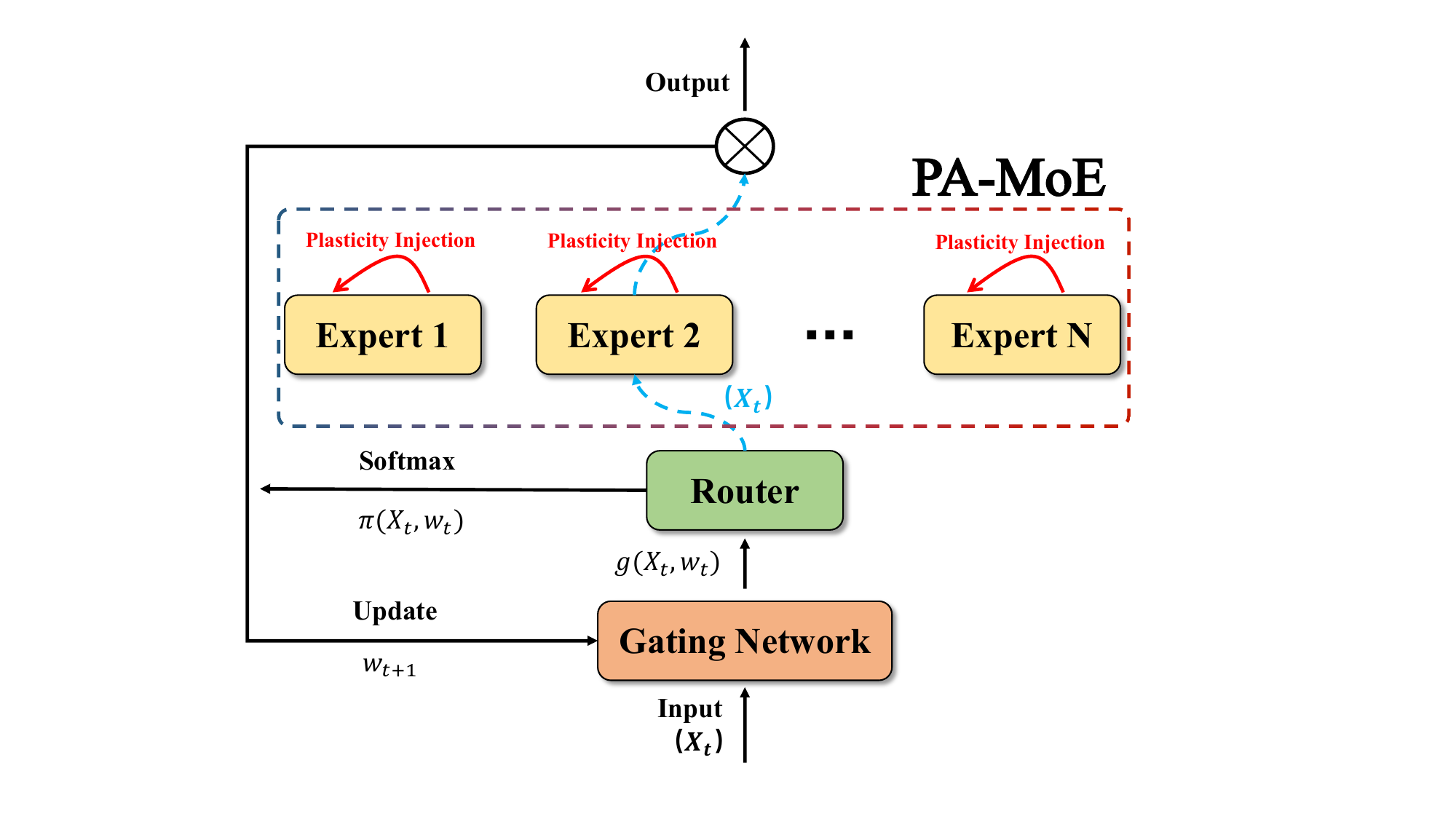}%
\caption{An illustration of the PA-MoE model.}
\label{pa_moe}
\end{figure}

The framework of PA-MoE is illustrated in Figure~\ref{pa_moe}. Once the MDP is formulated, the next step is to identify a mapping from input states to actions, known as the policy. At time step $t$, the input state vector is denoted by $\boldsymbol{X}_{t} \in \mathbb{R}^{d}$. The gating network applies a linear transformation to $\boldsymbol{X}_{t}$, producing an output $g_{n}(\boldsymbol{X}_{t}, \boldsymbol{\boldsymbol{w}}_{t}^{(n)})$ for each expert $n \in \left[ N \right]$, where $\boldsymbol{\boldsymbol{w}_{t}^{(n)}} \in \mathbb{R}^{n}$ represents the parameters corresponding to expert $n$. The complete set of gating network parameters is represented as $\boldsymbol{\boldsymbol{w}_{t}}:=\left[ \boldsymbol{\boldsymbol{w}}_{t}^{1}, \cdots, \boldsymbol{\boldsymbol{w}}_{t}^{(N)} \right]$, and the overall gating output is defined as

\begin{align}
    \boldsymbol{g}(\boldsymbol{X}_{t}, \boldsymbol{\boldsymbol{w}}_{t}):= \left[ g_{1}(\boldsymbol{X}_{t}, \boldsymbol{\boldsymbol{w}}_{t}^{(1)}), \cdots, g_{N}(\boldsymbol{X}_{t}, \boldsymbol{\boldsymbol{w}_{t}^{(N)}}) \right],
\end{align}
which can equivalently be expressed as $\boldsymbol{g}(\boldsymbol{X}_{t}, \boldsymbol{\boldsymbol{w}}_{t})=\boldsymbol{\boldsymbol{w}}_{t}^{T} \boldsymbol{X}_{t}$. To enhance network sparsity and reduce computational costs, our router selects only the Top-1 expert—that is, the expert corresponding to the maximum gating output—as demonstrated in \cite{theory_mixtur, fedus2022switch}. Although the architecture includes $N$ experts, only one expert is selected for each input, keeping the per-sample complexity at $O(1)$. Furthermore, to promote exploration, noise is injected into the router following the methods outlined in \cite{fedus2022switch, li2025theory}. Specifically, the chosen expert at time step $t$ is determined by

\begin{align}
    n_{t} = \arg \max_{n} \left\{g_{n}(\boldsymbol{X}_{t}, \boldsymbol{\boldsymbol{w}}_{t}^{(n)}) + z_{t}^{(n)} \right\},
\end{align}
where $z_{t}^{(n)}$ is a noise term drawn from a Gaussian distribution for each expert $n$. This stochastic component promotes exploration by perturbing the gating output. Furthermore, the final output of the gating network is computed using a softmax function

\begin{align}
    \boldsymbol{\pi}_n \left(\boldsymbol{X}_t, \boldsymbol{\boldsymbol{w}}_t\right)=\frac{\exp \left(g_n\left(\mathbf{X}_t, \boldsymbol{\boldsymbol{w}}_t^{(n)}\right)\right)}{\sum_{n^{\prime}=1}^N \exp \left(g_{n^{\prime}}\left(\mathbf{X}_t, \boldsymbol{\boldsymbol{w}}_t^{(n)}\right)\right)}, \quad \forall n \in[N].
\end{align}
This softmax output represents a probability distribution over the experts and is subsequently used to update the gating network parameters $\boldsymbol{\boldsymbol{w}}_{t+1}$ for all experts in the MoE model.

\subsection{Training Expert With Plasticity Injection}

In this section, we will introduce the PA-MoE method for plasticity injection. Random perturbations and resets of neural network parameters have been shown to be effective strategies for mitigating plasticity loss in reinforcement learning \cite{galashov2024nonstationary, shin2024dash}. In our PA-MoE framework, we enhance plasticity by injecting controlled noise directly into the parameter updates. To validate the effectiveness of this approach, we integrate it into a state-of-the-art reinforcement learning algorithm, PPO. 

Let $\boldsymbol{w}_t^{(n)}$ denote the parameters of expert $n$ in the actor network at the $t$-th training step. The policy distribution at time step $t$ is then given by:

\begin{align}
     \pi_{\boldsymbol{w}}(a_{t} | X_{t}) = \sum_{n=1}^{N} \boldsymbol{\pi}_n \left(\boldsymbol{X_{t}, \boldsymbol{\boldsymbol{w}_{t}}}\right) w_{t}^{(n)} (\boldsymbol{X_{t}}).
\end{align}
For the critic, the update rule is defined as

\begin{align}
    L^{critic}\left( \boldsymbol{w} \right) = \mathbb{E}_{t} \left[ \left( V_{\boldsymbol{w}}(X_{t}) - \hat{V}_{t} \right)^{2} \right],
\end{align}
where, $\hat{V}_{t}$ represents the target value, typically computed using Generalized Advantage Estimation (GAE, $\hat{A}_{t}$) or a Temporal Difference (TD) target. PPO further employs a clipped surrogate objective defined as

\begin{align}
    L_t(\boldsymbol{w})\!=\!\mathbb{E}_t \!\left[\min \! \left(r_t(\boldsymbol{w}) \hat{A}_t, \!\operatorname{clip}\left(r_t(\boldsymbol{w}), 1\!-\!\epsilon\!, 1\!+\!\epsilon\right)\! \hat{A}_t \! \right)\right], 
\end{align}
with the probability ratio
\begin{align}
    r_t(\boldsymbol{w})= \pi_{\boldsymbol{w}}\left(a_t \mid s_t\right) / \pi_{\boldsymbol{w}_{\text {old }}}\left(a_t \mid s_t\right).
\end{align}

For each normal expert $i$ (selected via the Top-K mechanism), we perform a noisy gradient descent update. Specifically, the update at time $t$ is given by

\begin{equation}
    \label{ppoloss}
    \boldsymbol{w}_i^{t+1}=\boldsymbol{w}_i^t-\eta \nabla L_t\left(\boldsymbol{w}_i^t\right)+\eta \gamma \epsilon_i^t, \quad t=1,2, \ldots, T,
\end{equation}
where $\eta > 0$ is the learning rate, assumed to satisfy $\eta \leq 1/L$, $\gamma > 0$ controls the noise injection strength, and $\epsilon_{i}^{t} \sim \mathcal{N}(0, I_{d})$ is a standard Gaussian noise vector in $\mathbb{R}^{d}$, independent across time steps and experts. This noise injection facilitates a form of controlled forgetting by discarding outdated information, thereby enhancing the model’s ability to adapt to nonstationary environments. We provide a theoretical analysis of this mechanism below.

In PPO, the policy update is constrained within a trust region, which ensures that each update only makes small changes to the current parameters. Although the global objective may be highly nonconvex, within the trust region, the loss function can be well approximated locally by a quadratic function. This approximation allows us to make certain assumptions about the behavior of the loss function, such as being L-smooth and $\mu$-strongly convex, which is also considered in \cite{agarwal2021theory}. These assumptions are useful for theoretical analysis and for understanding how the parameters adapt to a changing environment.

Let $L_{t}(\boldsymbol{w})$, with $\boldsymbol{w} \in \mathbb{R}^{d}$, denote each expert's local objective at time $t$. We consider the following standard assumptions:

\textbf{(A1) L-Smoothness:} For all $\boldsymbol{w}, \boldsymbol{w}^{\prime} \in \mathbb{R}^{d}$, the gradient of $L_{t}$ is Lipschitz continuous, $\left\|\nabla L_t(\boldsymbol{w})-\nabla L_t\left(\boldsymbol{w}^{\prime}\right)\right\| \leq L\left\|\boldsymbol{w}-\boldsymbol{w}^{\prime}\right\|$, where $L > 0$ is the smoothness constant.

\textbf{(A2) $\mu$-Strong Convexity:} For all $\boldsymbol{w}, \boldsymbol{w}^{\prime} \in \mathbb{R}^{d}$, the function $L_{t}$ satisfies, $L_t\left(\boldsymbol{w}^{\prime}\right) \geq L_t(\boldsymbol{w})+\nabla L_t(\boldsymbol{w})^{\top}\left(\boldsymbol{w}^{\prime}-\boldsymbol{w}\right)+\frac{\mu}{2}\left\|\boldsymbol{w}^{\prime}-\boldsymbol{w}\right\|^2$, with $\mu > 0$ being the strong convexity constant.

\textbf{(A3) Nonstationarity:} Let the time-varying optimal parameter be defined as $\boldsymbol{w}_{t}^{*}=\arg \min_{\boldsymbol{w}} L_t(\boldsymbol{w})$.
Define the path length of the optimal parameters over $T$ rounds as $P_T=\sum_{t=1}^{T-1}\left\|\boldsymbol{w}_{t+1}^*-\boldsymbol{w}_t^*\right\|$. We assume that $P_{T} < \infty$;  this quantity measures the rate at which the environment (or the optimum) changes over time.

We now state two key lemmas that are instrumental for our analysis.

\begin{lemma}[\textbf{Gradient Co-coercivity Lemma}]\label{GradientCocoercivityLemma} 
Let $L_t:\mathbb{R}^d\to\mathbb{R}$ be an $L$–smooth function; that is, for all $\boldsymbol{w},\boldsymbol{w}'\in\mathbb{R}^d$,
\[
\|\nabla L_t(\boldsymbol{w})-\nabla L_t(\boldsymbol{w}')\| \le L\|\boldsymbol{w}-\boldsymbol{w}'\|.
\]
Then, for any $\boldsymbol{w}_i^t,\boldsymbol{w}_t^*\in\mathbb{R}^d$, we have
\begin{align}\label{eq:main-ineq}
\|\nabla L_t(\boldsymbol{w}_i^t) & -\nabla L_t(\boldsymbol{w}_t^*)\|^2  \\ 
& \le L\, (\nabla L_t(\boldsymbol{w}_i^t)-\nabla L_t(\boldsymbol{w}_t^*))^{T} (\boldsymbol{w}_i^t-\boldsymbol{w}_t^*). \nonumber
\end{align}
\end{lemma}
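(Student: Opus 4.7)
The plan is to prove this via the classical Baillon--Haddad route, exploiting that assumption (A2) already provides convexity in the paper's setting (the lemma statement only invokes $L$-smoothness, but the co-coercivity inequality genuinely needs convexity, which we will borrow from (A2)). The key idea is to build an auxiliary function whose minimizer is known, apply the descent lemma to it, and then symmetrize.

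First I would fix any two points $\boldsymbol{u},\boldsymbol{v}\in\mathbb{R}^d$ (eventually specialized to $\boldsymbol{u}=\boldsymbol{w}_i^t$, $\boldsymbol{v}=\boldsymbol{w}_t^*$) and define the shifted function
\[
\phi_{\boldsymbol{v}}(\boldsymbol{w}) := L_t(\boldsymbol{w}) - \nabla L_t(\boldsymbol{v})^{T}\boldsymbol{w}.
\]
Since $L_t$ is convex and $L$-smooth, so is $\phi_{\boldsymbol{v}}$, and $\nabla\phi_{\boldsymbol{v}}(\boldsymbol{w}) = \nabla L_t(\boldsymbol{w}) - \nabla L_t(\boldsymbol{v})$. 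In particular $\nabla\phi_{\boldsymbol{v}}(\boldsymbol{v})=0$, so $\boldsymbol{v}$ is a global minimizer of $\phi_{\boldsymbol{v}}$.

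Next I would apply the standard descent lemma, which is a direct consequence of $L$-smoothness: for any $\boldsymbol{w}$,
\[
\phi_{\boldsymbol{v}}\!\left(\boldsymbol{w}-\tfrac{1}{L}\nabla\phi_{\boldsymbol{v}}(\boldsymbol{w})\right) \le \phi_{\boldsymbol{v}}(\boldsymbol{w}) - \tfrac{1}{2L}\|\nabla\phi_{\boldsymbol{v}}(\boldsymbol{w})\|^2.
\]
Using that $\phi_{\boldsymbol{v}}(\boldsymbol{v})$ lower bounds the left-hand side and unpacking the definition of $\phi_{\boldsymbol{v}}$, this yields
\[
L_t(\boldsymbol{w}) - L_t(\boldsymbol{v}) - \nabla L_t(\boldsymbol{v})^{T}(\boldsymbol{w}-\boldsymbol{v}) \;\ge\; \tfrac{1}{2L}\|\nabla L_t(\boldsymbol{w})-\nabla L_t(\boldsymbol{v})\|^{2}.
\]
I would then swap the roles of $\boldsymbol{w}$ and $\boldsymbol{v}$ to get a mirror inequality and add the two. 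The Bregman-style terms on the left collapse into $(\nabla L_t(\boldsymbol{w})-\nabla L_t(\boldsymbol{v}))^{T}(\boldsymbol{w}-\boldsymbol{v})$, while the right-hand side doubles, giving exactly
\[
\|\nabla L_t(\boldsymbol{w}) - \nabla L_t(\boldsymbol{v})\|^{2} \;\le\; L\,(\nabla L_t(\boldsymbol{w})-\nabla L_t(\boldsymbol{v}))^{T}(\boldsymbol{w}-\boldsymbol{v}).
\]
Specializing $\boldsymbol{w}=\boldsymbol{w}_i^t$, $\boldsymbol{v}=\boldsymbol{w}_t^*$ finishes the lemma.

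The main obstacle is really a presentational one rather than a technical one: the statement is advertised as requiring only $L$-smoothness, but the auxiliary-function step relies on $\boldsymbol{v}$ being a minimizer of $\phi_{\boldsymbol{v}}$, i.e., on convexity of $L_t$. I would therefore add a one-line remark that assumption (A2) implies convexity, so the hypothesis needed for the descent-to-minimizer step is in force throughout the paper. A secondary subtlety is the factor of $L$ (rather than $2L$) on the right-hand side: this factor only comes out cleanly after the symmetrization step, so I would be careful to perform that step explicitly rather than trying to derive the tight constant from a single application of the descent lemma.
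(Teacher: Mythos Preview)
Your proposal is correct and follows essentially the same route as the paper: both arguments introduce an auxiliary function (yours is the linear tilt $\phi_{\boldsymbol{v}}(\boldsymbol{w})=L_t(\boldsymbol{w})-\nabla L_t(\boldsymbol{v})^{T}\boldsymbol{w}$; the paper works with $g(\boldsymbol{w})=\tfrac{L}{2}\|\boldsymbol{w}\|^{2}-L_t(\boldsymbol{w})$ together with the gradient-step point $z=\boldsymbol{w}_t^*+\tfrac{1}{L}(\nabla L_t(\boldsymbol{w}_i^t)-\nabla L_t(\boldsymbol{w}_t^*))$, which is exactly your descent step on $\phi_{\boldsymbol{w}_i^t}$ at $\boldsymbol{w}_t^*$), apply the descent lemma to obtain the Bregman-to-gradient-norm inequality, and then symmetrize to extract the factor $L$. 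Your remark about convexity is on point: the paper's proof also uses it implicitly (the step $L_t(z)-L_t(\boldsymbol{w}_i^t)\ge \nabla L_t(\boldsymbol{w}_i^t)^{T}(z-\boldsymbol{w}_i^t)$ is convexity of $L_t$), so invoking (A2) to justify it is the right call.
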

\begin{proof}
    See Appendix \ref{appendix_proof_gradient_co_coercivity_lemma}.
\end{proof}

\begin{lemma}[\textbf{Gradient Strong Convexity Lemma}]\label{GradientStrongConvexity}
Let $L_t:\mathbb{R}^d\to\mathbb{R}$ be a $\mu$–strongly convex function; that is, for all $\boldsymbol{w},\boldsymbol{w}'\in\mathbb{R}^d$,
\[
L_t(\boldsymbol{w}') \ge L_t(\boldsymbol{w}) + \nabla L_t(\boldsymbol{w})^{T} (\boldsymbol{w}'-\boldsymbol{w}) + \frac{\mu}{2}\|\boldsymbol{w}'-\boldsymbol{w}\|^2.
\]
Then, for any $\boldsymbol{w}_i^t,\boldsymbol{w}_t^*\in\mathbb{R}^d$, it holds that
\begin{equation}\label{eq:strong-convexity}
\left(\nabla L_t(\boldsymbol{w}_i^t) - \nabla L_t(\boldsymbol{w}_t^*)\right)^{T} (\boldsymbol{w}_i^t-\boldsymbol{w}_t^*) \ge \mu \|\boldsymbol{w}_i^t-\boldsymbol{w}_t^*\|^2.
\end{equation}
\end{lemma}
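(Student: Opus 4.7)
The plan is to apply the given $\mu$-strong convexity inequality twice, once with the roles of the two points swapped, and then add the resulting inequalities so that the function-value terms cancel and only a gradient inner-product and a quadratic term remain. This is the standard "symmetrization" argument used to derive monotonicity of the gradient from strong convexity, and it cleanly produces the claimed co-coercivity-type bound with constant $\mu$.

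Concretely, I would first instantiate the strong convexity hypothesis at $(\boldsymbol{w},\boldsymbol{w}') = (\boldsymbol{w}_i^t,\boldsymbol{w}_t^*)$, which gives
\begin{align*}
L_t(\boldsymbol{w}_t^*) \;\ge\; L_t(\boldsymbol{w}_i^t) + \nabla L_t(\boldsymbol{w}_i^t)^{T}(\boldsymbol{w}_t^* - \boldsymbol{w}_i^t) + \tfrac{\mu}{2}\|\boldsymbol{w}_t^* - \boldsymbol{w}_i^t\|^2.
\end{align*}
Next, I would instantiate the same inequality with the roles reversed, $(\boldsymbol{w},\boldsymbol{w}') = (\boldsymbol{w}_t^*,\boldsymbol{w}_i^t)$, obtaining
\begin{align*}
L_t(\boldsymbol{w}_i^t) \;\ge\; L_t(\boldsymbol{w}_t^*) + \nabla L_t(\boldsymbol{w}_t^*)^{T}(\boldsymbol{w}_i^t - \boldsymbol{w}_t^*) + \tfrac{\mu}{2}\|\boldsymbol{w}_i^t - \boldsymbol{w}_t^*\|^2.
\end{align*}

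Adding the two inequalities, the terms $L_t(\boldsymbol{w}_i^t)$ and $L_t(\boldsymbol{w}_t^*)$ cancel, leaving a single gradient inner-product plus $\mu\|\boldsymbol{w}_i^t - \boldsymbol{w}_t^*\|^2$. Rearranging and using $(\boldsymbol{w}_t^*-\boldsymbol{w}_i^t) = -(\boldsymbol{w}_i^t-\boldsymbol{w}_t^*)$ to collect the gradient difference yields exactly
\begin{align*}
\bigl(\nabla L_t(\boldsymbol{w}_i^t) - \nabla L_t(\boldsymbol{w}_t^*)\bigr)^{T}(\boldsymbol{w}_i^t - \boldsymbol{w}_t^*) \;\ge\; \mu\,\|\boldsymbol{w}_i^t - \boldsymbol{w}_t^*\|^2,
\end{align*}
which is the desired inequality \eqref{eq:strong-convexity}.

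There is essentially no hard step here: the argument is a two-line symmetrization, and nothing beyond the strong convexity assumption (A2) is used. If anything requires care, it is merely the bookkeeping of signs when swapping the arguments, because the quadratic term $\|\boldsymbol{w}'-\boldsymbol{w}\|^2$ is symmetric while the linear term $\nabla L_t(\boldsymbol{w})^T(\boldsymbol{w}'-\boldsymbol{w})$ changes sign; making sure these combine into the monotone operator form on the left-hand side is the only non-mechanical part of the write-up. Note also that, unlike Lemma~\ref{GradientCocoercivityLemma}, $L$-smoothness is not needed, so the proof uses (A2) alone.
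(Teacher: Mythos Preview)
Your proposal is correct and follows essentially the same approach as the paper's own proof: instantiate the $\mu$-strong convexity inequality at the two points with roles swapped, add the two inequalities so the function values cancel, and rearrange the resulting gradient inner-product to obtain the monotonicity bound. The paper carries this out with generic $\boldsymbol{w},\boldsymbol{w}'$ and then specializes at the end, but the argument is identical to yours.
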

\begin{proof}
    See Appendix \ref{appendix_proof_strongconvexity_lemma}.
\end{proof}

\begin{theorem}[\textbf{Tracking Error Bound under Nonstationarity}]\label{thm:error-bound}
Under Assumptions (A1)--(A3) and with the update rule
\begin{equation}\label{eq:update-rule}
\boldsymbol{w}_i^{t+1} = \boldsymbol{w}_i^t - \eta\, \nabla L_t\bigl(\boldsymbol{w}_i^t\bigr) + \eta\, \gamma\, \epsilon_i^t, \quad \epsilon_i^t \sim \mathcal{N}\left(0, I_d\right),
\end{equation}
there exists a constant $C>0$ (depending only on $L$ and $\mu$) such that the average squared error satisfies
\begin{equation}\label{eq:main-bound}
\frac{1}{T} \sum_{t=1}^{T} \mathbb{E}\|e_t\|^2 \le C \left(\eta \gamma^2 d + \frac{P_T^2}{T \eta}\right),
\end{equation}
where the error is defined as $e_t = \boldsymbol{w}_i^t - \boldsymbol{w}_t^*$ and the path length of the optimal parameters is $P_T = \sum_{t=1}^{T-1}\|\boldsymbol{w}_{t+1}^* - \boldsymbol{w}_t^*\|$.

\end{theorem}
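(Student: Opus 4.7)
The plan is to establish Theorem~\ref{thm:error-bound} via a Lyapunov-style analysis of the squared tracking error $\|e_t\|^2$, decomposing its one-step evolution into a gradient contraction supplied by Lemmas~\ref{GradientCocoercivityLemma}--\ref{GradientStrongConvexity}, a drift term due to the moving optimum $\boldsymbol{w}_t^*$, and a variance term contributed by the injected Gaussian noise.

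First I would expand the one-step recursion. Using the first-order optimality $\nabla L_t(\boldsymbol{w}_t^*)=0$ and writing $\delta_t := \boldsymbol{w}_{t+1}^*-\boldsymbol{w}_t^*$ and $g_t:=\nabla L_t(\boldsymbol{w}_i^t)-\nabla L_t(\boldsymbol{w}_t^*)$, the update rule in~(\ref{eq:update-rule}) rearranges to $e_{t+1} = (e_t-\eta\,g_t) - \delta_t + \eta\gamma\,\epsilon_i^t$. Taking expectation while conditioning on the history up through step $t$ (so that $e_t$, $g_t$, and $\delta_t$ are known and $\epsilon_i^t$ is fresh), and using that $\epsilon_i^t$ is zero-mean isotropic with $\mathbb{E}\|\epsilon_i^t\|^2=d$, eliminates the noise cross-terms and yields $\mathbb{E}\|e_{t+1}\|^2 = \mathbb{E}\|(e_t-\eta g_t)-\delta_t\|^2 + \eta^2\gamma^2 d$.

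Next I would extract the noiseless contraction $\|e_t-\eta g_t\|^2 \leq (1-\eta\mu)\|e_t\|^2$, valid under $\eta\leq 1/L$: expand the square, use Lemma~\ref{GradientCocoercivityLemma} to absorb $\eta^2\|g_t\|^2 \leq \eta\langle g_t,e_t\rangle$ (since $\eta L\leq 1$), then apply Lemma~\ref{GradientStrongConvexity} to lower bound $\langle g_t,e_t\rangle \geq \mu\|e_t\|^2$. To couple the contraction with the drift term I would invoke Young's inequality with a parameter $\beta = \Theta(\eta\mu)$, giving
\begin{equation*}
    \mathbb{E}\|(e_t-\eta g_t)-\delta_t\|^2 \leq (1+\beta)(1-\eta\mu)\,\mathbb{E}\|e_t\|^2 + (1+1/\beta)\,\|\delta_t\|^2,
\end{equation*}
so the inflated factor remains strictly below one and we land on the clean recursion $\mathbb{E}\|e_{t+1}\|^2 \leq (1-c_1\eta\mu)\,\mathbb{E}\|e_t\|^2 + (c_2/(\eta\mu))\,\|\delta_t\|^2 + \eta^2\gamma^2 d$ with $c_1,c_2>0$ depending only on $L$ and $\mu$. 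Telescoping from $t=1$ to $T$, dividing by $Tc_1\eta\mu$, and using the elementary bound $\sum_t\|\delta_t\|^2 \leq \bigl(\sum_t\|\delta_t\|\bigr)^2 = P_T^2$ (valid since $\|\delta_t\|\geq 0$) recovers the two terms $\eta\gamma^2 d$ and $P_T^2/(T\eta)$ of~(\ref{eq:main-bound}), up to constants that depend only on $L$ and $\mu$ and are absorbed into $C$.

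The hardest part is tuning the Young parameter $\beta$: it must be small enough (order $\eta\mu$) that $(1+\beta)(1-\eta\mu)$ still contracts, yet any smaller choice inflates the drift coefficient $1/\beta$ unnecessarily. Matching the clean $P_T^2/(T\eta)$ dependence asserted in~(\ref{eq:main-bound}) against the natural $(\eta\mu)^{-2}$ scaling that emerges after telescoping requires careful bookkeeping and the absorption of $\mu$-factors into $C$, together with judicious use of the step-size constraint $\eta\leq 1/L$ to tame the residual cross terms and the initial-error contribution.
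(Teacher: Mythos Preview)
Your proposal is correct and follows essentially the same approach as the paper: the same three-way decomposition of $e_{t+1}$ into contraction, drift, and noise terms; the same use of Lemmas~\ref{GradientCocoercivityLemma}--\ref{GradientStrongConvexity} to obtain $\|e_t-\eta g_t\|^2\le(1-\eta\mu)\|e_t\|^2$; the same Young's-inequality coupling with parameter $\Theta(\eta\mu)$ (the paper picks $\alpha=\mu\eta/2$); and the same final accounting, including the observation that the natural $(\eta\mu)^{-2}$ drift scaling must be massaged into the stated $1/\eta$ by absorbing $\mu$-dependent factors into $C$. The only cosmetic difference is that you telescope the one-step recursion directly, whereas the paper unrolls it and swaps summation order---both routes yield the identical bound.
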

\begin{proof}
    See Appendix \ref{appendix_proof_distill_enhance}.
\end{proof}

We will now analyze Theorem \ref{thm:error-bound} from the perspectives of active forgetting and memorization. \textbf{Active Forgetting:} The term $\eta \gamma^{2} d$ in the error bound increases linearly with the learning rate $\eta$. A higher $\eta$ implies that each expert updates its parameters more aggressively in response to new information, effectively promoting active forgetting - rapidly discarding outdated information in favor of recent data. However, if $\eta$ is set too high, the amplified noise (scaled by $\gamma$) can render the updates overly volatile, potentially destabilizing the model. Therefore, while a larger $\eta$ facilitates swift adaptation, it must be chosen carefully to avoid excessive noise-induced fluctuations. \textbf{Memorization:} On the other hand, the term $\frac{P_{T}^{2}}{T \eta}$ is inversely proportional to $\eta$. This means that a smaller $\eta$ will result in slower updates of the parameter, causing the model to hold onto outdated information. This can be thought of as the model "memorizing" past data. However, this excessive retention hampers the expert's ability to track the current optimal parameters, particularly when the environment is nonstationary and the optimum drifts over time (as captured by the path length $P_{T}$). Therefore, an overly small $\eta$ results in insufficient active forgetting, leading to larger tracking errors.

In summary, excessive noise can cause the neural network's updates to be overly influenced by the noise, resulting in a biased model that is prone to forgetting. Conversely, insufficient noise can lead to the updates be dominated by $\frac{P_{T}^{2}}{T \eta}$, causing outdated information to become fixed in the model. Therefore, it is crucial to properly control both the noise level and the learning rate. This requires striking a balance between having a large enough noise level to actively forget outdated information and adapt to new conditions, while also keeping the noise-induced error under control.

While the convergence proof assumes $L$-smoothness and $\mu$-strong convexity for analytical convenience, these are intended as local regularity conditions within each PPO step rather than claims of global convexity; when steps are large or gradients are indistinguishable from noise, the local quadratic approximation can break down. Nevertheless, PA-MoE exhibits stable convergence under nonconvex neural losses in practice, suggesting that its plasticity-aware updates remain effective beyond these idealized assumptions.

\section{Experiments}
To the best of our knowledge, our proposed method is the first that does not require any prior knowledge about the non-stationarity of the environment. While our main focus is on analyzing the problem itself and exploring solutions that do not rely on any prior assumptions, we additionally include comparisons with approaches that make use of prior knowledge (e.g., meta-reinforcement learning) to provide a more comprehensive evaluation. In our experiments, we aim to address the following key questions from three perspectives: \textbf{Cause}, \textbf{Effect}, and \textbf{Analysis}: (1.1) Can PA-MoE effectively mitigate the problem of plasticity loss? (1.2) Do different experts in PA-MoE contribute to alleviating plasticity loss? (1.3) Is the selection probability of each expert in PA-MoE balanced? (2.1) Does PA-MoE offer performance improvements compared to the vanilla MoE architecture?  (2.2) How do individual components of QoE contribute to the overall performance gains observed in PA-MoE? (2.3) How does PA-MoE influence internal system states to ultimately affect performance? (2.4) Does PA-MoE provide performance improvements over both exit learning–based and non–learning-based methods? (3.1) What is the impact of the injected noise scale on the final performance of PA-MoE? (3.2) Does the problem of plasticity loss occur under different QoE coefficients (See Appendix \ref{diverse_beta}.)?

\subsection{Experiment Setting}

The experiments were performed on a system equipped with an Intel(R) Core(TM) i5-10400 CPU @ 2.90GHz, without GPU acceleration. The set of available bitrates is defined as $\mathcal{A} = \{300, 750, 1200, 1850, 2850, 4300\}$ $\mathrm{kbps}$. Each video segment has a duration of 4 seconds. The playback buffer can hold up to 60 seconds of content, and the video consists of 49 segments in total. PPO hyperparameters are detailed in Table~\ref{tab:hyperparams}. For each training run, the agent is trained for approximately two hours, with a total of 2 million timesteps and 1000 iterations. All hyperparameters, including random seeds, are kept consistent across different algorithms. All QoE shift patterns in this paper follow the format shown in Figure~\ref{changing_qoe_action}.


\begin{table}[ht]
\centering
\caption{Potential Hyperparameter Configurations For PPO.}
\label{tab:hyperparams}
\begin{tabular}{ll}
\toprule
\textbf{Hyperparameter} & \textbf{Value} \\
\midrule

Learning Rate & 1e-4 \\
Batch size & 2000 \\
Minibatch Size & 62 \\
Number of Iteration & 1000 \\
Rollout steps per iteration & 2000 \\
Total Timesteps & 2e6 \\
Update Epochs & 5 \\

GAE-$\gamma$ & 0.99 \\
GAE-$\lambda$ & 0.95 \\
Clip $\epsilon$ & 0.2 \\
Entropy Coefficient & 0 \\
Value Function Coefficient & 5 \\
Activation & ReLU \\
Environment & \{D, L, N\} \\
\# Experts & 3 \\
Expert Hidden Size & 18 \\
MoE & \{MoE, SMoE, PA-MoE\} \\
Router & \{Top-K-Router, Softmax-Router\} \\
Number of Selected Experts & 1 \\
Actor MoE & True \\
Critic MoE & True \\
\bottomrule
\end{tabular}
\end{table}

\textbf{Network Trace Datasets:} In our experiments, we draw on three distinct sources of throughput traces: (i) recordings from HSDPA-based 3G networks \cite{riiser2013commute}, collected while smartphones streamed video during travel on subways, trams, trains, buses, and ferries; (ii) the FCC corpus \cite{kan2022improving}, created by stitching together randomly sampled logs from the “Web browsing” class in the August 2016 public release; and (iii) the Puffer open dataset \cite{yan2020learning}, which comprises on-demand video sessions observed over heterogeneous access technologies, including wired links, Wi-Fi, and cellular networks (3G/4G/5G). The training set contains 127 traces and the test set contains 142; there is no overlap between them. These datasets will be released together with the source code upon publication.




\subsection{Does PA-MoE alleviate the issue of plasticity loss?}

A neuron that does not contribute to subsequent layers is known as a Dormant Neuron (for a detailed definition, see Appendix \ref{Expert_Contributions_to_Mitigating_Plasticity_Loss_in_PA_MoE}). The dormant neuron ratio is a widely adopted metric for evaluating neural network plasticity \cite{dohare2024loss, sokar2023dormant, WilliOFDC24}. Figure~\ref{dormant_neuron_rate_evolution} illustrates the evolution of dormant neuron ratios under QoE shifts across various network architectures. Our proposed PA-MoE consistently maintains a lower dormant neuron ratio compared to both Sparse MoE (SMoE) and conventional MoE, and this ratio does not exhibit an increasing trend over time.

\begin{figure}[!htbp]
\centering
\includegraphics[width=2.4in]{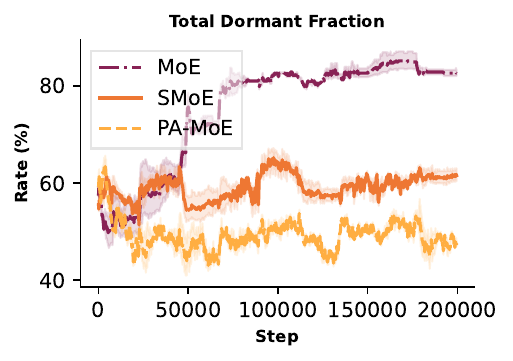}%
\caption{Evolution of neural network dormant neuron rate.}
\label{dormant_neuron_rate_evolution}
\end{figure}

To determine if different experts in PA-MoE contribute to mitigating plasticity loss, we visualize the dormant neuron ratios for each layer in the policy and value networks of every expert (see Appendix \ref{Expert_Contributions_to_Mitigating_Plasticity_Loss_in_PA_MoE}). These visualizations demonstrate that PA-MoE consistently exhibits lower dormant neuron ratios across various layers in both the policy and value networks when compared to conventional MoE and SMoE.

Another widely used metric for characterizing the loss of learning capacity in neural networks is Feature Rank \cite{dohare2024loss, LyleZNPPD23}. In Appendices \ref{rank}, \ref{effective_rank}, \ref{approx_rank}, we show that our PA-MoE method achieves higher effective rank \cite{7098875}, approximate rank, and absolute approximate rank \cite{Yang2020Harnessing} compared to other approaches. These results indicate that PA-MoE preserves a stronger learning capacity with significantly reduced plasticity loss.

\subsection{Are the selection probabilities for each expert balanced?}

\begin{figure*}[ht]
\centering
\includegraphics[width=2.1in]{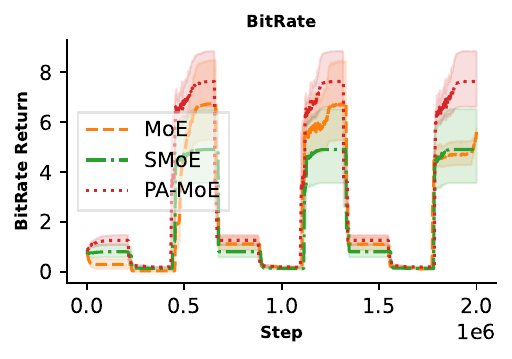}
\includegraphics[width=2.1in]{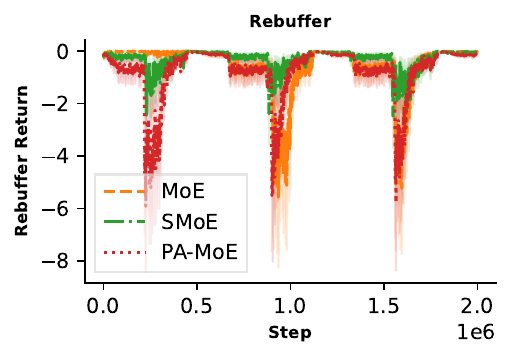}
\includegraphics[width=2.1in]{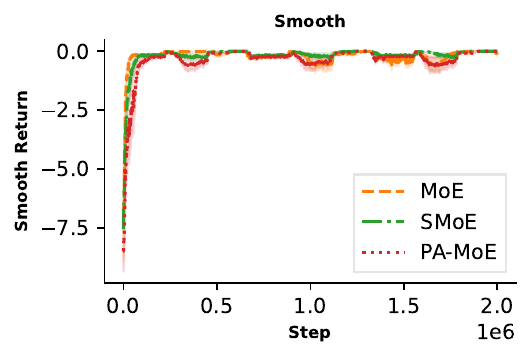}
\caption{Performance Comparison of Different Algorithms Based on QoE Component Metrics}
\label{ppo_seperate_reward}
\end{figure*}

\begin{figure}[!htbp]
\centering
\includegraphics[width=\linewidth]{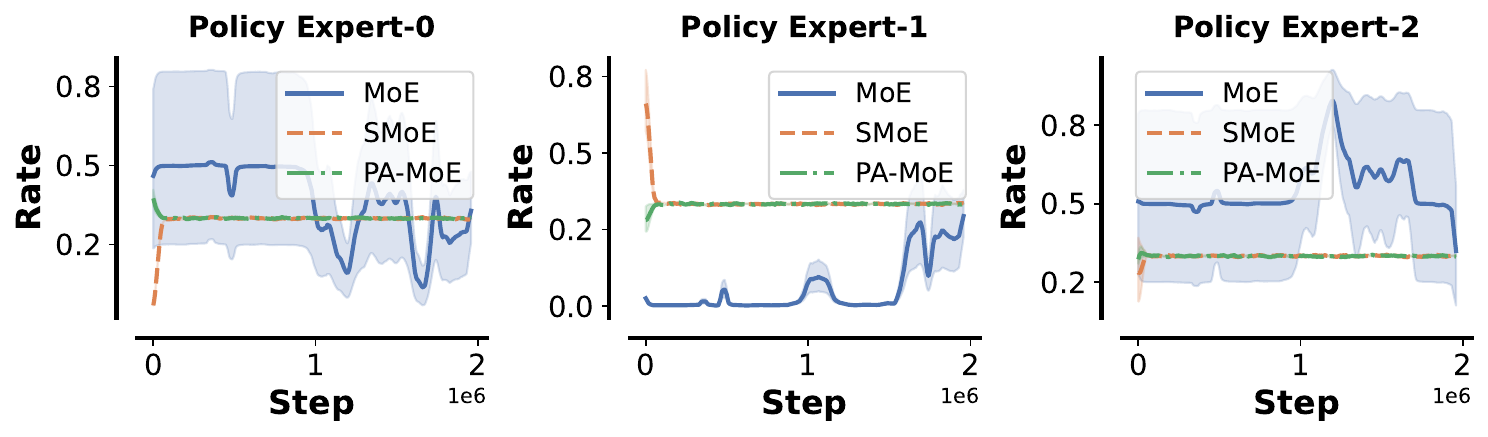}
\includegraphics[width=\linewidth]{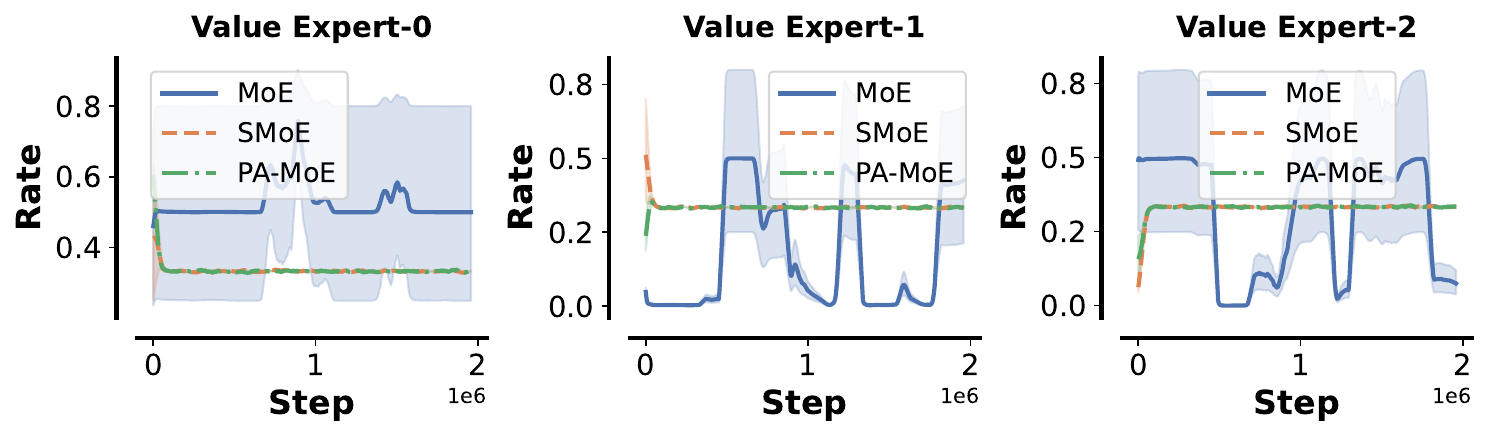}
\caption{Probability distribution of each expert selected for Policy and Value.}
\label{load_balanced_visual}
\end{figure}

Load balancing is a critical factor in evaluating the effectiveness of MoE designs. In a load-balanced MoE, all experts are utilized evenly, preventing some experts from being overused while others remain idle. This leads to improved computational efficiency. Additionally, a balanced allocation promotes diverse feature learning among experts, ultimately enhancing the model’s ability to handle complex tasks. To validate the load-balancing capability of our proposed PA-MoE, we visualized the selection probability of each expert in both the policy and value functions. As shown in Figure~\ref{load_balanced_visual}, our experimental results clearly demonstrate that each expert in PA-MoE is selected with equal probability. This balanced selection effectively avoids over-reliance on a subset of experts, thereby reducing the risk of overfitting and ensuring robust performance across diverse tasks. By dynamically routing, the computational load on neurons can be distributed, thereby alleviating the loss of plasticity.

\subsection{Does PA-MoE Yield Performance Improvements?}

Another question that we are interested in is: Does PA-MoE result in performance improvements? Additionally, we would like to understand how the individual components of QoE contribute to the overall QoE under Shift QoE conditions. The variation curve of the overall QoE in a Shift QoE environment is depicted in Figure~\ref{qoe_reward}.

\begin{figure}[ht]
\centering
\includegraphics[width=2.1in]{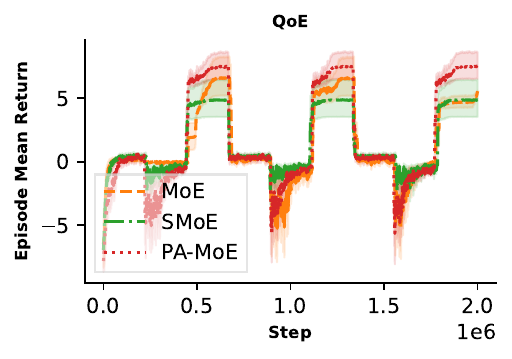}
\caption{Final QoE Performance Comparison of Different Algorithms under Shift QoE.}
\label{qoe_reward}
\end{figure}

As illustrated in Figure~\ref{qoe_reward}, it is clear that MoE requires a great number of steps to generate enough interaction data to adapt to the changing QoE, while PA-MoE quickly adapts and achieves higher performance gains. To more accurately measure these improvements, we utilize the more reliable reinforcement learning metric, the IQM (Interquartile Mean) \cite{agarwal2021deep}, as shown in Figure~\ref{decomposition_reward_iqm}. This method aims to emphasize the relative performance of different approaches while reducing the influence of outliers.

\begin{figure}[!htbp]
\centering
\includegraphics[width=2.1in]{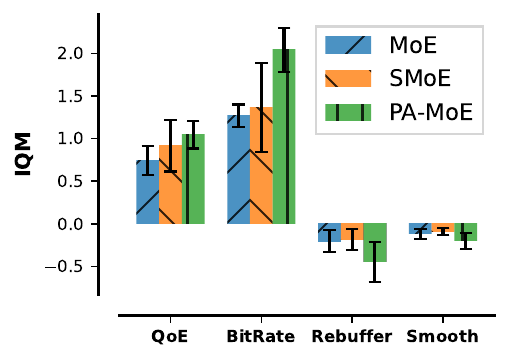}%
\caption{Comparison of QoE Methods using the IQM with a focus on the 25th-75th percentile returns.}
\label{decomposition_reward_iqm}
\end{figure}

The results shown in Figure~\ref{decomposition_reward_iqm} demonstrate that our PA-MoE method improves overall QoE by 45.5\% compared to conventional MoE. The most significant improvements are seen in the Bitrate component, indicating that the agent is able to selectively optimize the different QoE components to meet the requirements of a Shift QoE environment. More detailed changes for each metric can be seen in Figure~\ref{ppo_seperate_reward}.

\subsection{PA-MOE vs. Existing ABR algorithms}

\begin{figure*}[!htbp]
\centering
\includegraphics[width=1.8in]{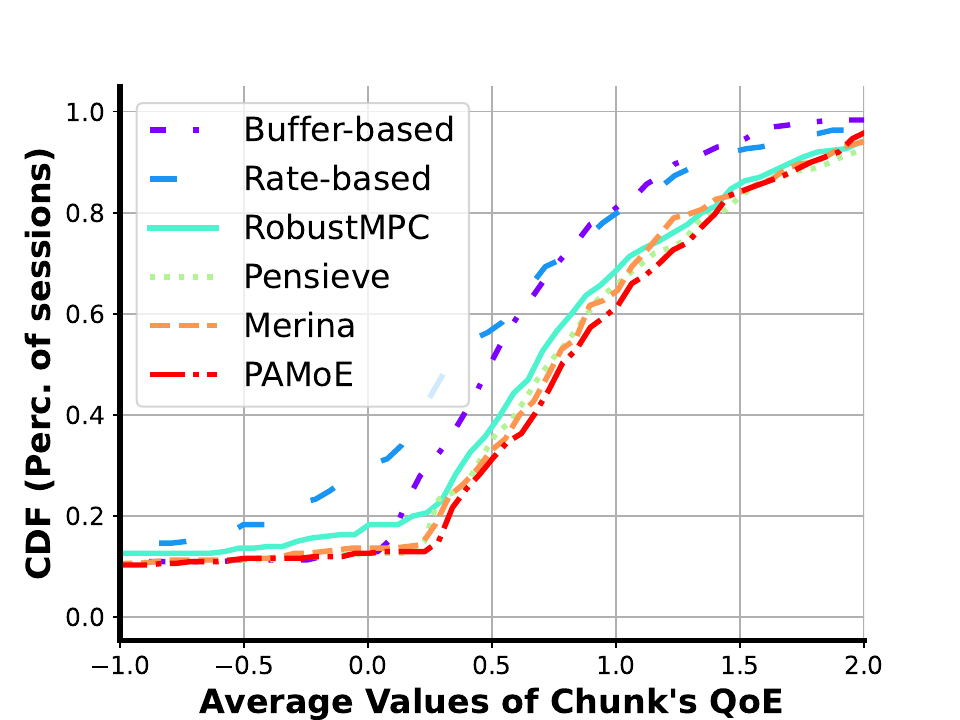}%
\includegraphics[width=1.8in]{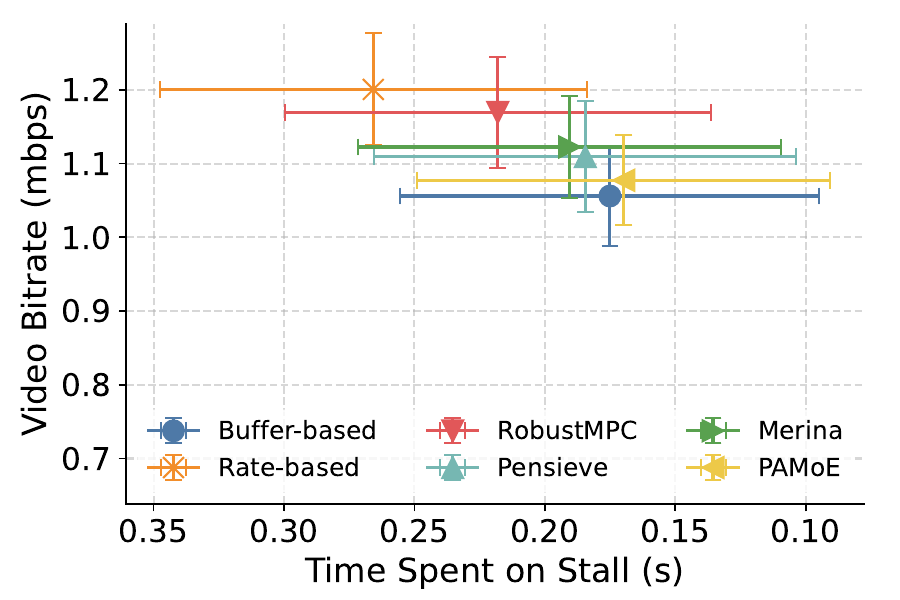}%
\includegraphics[width=1.8in]{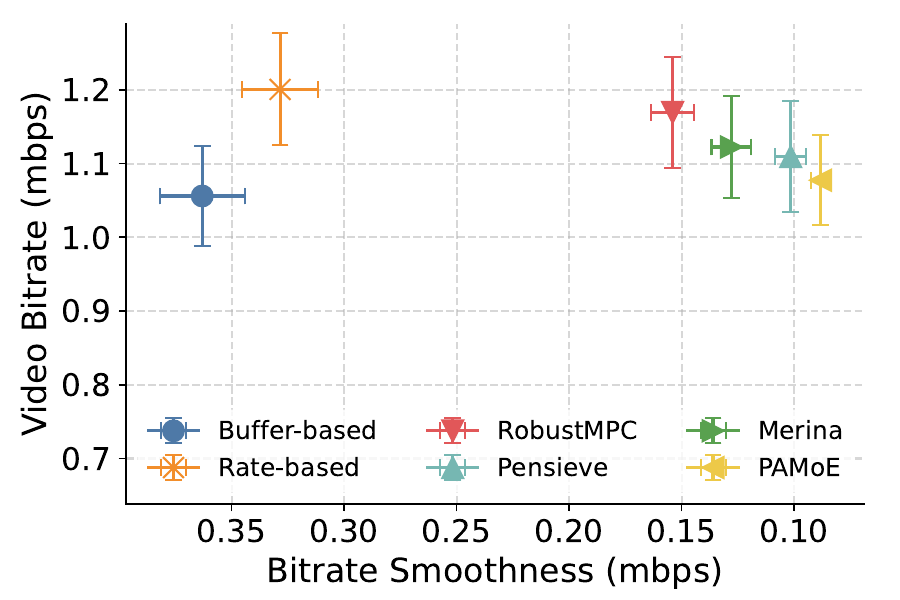}%
\includegraphics[width=1.8in]{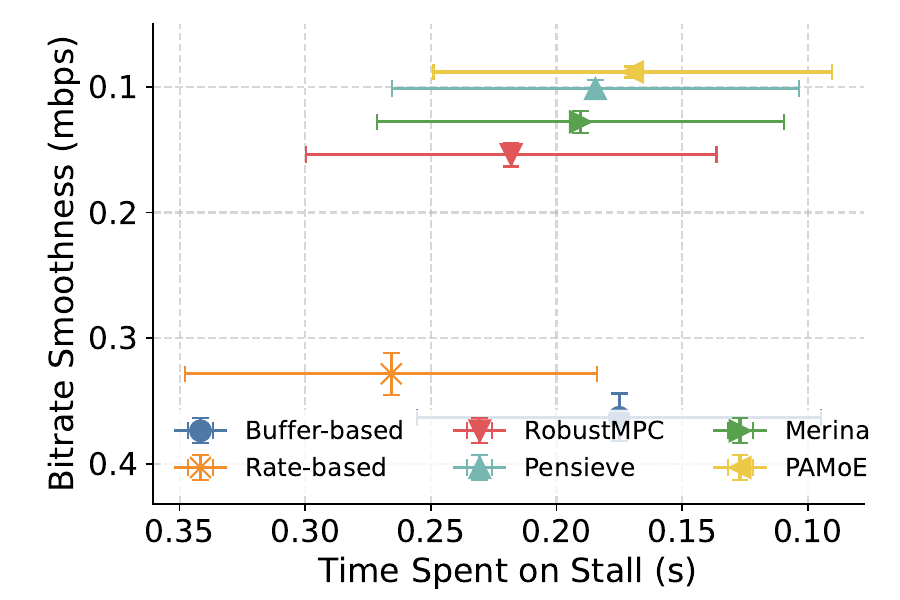}%
\caption{Comparing PA-MOE with recent ABR algorithms over the Train set.}
\label{pamoe_train_reward}
\end{figure*}

\begin{figure*}[!htbp]
\centering
\includegraphics[width=1.8in]{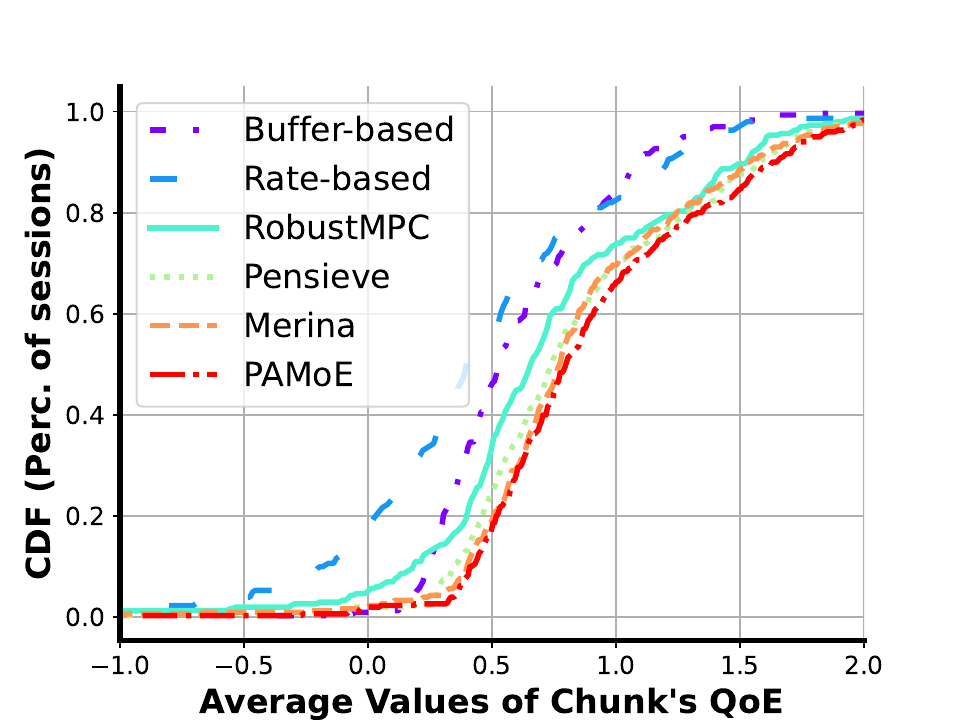}%
\includegraphics[width=1.8in]{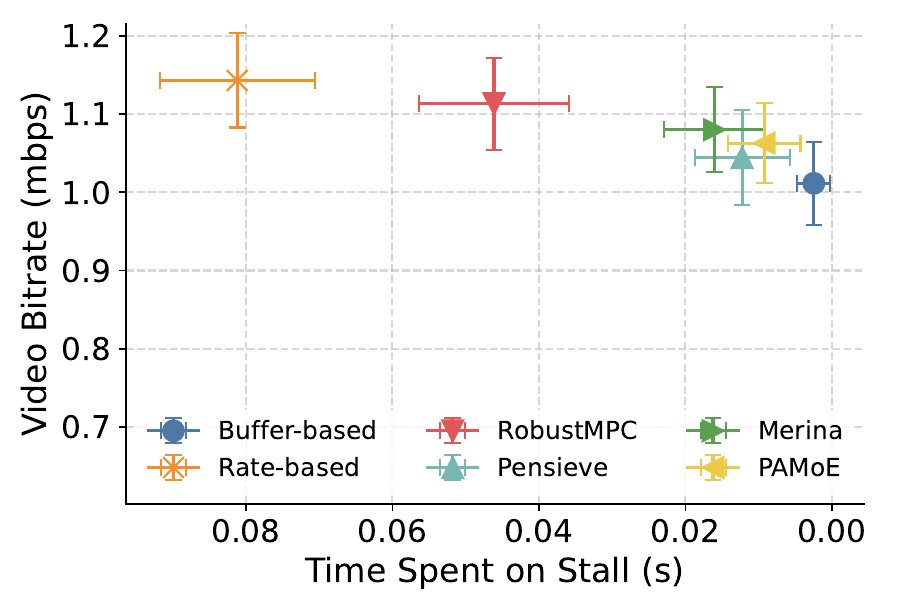}%
\includegraphics[width=1.8in]{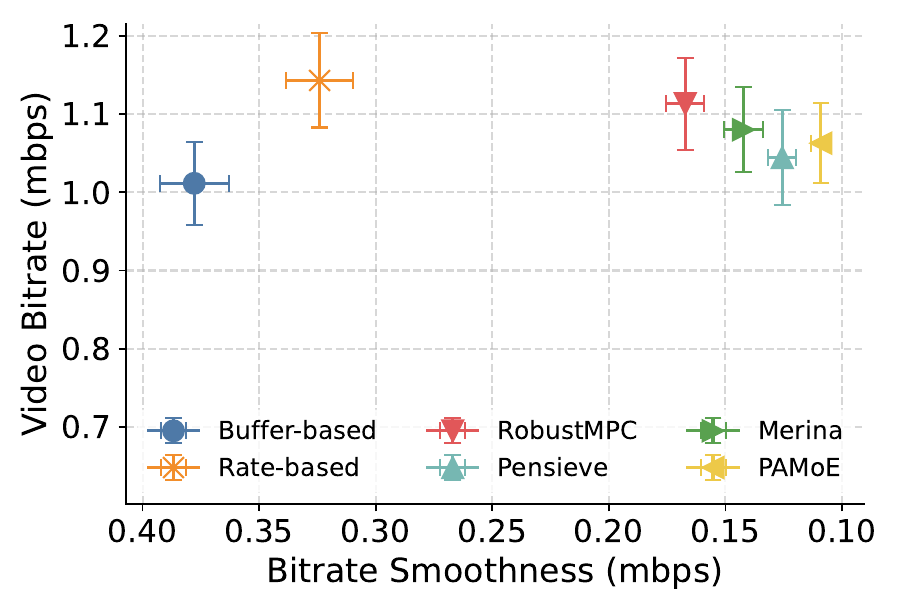}%
\includegraphics[width=1.8in]{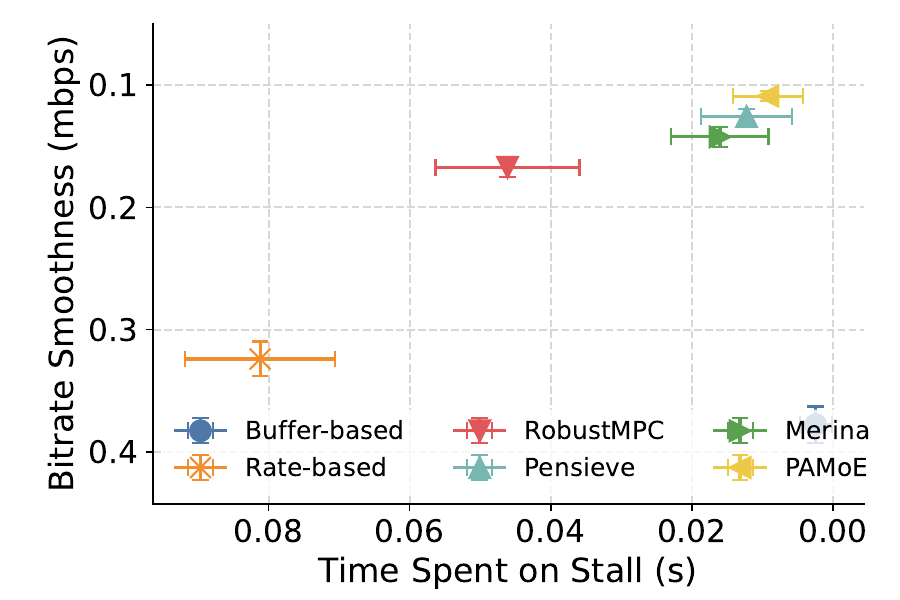}%
\caption{Comparing PA-MOE with recent ABR algorithms over the Test set.}
\label{pamoe_test_reward}
\end{figure*}

In this subsection, we compare our approach with methods that rely on prior knowledge, including learning-based approaches such as Pensieve \cite{mao2017neural} and the meta-learning method Merina \cite{kan2022improving}, as well as non-learning-based approaches such as RobustMPC, RateBased, and BufferBased \cite{yin2015control}. To ensure a fair and credible comparison, we use the same QoE-component coefficients and the same network architecture as in Merina \cite{kan2025merina+} for each expert in the MoE. Similar to Pensieve \cite{mao2017neural} and the meta-learning method Merina \cite{kan2022improving}, we use the final trained model for performance evaluation. The network bandwidth is randomly sampled from either the training traces or the test traces, and each video download constitutes one episode.

For each algorithm, we run 300 episodes under identical settings and compute the average QoE for each episode; these per-episode averages are then used for comparing QoE performance across algorithms. The cumulative distribution functions (CDFs) plot the empirical cumulative distribution of QoE values across the 300 episodes. We aggregate the per-episode QoE values, compute a histogram using 500 equally spaced bins over the QoE range, take the cumulative sum of bin counts, and normalize by the total number of episodes to obtain the empirical CDF. Thus, at any position on the horizontal axis, the curve reports the fraction of sessions whose metric value is less than or equal to the corresponding horizontal value. No additional smoothing or parametric assumptions are used; the curves directly reflect the empirical distribution of the metric for each scheme. We perform pairwise comparisons of the QoE components—Bitrate Reward, Stall Time, and Smoothness—and compute the sample means together with 95\% confidence intervals based on the Student’s $t$-distribution.

\begin{table*}[t]
  \begingroup
  \centering
  \caption{QoE and its components for different ABR algorithms on test dataset (mean over episodes).}
  \label{tab:result}
  \begin{tabular}{lcccc}
    \toprule
    Algorithm &
    Mean QoE $\uparrow$ &
    Bitrate (Mbps) $\uparrow$ &
    Rebuffer Time (s) $\downarrow$ &
    Smoothness (Mbps) $\downarrow$ \\
    \midrule

    \multicolumn{5}{l}{\textbf{Rule-based methods}} \\
    Buffer Based & $0.623$ & $1.011 \pm 0.053$ & $0.002 \pm 0.002$ & $0.378 \pm 0.015$ \\
    Rate Based   & $0.470$ & $1.143 \pm 0.060$ & $0.081 \pm 0.011$ & $0.324 \pm 0.014$ \\
    Robust MPC   & $0.748$ & $1.113 \pm 0.059$ & $0.046 \pm 0.010$ & $0.167 \pm 0.008$ \\
    \addlinespace[0.4em]

    \multicolumn{5}{l}{\textbf{Learning-based methods}} \\
    Pensieve & $0.866$ & $1.044 \pm 0.060$ & $0.012 \pm 0.007$ & $0.126 \pm 0.006$ \\
    Merina   & $0.869$ & $1.080 \pm 0.055$ & $0.016 \pm 0.007$ & $0.142 \pm 0.084$ \\
    PA-MoE   & $\textbf{0.914}$ & $1.063 \pm 0.051$ & $0.009 \pm 0.005$ & $0.109 \pm 0.004$ \\
    \bottomrule
  \end{tabular}

  \vspace{0.3em}
  {\footnotesize
  \textit{Note:} $\uparrow$ indicates larger is better; $\downarrow$ indicates smaller is better.
  Bitrate, Rebuffer Time, and Smoothness are reported as mean $\pm$ 95\% Confidence Interval (CI);
  Mean QoE is reported as mean only.}
  \endgroup
\end{table*}

Figure \ref{pamoe_train_reward} presents the CDFs of average QoE for all sessions and algorithms on the training set, along with pairwise comparisons of the QoE components—bitrate, smoothness, and stall time. Figure \ref{pamoe_test_reward} shows the corresponding results on the testing set. Table~\ref{tab:result} reports the QoE and its components achieved by different ABR algorithms on the test set. As shown, our proposed PA-MoE achieves state-of-the-art performance, even relative to the meta-learning method Merina, which leverages prior knowledge. These results underscore the substantial potential of optimizing adaptive bitrate (ABR) algorithms through the lens of plasticity.

\subsection{Influence on System Internal States}

\begin{figure}[!htbp]
\centering
\includegraphics[width=2.2in]{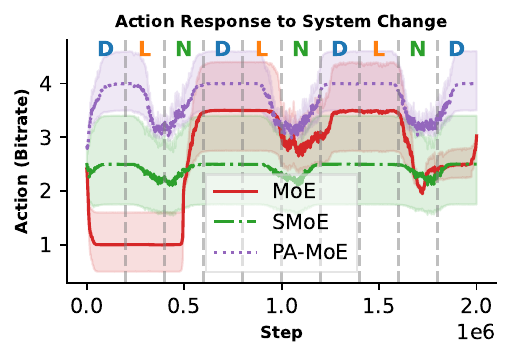}%
\caption{Illustration of Internal State Metrics Variations in a Video Streaming System under Shift QoE Conditions.}
\label{decomposition_reward_system}
\end{figure}

In this section, we present visualizations of the system's internal state metrics to analyze the behavior of different algorithms. This will help us gain a better understanding of their impact on overall QoE. In this section, we present visualizations of the system's internal state metrics to analyze the behavior of different algorithms. This will help us gain a better understanding of their impact on overall QoE. The system environment changes in Figure~\ref{decomposition_reward_system} are consistent with those in Figure~\ref{changing_qoe_action}. ‘D’ denotes documentaries, ‘L’ live streams, and ‘N’ news. The parameter coefficients for each video type are also the same as in Figure~\ref{changing_qoe_action}. Because the objective function changes abruptly, we expect the algorithm to adjust its behavior when the QoE coefficients vary. For example, when the objective function shifts from $L$ to $N$, meaning that the bitrate coefficient in the QoE changes from 1 to $\beta$ while the rebuffering coefficient changes from $\beta$ to 1, the agent should correspondingly modify its action—specifically, by increasing the selected bitrate—to obtain a higher QoE reward.

\begin{figure}[!htbp]
\centering
\includegraphics[width=3.1in]{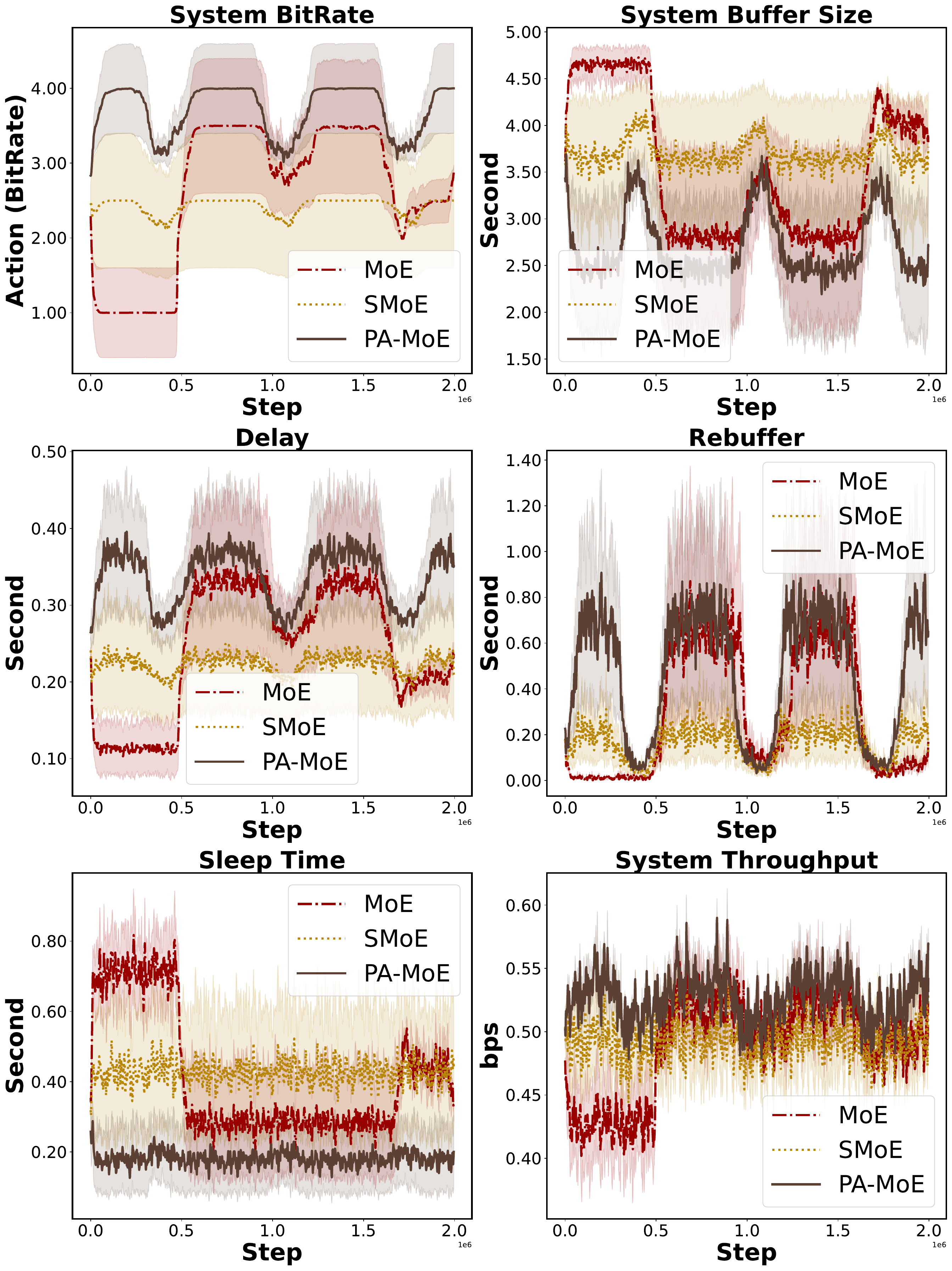}%
\caption{Variations in the Internal State Metrics of the Video Streaming System.}
\label{decomposition_reward_system_all}
\end{figure}

As shown in Figure~\ref{decomposition_reward_system}, we can see that PA-MoE exhibits periodic variations in the action space, which correspond to system changes. On the other hand, SMoE only shows minor fluctuations and fails to adapt quickly. In contrast, MoE does not demonstrate any periodic adaptation. Since the QoE coefficients change directly, using the QoE value itself to study plasticity is not appropriate. If the policy does not adapt its behavior in response to environmental changes, the direct modification of QoE coefficients will still cause the QoE value to fluctuate accordingly. Therefore, examining whether the actions exhibit periodic adjustments provides a more reliable reflection of the system’s adaptive state.

Furthermore, for an AVS system, when the coefficient corresponding to the bitrate term in the QoE increases, the algorithm should select a higher bitrate to achieve a better QoE score. However, a higher bitrate requires more time to download, leading to increased delay and rebuffering time, while the buffer size decreases and the throughput increases. This represents a preliminary analysis of the internal state dynamics of the system.

Figure~\ref{decomposition_reward_system_all} illustrates detailed variations across the entire internal state of the system, revealing that PA-MoE more effectively utilizes the buffer size, resulting in higher throughput. From the perspective of internal system state variables, the performance improvement of PA-MOE mainly comes from reduced sleep time and more efficient buffer utilization. By adapting to changes in the action space, it can actively select actions that yield higher rewards. In contrast, MoE suffers from a loss of plasticity and fails to keep up with the rapidly changing system. This further validates the effectiveness of our proposed plasticity injection method.

\subsection{Parameter Sensitivity Analysis in PA-MoE}

\begin{figure}[!htbp]
\centering
\includegraphics[width=2.4in]{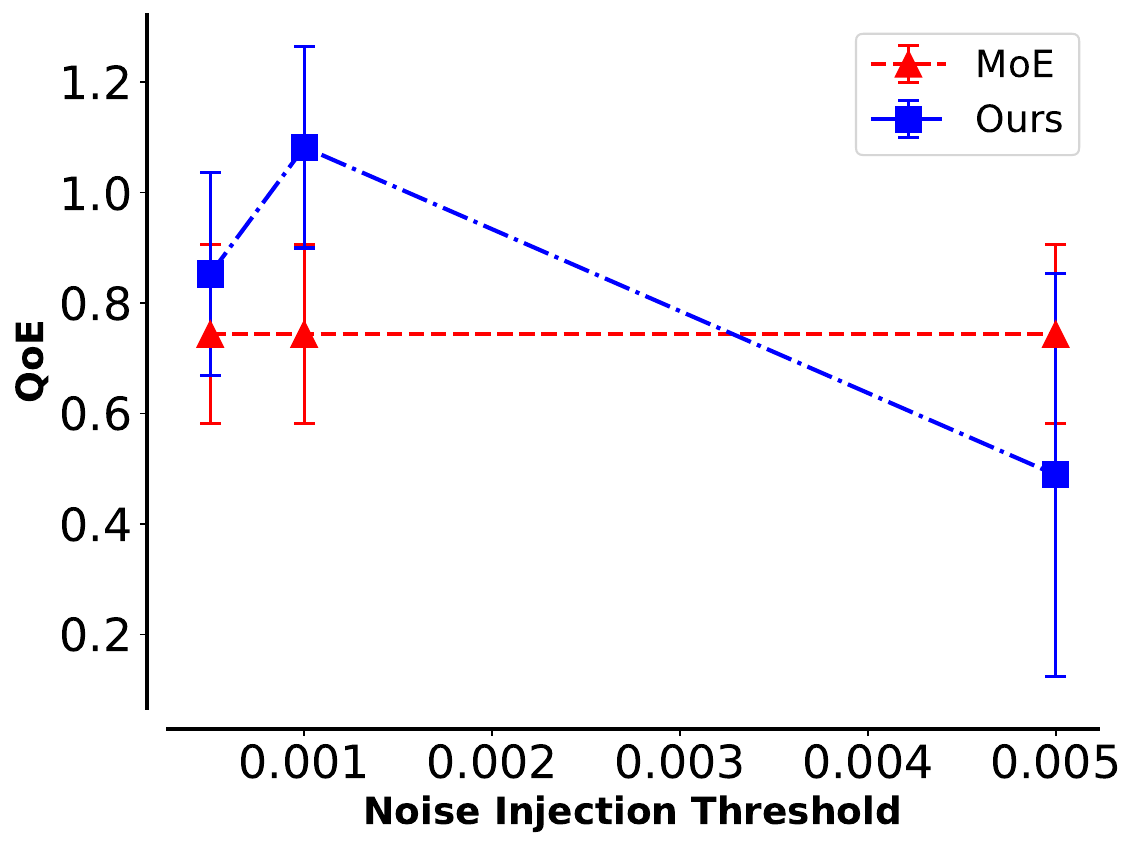}%
\caption{Comparison of QoE Methods using the IQM  (Interquartile Mean) with a focus on the 25th-75th percentile returns. This approach aims to highlight the relative performance of different methods while minimizing the impact of outliers.}
\label{noise}
\end{figure}

In PA-MoE, the magnitude of noise injection is a critical parameter. Our sensitivity analysis, presented in Fig. \ref{noise}, reveals that both excessively low and high noise levels lead to performance degradation. This observation aligns with our theoretical analysis and validates Theorem \ref{thm:error-bound}. Furthermore, as shown in Appendix \ref{Expert_weight_to_Mitigating_Plasticity_Loss_in_PA_MoE}, visualizations of the weight magnitudes indicate that the neural network adaptively adjusts its weights to retain memory and counteract noise-induced forgetting. These findings underscore the importance of carefully tuning the noise level to balance plasticity and stability in dynamic environments.

\begin{figure}[!htbp]
\centering
\includegraphics[width=2.4in]{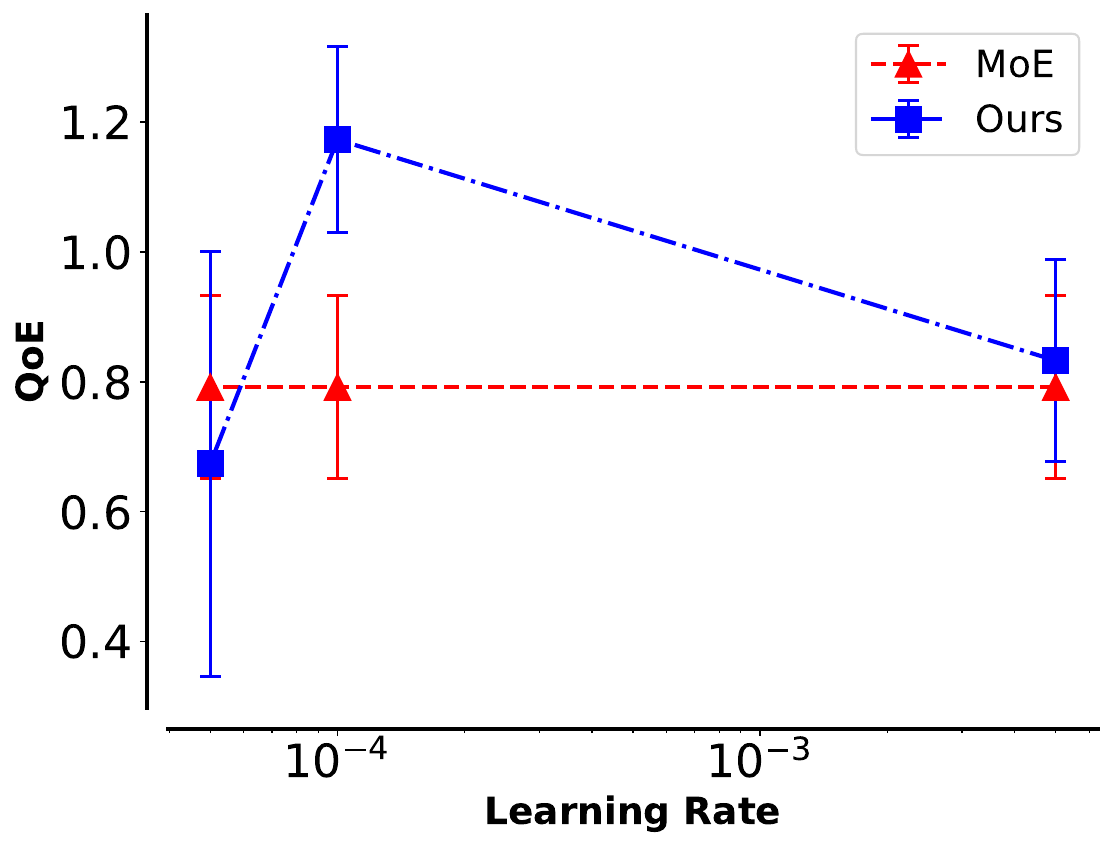}%
\caption{Comparison of QoE Methods using the IQM  (Interquartile Mean) with a focus on the 25th-75th percentile returns.}
\label{lr}
\end{figure}

Figure~\ref{lr} shows QoE as a function of the learning rate. We find that both excessively large and excessively small learning rates degrade performance, in agreement with our theoretical results. An inappropriate learning rate may even violate the trust-region assumption. Nevertheless, the experiments indicate that, even with a learning rate as large as 0.005—rare in practice—our method still achieves a modest improvement.

\section{Conclusion}

In this paper, we revisit the paradigm of optimization under changing objective functions in adaptive video streaming. We introduce PA-MoE, a novel approach designed to mitigate neural network plasticity loss during the optimization process under dynamic QoE shift scenarios. Our method leverages controlled noise injection to enable robust and rapid adaptation. We provide rigorous theoretical insights and analysis to support the effectiveness of PA-MoE, and extensive experiments validate both the identified problem and our proposed solution from multiple perspectives. Additionally, parameter sensitivity studies further corroborate our theoretical claims. Future work may explore more advanced strategies for alleviating plasticity loss, building on the foundational noise injection mechanism presented here. Another promising direction is the development of methods for automatically adjusting the injected noise to optimize learning stability and performance.

\bibliographystyle{IEEEtran}
\bibliography{reference}

@inproceedings{yin2015control,
  title={A control-theoretic approach for dynamic adaptive video streaming over HTTP},
  author={Yin, Xiaoqi and Jindal, Abhishek and Sekar, Vyas and Sinopoli, Bruno},
  booktitle={Proceedings of the 2015 ACM conference on special interest group on data communication},
  pages={325--338},
  year={2015}
}

@inproceedings{riiser2013commute,
  title={Commute path bandwidth traces from 3G networks: Analysis and applications},
  author={Riiser, Haakon and Vigmostad, Paul and Griwodz, Carsten and Halvorsen, P{\aa}l},
  booktitle={Proceedings of the 4th ACM Multimedia Systems Conference},
  pages={114--118},
  year={2013}
}

@inproceedings{yan2020learning,
  title={Learning in situ: a randomized experiment in video streaming},
  author={Yan, Francis Y and Ayers, Hudson and Zhu, Chenzhi and Fouladi, Sadjad and Hong, James and Zhang, Keyi and Levis, Philip and Winstein, Keith},
  booktitle={17th USENIX Symposium on Networked Systems Design and Implementation (NSDI 20)},
  pages={495--511},
  year={2020}
}

@article{wu2024mansy,
  title={Mansy: Generalizing neural adaptive immersive video streaming with ensemble and representation learning},
  author={Wu, Duo and Wu, Panlong and Zhang, Miao and Wang, Fangxin},
  journal={IEEE Transactions on Mobile Computing},
  year={2024},
  publisher={IEEE}
}

@article{zhang2025novel,
  title={A Novel Spatial-Temporal Learning Method for Enhancing Generalization in Adaptive Video Streaming},
  author={Zhang, Guanghui and Wang, Ziming and Wei, Huaren and Xiao, Mengbai and Yuan, Hui and Yu, Dongxiao and Cheng, Xiuzhen},
  journal={IEEE Transactions on Mobile Computing},
  year={2025},
  publisher={IEEE}
}

@article{kan2025merina+,
  title={MERINA+: Improving Generalization for Neural Video Adaptation via Information-Theoretic Meta-Reinforcement Learning},
  author={Kan, Nuowen and Li, Chenglin and Jiang, Yuankun and Dai, Wenrui and Zou, Junni and Xiong, Hongkai and Toni, Laura},
  journal={IEEE Transactions on Circuits and Systems for Video Technology},
  year={2025},
  publisher={IEEE}
}

@article{taha2021automated,
  title={An automated model for the assessment of QoE of adaptive video streaming over wireless networks},
  author={Taha, Miran and Ali, Aree and Lloret, Jaime and Gondim, Paulo RL and Canovas, Alejandro},
  journal={Multimedia Tools and Applications},
  volume={80},
  number={17},
  pages={26833--26854},
  year={2021},
  publisher={Springer}
}

@article{taha2021qoe,
  title={A QoE adaptive management system for high definition video streaming over wireless networks},
  author={Taha, Miran and Canovas, Alejandro and Lloret, Jaime and Ali, Aree},
  journal={Telecommunication Systems},
  volume={77},
  number={1},
  pages={63--81},
  year={2021},
  publisher={Springer}
}

@article{taha2023smart,
  title={Smart algorithm in wireless networks for video streaming based on adaptive quantization},
  author={Taha, Miran and Ali, Aree},
  journal={Concurrency and Computation: Practice and Experience},
  volume={35},
  number={9},
  pages={e7633},
  year={2023},
  publisher={Wiley Online Library}
}

@inproceedings{abels2019dynamic,
  title={Dynamic weights in multi-objective deep reinforcement learning},
  author={Abels, Axel and Roijers, Diederik and Lenaerts, Tom and Now{\'e}, Ann and Steckelmacher, Denis},
  booktitle={International conference on machine learning},
  pages={11--20},
  year={2019},
  organization={PMLR}
}

@article{agarwal2021theory,
  title={On the theory of policy gradient methods: Optimality, approximation, and distribution shift},
  author={Agarwal, Alekh and Kakade, Sham M and Lee, Jason D and Mahajan, Gaurav},
  journal={Journal of Machine Learning Research},
  volume={22},
  number={98},
  pages={1--76},
  year={2021}
}

@inproceedings{
plappert2018parameter,
title={Parameter Space Noise for Exploration},
author={Matthias Plappert and Rein Houthooft and Prafulla Dhariwal and Szymon Sidor and Richard Y. Chen and Xi Chen and Tamim Asfour and Pieter Abbeel and Marcin Andrychowicz},
booktitle={International Conference on Learning Representations},
year={2018},
url={https://openreview.net/forum?id=ByBAl2eAZ},
}

@inproceedings{bayesmpc,
author = {Kan, Nuowen and Li, Chenglin and Yang, Caiyi and Dai, Wenrui and Zou, Junni and Xiong, Hongkai},
title = {Uncertainty-aware robust adaptive video streaming with bayesian neural network and model predictive control},
year = {2021},
isbn = {9781450384353},
publisher = {Association for Computing Machinery},
address = {New York, NY, USA},
url = {https://doi.org/10.1145/3458306.3458872},
doi = {10.1145/3458306.3458872},
abstract = {In this paper, we propose BayesMPC, an uncertainty-aware robust adaptive bitrate (ABR) algorithm on the basis of Bayesian neural network (BNN) and model predictive control (MPC). Specifically, to improve the capacity of learning transition probability of the network throughput, we adopt a BNN-based predictor that is able to predict the statistical distribution of future throughput from the past throughput by not only considering the aleatoric uncertainty (e.g., noise), but also capturing the epistemic uncertainty incurred by lack of adequate training samples. We further show that by using the negative log-likelihood loss function to train this BNN-based throughput predictor, the generalization error can be minimized with the guarantee of PAC-Bayesian theorem. Rather than a point estimate, the learnt uncertainty can contribute to a confidence region for the future throughput, the lower bound of which then leads to an uncertainty-aware robust MPC strategy to maximize the worst-case user quality-of-experience (QoE) w.r.t. this confidence region. Finally, experimental results on three real-world network trace datasets validate the efficiency of both the proposed BNN-based predictor and uncertainty-aware robust MPC strategy, and demonstrate the superior performance compared to other baselines, in terms of both the overall QoE performance and generalization across all ranges of heterogeneous network and user conditions.},
booktitle = {Proceedings of the 31st ACM Workshop on Network and Operating Systems Support for Digital Audio and Video},
pages = {17–24},
numpages = {8},
keywords = {adaptive video streaming, bayesian neural network (BNN), model predictive control (MPC), rate adaptation},
location = {Istanbul, Turkey},
series = {NOSSDAV '21}
}

@article{he2025understanding,
  title={Understanding world models through multi-step pruning policy via reinforcement learning},
  author={He, Zhiqiang and Qiu, Wen and Zhao, Wei and Shao, Xun and Liu, Zhi},
  journal={Information Sciences},
  volume={686},
  pages={121361},
  year={2025},
  publisher={Elsevier}
}

@article{schulman2017proximal,
  title={Proximal policy optimization algorithms},
  author={Schulman, John and Wolski, Filip and Dhariwal, Prafulla and Radford, Alec and Klimov, Oleg},
  journal={arXiv preprint arXiv:1707.06347},
  year={2017}
}

@article{anderson2001suppressing,
  title={Suppressing unwanted memories by executive control},
  author={Anderson, Michael C and Green, Collin},
  journal={Nature},
  volume={410},
  number={6826},
  pages={366--369},
  year={2001},
  publisher={Nature Publishing Group UK London}
}

@ARTICLE{10185649,
  author={Zheng, Junfang and Shi, Junling and He, Qiang and Zhang, Enchao and Hawbani, Ammar and Zhao, Liang},
  journal={IEEE Networking Letters}, 
  title={An Influence Maximization-Based Hybrid Advertising Dissemination for Internet of Vehicles}, 
  year={2023},
  volume={5},
  number={4},
  pages={218-222},
  keywords={Advertising;Vehicle dynamics;Aerodynamics;Real-time systems;Correlation;Roads;Physical layer;Internet of Things;Internet of Vehicles;Internet of Vehicles;digital twin;influence maximization;vehicle advertising dissemination},
  doi={10.1109/LNET.2023.3296081}}

@ARTICLE{10633863,
  author={Lin, Na and Wang, Yamei and Zhang, Enchao and Wan, Shaohua and Al-Dubai, Ahmed and Zhao, Liang},
  journal={IEEE Transactions on Sustainable Computing}, 
  title={User Preferences-Based Proactive Content Caching with Characteristics Differentiation in HetNets}, 
  year={2024},
  volume={},
  number={},
  pages={1-12},
  keywords={Retina;Training;Stochastic processes;Base stations;Backhaul networks;Quality of experience;Market research;Heterogeneous Cellular Networks;Reinforcement Learning;Content Caching;unknown file popularity profiles;content characteristics;user preferences},
  doi={10.1109/TSUSC.2024.3441606}}

@inproceedings{kan2022improving,
  title={Improving generalization for neural adaptive video streaming via meta reinforcement learning},
  author={Kan, Nuowen and Jiang, Yuankun and Li, Chenglin and Dai, Wenrui and Zou, Junni and Xiong, Hongkai},
  booktitle={Proceedings of the 30th ACM International Conference on Multimedia},
  pages={3006--3016},
  year={2022}
}

@article{agarwal2021deep,
  title={Deep reinforcement learning at the edge of the statistical precipice},
  author={Agarwal, Rishabh and Schwarzer, Max and Castro, Pablo Samuel and Courville, Aaron C and Bellemare, Marc},
  journal={Advances in neural information processing systems},
  volume={34},
  pages={29304--29320},
  year={2021}
}

@inproceedings{
Yang2020Harnessing,
title={Harnessing Structures for Value-Based Planning and Reinforcement Learning},
author={Yuzhe Yang and Guo Zhang and Zhi Xu and Dina Katabi},
booktitle={International Conference on Learning Representations},
year={2020},
url={https://openreview.net/forum?id=rklHqRVKvH}
}

@INPROCEEDINGS{7098875,
  author={Roy, Olivier and Vetterli, Martin},
  booktitle={2007 15th European Signal Processing Conference}, 
  title={The effective rank: A measure of effective dimensionality}, 
  year={2007},
  volume={},
  number={},
  pages={606-610},
  keywords={Entropy;Signal processing;Covariance matrices;Europe;Random processes;Matrix decomposition;Eigenvalues and eigenfunctions},
  doi={}}

@inproceedings{
shin2024dash,
title={{DASH}: Warm-Starting Neural Network Training Without Loss of Plasticity Under Stationarity},
author={Baekrok Shin and Junsoo Oh and Hanseul Cho and Chulhee Yun},
booktitle={2nd Workshop on Advancing Neural Network Training: Computational Efficiency, Scalability, and Resource Optimization (WANT@ICML 2024)},
year={2024},
url={https://openreview.net/forum?id=GR5LXaglgG}
}

@inproceedings{
galashov2024nonstationary,
title={Non-Stationary Learning of Neural Networks with Automatic Soft Parameter Reset},
author={Alexandre Galashov and Michalis Titsias and Andr{\'a}s Gy{\"o}rgy and Clare Lyle and Razvan Pascanu and Yee Whye Teh and Maneesh Sahani},
booktitle={The Thirty-eighth Annual Conference on Neural Information Processing Systems},
year={2024},
url={https://openreview.net/forum?id=fDiZJ7mmOV}
}

@article{elsayed2024streaming,
  title={Streaming Deep Reinforcement Learning Finally Works},
  author={Elsayed, Mohamed and Vasan, Gautham and Mahmood, A Rupam},
  journal={arXiv preprint arXiv:2410.14606},
  year={2024}
}

@inproceedings{li2025theory,
title={Theory on Mixture-of-Experts in Continual Learning},
author={Hongbo Li and Sen Lin and Lingjie Duan and Yingbin Liang and Ness Shroff},
booktitle={The Thirteenth International Conference on Learning Representations},
year={2025},
url={https://openreview.net/forum?id=7XgKAabsPp}
}

@article{theory_mixtur,
  added-at = {2025-01-23T00:00:00.000+0100},
  author = {Li, Hongbo and Duan, Lingjie},
  biburl = {https://www.bibsonomy.org/bibtex/264abe4a4e167a4ecb3c473260b009792/dblp},
  ee = {https://doi.org/10.48550/arXiv.2412.15690},
  journal = {CoRR},
  keywords = {dblp},
  timestamp = {2025-01-27T08:26:45.000+0100},
  title = {Theory of Mixture-of-Experts for Mobile Edge Computing.},
  url = {http://dblp.uni-trier.de/db/journals/corr/corr2412.html#abs-2412-15690},
  volume = {abs/2412.15690},
  year = 2024
}

@article{cai2024survey,
  title={A survey on mixture of experts},
  author={Cai, Weilin and Jiang, Juyong and Wang, Fan and Tang, Jing and Kim, Sunghun and Huang, Jiayi},
  journal={arXiv preprint arXiv:2407.06204},
  year={2024}
}

@article{li2023metaabr,
  title={Metaabr: A meta-learning approach on adaptative bitrate selection for video streaming},
  author={Li, Wenzhong and Li, Xiang and Xu, Yeting and Yang, Yi and Lu, Sanglu},
  journal={IEEE Transactions on Mobile Computing},
  volume={23},
  number={3},
  pages={2422--2437},
  year={2023},
  publisher={IEEE}
}

@article{wang2024mmvs,
  title={MMVS: Enabling robust adaptive video streaming for wildly fluctuating and heterogeneous networks},
  author={Wang, Shuoyao and Lin, Jiawei and Dai, Yu},
  journal={IEEE Transactions on Multimedia},
  year={2024},
  publisher={IEEE}
}

@article{WilliOFDC24,
  added-at = {2025-02-12T00:00:00.000+0100},
  author = {Willi, Timon and Obando-Ceron, Johan Samir and Foerster, Jakob Nicolaus and Dziugaite, Gintare Karolina and Castro, Pablo Samuel},
  journal = {RLJ},
  pages = {1072-1105},
  title = {Mixture of Experts in a Mixture of RL settings.},
  url = {http://dblp.uni-trier.de/db/journals/rlj/rlj3.html#WilliOFDC24},
  volume = 3,
  year = 2024
}

@inproceedings{AbbasZM0M23,
  added-at = {2024-02-20T00:00:00.000+0100},
  author = {Abbas, Zaheer and Zhao, Rosie and Modayil, Joseph and White, Adam and Machado, Marlos C.},
  biburl = {https://www.bibsonomy.org/bibtex/265811e4571d78a27fc46f2ff164d38a4/dblp},
  booktitle = {CoLLAs},
  keywords = {dblp},
  pages = {620-636},
  publisher = {PMLR},
  series = {Proceedings of Machine Learning Research},
  title = {Loss of Plasticity in Continual Deep Reinforcement Learning.},
  url = {http://dblp.uni-trier.de/db/conf/collas/collas2023.html#AbbasZM0M23},
  volume = 232,
  year = 2023
}

@ARTICLE{9699071,
  author={Li, Weihe and Huang, Jiawei and Wang, Shiqi and Wu, Chuliang and Liu, Sen and Wang, Jianxin},
  journal={IEEE Transactions on Multimedia}, 
  title={An Apprenticeship Learning Approach for Adaptive Video Streaming Based on Chunk Quality and User Preference}, 
  year={2023},
  volume={25},
  number={},
  pages={2488-2502},
  keywords={Streaming media;Quality of experience;Bit rate;Heuristic algorithms;Measurement;Quality assessment;Video recording;Chunk quality;DASH;dynamic chunk;user preference;video streaming},
  doi={10.1109/TMM.2022.3147667}}

@INPROCEEDINGS{9796953,
  author={Zuo, Xutong and Yang, Jiayu and Wang, Mowei and Cui, Yong},
  booktitle={IEEE INFOCOM 2022 - IEEE Conference on Computer Communications}, 
  title={Adaptive Bitrate with User-level QoE Preference for Video Streaming}, 
  year={2022},
  volume={},
  number={},
  pages={1279-1288},
  keywords={Measurement;Adaptation models;Simulation;Conferences;Computational modeling;Bit rate;Streaming media;video streaming;bitrate adaptation;quality of experience;deep learning},
  doi={10.1109/INFOCOM48880.2022.9796953}}

@article{nauman2025bigger,
  title={Bigger, regularized, optimistic: scaling for compute and sample efficient continuous control},
  author={Nauman, Michal and Ostaszewski, Mateusz and Jankowski, Krzysztof and Mi{\l}o{\'s}, Piotr and Cygan, Marek},
  journal={Advances in Neural Information Processing Systems},
  volume={37},
  pages={113038--113071},
  year={2025}
}

@inproceedings{nikishin2022primacy,
  title={The primacy bias in deep reinforcement learning},
  author={Nikishin, Evgenii and Schwarzer, Max and D’Oro, Pierluca and Bacon, Pierre-Luc and Courville, Aaron},
  booktitle={International conference on machine learning},
  pages={16828--16847},
  year={2022},
  organization={PMLR}
}

@inproceedings{d2022sample,
  title={Sample-efficient reinforcement learning by breaking the replay ratio barrier},
  author={D'Oro, Pierluca and Schwarzer, Max and Nikishin, Evgenii and Bacon, Pierre-Luc and Bellemare, Marc G and Courville, Aaron},
  booktitle={Deep Reinforcement Learning Workshop NeurIPS 2022},
  year={2022}
}

@article{ash2020warm,
  title={On warm-starting neural network training},
  author={Ash, Jordan and Adams, Ryan P},
  journal={Advances in neural information processing systems},
  volume={33},
  pages={3884--3894},
  year={2020}
}

@article{moe_rl,
  added-at = {2025-02-12T00:00:00.000+0100},
  author = {Willi, Timon and Obando-Ceron, Johan Samir and Foerster, Jakob Nicolaus and Dziugaite, Gintare Karolina and Castro, Pablo Samuel},
  journal = {RLJ},
  pages = {1072-1105},
  title = {Mixture of Experts in a Mixture of RL settings.},
  url = {http://dblp.uni-trier.de/db/journals/rlj/rlj3.html#WilliOFDC24},
  volume = 3,
  year = 2024
}

@article{dohare2024loss,
  title={Loss of plasticity in deep continual learning},
  author={Dohare, Shibhansh and Hernandez-Garcia, J Fernando and Lan, Qingfeng and Rahman, Parash and Mahmood, A Rupam and Sutton, Richard S},
  journal={Nature},
  volume={632},
  number={8026},
  pages={768--774},
  year={2024},
  publisher={Nature Publishing Group UK London}
}

@incollection{puderbaugh2023neuroplasticity,
  title={Neuroplasticity},
  author={Puderbaugh, Matt and Emmady, Prabhu D},
  booktitle={StatPearls [Internet]},
  year={2023},
  publisher={StatPearls Publishing}
}

@inproceedings{LyleZNPPD23,
  author = {Lyle, Clare and Zheng, Zeyu and Nikishin, Evgenii and Pires, Bernardo Avila and Pascanu, Razvan and Dabney, Will},
  booktitle = {ICML},
  pages = {23190-23211},
  publisher = {PMLR},
  series = {Proceedings of Machine Learning Research},
  title = {Understanding Plasticity in Neural Networks.},
  url = {http://dblp.uni-trier.de/db/conf/icml/icml2023.html#LyleZNPPD23},
  volume = 202,
  year = 2023
}

@article{
gulcehre2022an,
title={An empirical study of implicit regularization in deep offline {RL}},
author={Caglar Gulcehre and Srivatsan Srinivasan and Jakub Sygnowski and Georg Ostrovski and Mehrdad Farajtabar and Matthew Hoffman and Razvan Pascanu and Arnaud Doucet},
journal={Transactions on Machine Learning Research},
issn={2835-8856},
year={2022},
url={https://openreview.net/forum?id=HFfJWx60IT},
note={}
}

@InProceedings{Hare_Tortoise,
  title = 	 {Slow and Steady Wins the Race: Maintaining Plasticity with Hare and Tortoise Networks},
  author =       {Lee, Hojoon and Cho, Hyeonseo and Kim, Hyunseung and Kim, Donghu and Min, Dugki and Choo, Jaegul and Lyle, Clare},
  booktitle = 	 {Proceedings of the 41st International Conference on Machine Learning},
  pages = 	 {26416--26438},
  year = 	 {2024},
  editor = 	 {Salakhutdinov, Ruslan and Kolter, Zico and Heller, Katherine and Weller, Adrian and Oliver, Nuria and Scarlett, Jonathan and Berkenkamp, Felix},
  volume = 	 {235},
  series = 	 {Proceedings of Machine Learning Research},
  month = 	 {21--27 Jul},
  publisher =    {PMLR},
  pdf = 	 {https://raw.githubusercontent.com/mlresearch/v235/main/assets/lee24d/lee24d.pdf},
  url = 	 {https://proceedings.mlr.press/v235/lee24d.html},
}

@inproceedings{kimura2021context,
  title={Context-aware adaptive bitrate streaming system},
  author={Kimura, Takuto and Kimura, Tatsuaki and Yamagishi, Kazuhisa},
  booktitle={ICC 2021-IEEE International Conference on Communications},
  pages={1--7},
  year={2021},
  organization={IEEE}
}

@article{lee2022qrator,
  title={Qrator: An Interest-Aware Approach to ABR Streaming Based on User Engagement},
  author={Lee, Dongkeun and Joo, Minwoo and Lee, Wonjun},
  journal={IEEE Systems Journal},
  volume={16},
  number={4},
  pages={6581--6589},
  year={2022},
  publisher={IEEE}
}

@inproceedings{choi2024data,
  title={Data-Driven Video Scene Importance Estimation for Adaptive Video Streaming},
  author={Choi, Wangyu and Yoon, Jongwon},
  booktitle={2024 Fifteenth International Conference on Ubiquitous and Future Networks (ICUFN)},
  pages={348--351},
  year={2024},
  organization={IEEE}
}

@inproceedings{zhang2022towards,
  title={Towards data-efficient continuous learning for edge video analytics via smart caching},
  author={Zhang, Lei and Gao, Guanyu and Zhang, Huaizheng},
  booktitle={Proceedings of the 20th ACM Conference on Embedded Networked Sensor Systems},
  pages={1136--1140},
  year={2022}
}

@article{tang2023successor,
  title={Successor Feature-Based Transfer Reinforcement Learning for Video Rate Adaptation With Heterogeneous QoE Preferences},
  author={Tang, Kexin and Kan, Nuowen and Jiang, Yuankun and Li, Chenglin and Dai, Wenrui and Zou, Junni and Xiong, Hongkai},
  journal={IEEE Transactions on Multimedia},
  volume={26},
  pages={5340--5357},
  year={2023},
  publisher={IEEE}
}

@article{zhang2022enabling,
  title={Enabling personalized video quality optimization with VidHoc},
  author={Zhang, Xu and Schmitt, Paul and Chetty, Marshini and Feamster, Nick and Jiang, Junchen},
  journal={arXiv preprint arXiv:2211.15959},
  year={2022}
}

@inproceedings{zhang2021sensei,
  title={$\{$SENSEI$\}$: Aligning video streaming quality with dynamic user sensitivity},
  author={Zhang, Xu and Ou, Yiyang and Sen, Siddhartha and Jiang, Junchen},
  booktitle={18th USENIX Symposium on Networked Systems Design and Implementation (NSDI 21)},
  pages={303--320},
  year={2021}
}

@ARTICLE{10314007,
  author={Tang, Kexin and Kan, Nuowen and Jiang, Yuankun and Li, Chenglin and Dai, Wenrui and Zou, Junni and Xiong, Hongkai},
  journal={IEEE Transactions on Multimedia}, 
  title={Successor Feature-Based Transfer Reinforcement Learning for Video Rate Adaptation With Heterogeneous QoE Preferences}, 
  year={2024},
  volume={26},
  number={},
  pages={5340-5357},
  keywords={Quality of experience;Task analysis;Streaming media;Measurement;Bit rate;Heuristic algorithms;Video recording;Adaptive video streaming;transfer reinforcement learning;successor features;heterogeneous preferences},
  doi={10.1109/TMM.2023.3331487}}

@article{dobrian2011understanding,
  title={Understanding the impact of video quality on user engagement},
  author={Dobrian, Florin and Sekar, Vyas and Awan, Asad and Stoica, Ion and Joseph, Dilip and Ganjam, Aditya and Zhan, Jibin and Zhang, Hui},
  journal={ACM SIGCOMM computer communication review},
  volume={41},
  number={4},
  pages={362--373},
  year={2011},
  publisher={Acm New York, NY, USA}
}

@INPROCEEDINGS{8761156,
  author={Gao, Guanyu and Dong, Linsen and Zhang, Huaizheng and Wen, Yonggang and Zeng, Wenjun},
  booktitle={ICC 2019 - 2019 IEEE International Conference on Communications (ICC)}, 
  title={Content-Aware Personalised Rate Adaptation for Adaptive Streaming via Deep Video Analysis}, 
  year={2019},
  volume={},
  number={},
  pages={1-8},
  keywords={Streaming media;Bit rate;Quality of experience;Bandwidth;Semantics;Deep learning},
  doi={10.1109/ICC.2019.8761156}}

@inproceedings{
hendawy2024multitask,
title={Multi-Task Reinforcement Learning with Mixture of Orthogonal Experts},
author={Ahmed Hendawy and Jan Peters and Carlo D'Eramo},
booktitle={The Twelfth International Conference on Learning Representations},
year={2024},
url={https://openreview.net/forum?id=aZH1dM3GOX}
}

@article{sun2022paco,
  title={Paco: Parameter-compositional multi-task reinforcement learning},
  author={Sun, Lingfeng and Zhang, Haichao and Xu, Wei and Tomizuka, Masayoshi},
  journal={Advances in Neural Information Processing Systems},
  volume={35},
  pages={21495--21507},
  year={2022}
}

@inproceedings{
doro2023sampleefficient,
title={Sample-Efficient Reinforcement Learning by Breaking the Replay Ratio Barrier},
author={Pierluca D'Oro and Max Schwarzer and Evgenii Nikishin and Pierre-Luc Bacon and Marc G Bellemare and Aaron Courville},
booktitle={The Eleventh International Conference on Learning Representations },
year={2023},
url={https://openreview.net/forum?id=OpC-9aBBVJe}
}

@InProceedings{rl_pruned_network,
  title = 	 {In value-based deep reinforcement learning, a pruned network is a good network},
  author =       {Obando Ceron, Johan Samir and Courville, Aaron and Castro, Pablo Samuel},
  booktitle = 	 {Proceedings of the 41st International Conference on Machine Learning},
  pages = 	 {38495--38519},
  year = 	 {2024},
  editor = 	 {Salakhutdinov, Ruslan and Kolter, Zico and Heller, Katherine and Weller, Adrian and Oliver, Nuria and Scarlett, Jonathan and Berkenkamp, Felix},
  volume = 	 {235},
  series = 	 {Proceedings of Machine Learning Research},
  month = 	 {21--27 Jul},
  publisher =    {PMLR},
}

@inproceedings{mao2017neural,
  title={Neural adaptive video streaming with pensieve},
  author={Mao, Hongzi and Netravali, Ravi and Alizadeh, Mohammad},
  booktitle={Proceedings of the conference of the ACM special interest group on data communication},
  pages={197--210},
  year={2017}
}

@inproceedings{schwarzer2023bigger,
  title={Bigger, better, faster: Human-level atari with human-level efficiency},
  author={Schwarzer, Max and Ceron, Johan Samir Obando and Courville, Aaron and Bellemare, Marc G and Agarwal, Rishabh and Castro, Pablo Samuel},
  booktitle={International Conference on Machine Learning},
  pages={30365--30380},
  year={2023},
  organization={PMLR}
}

@inproceedings{sokar2023dormant,
  title={The dormant neuron phenomenon in deep reinforcement learning},
  author={Sokar, Ghada and Agarwal, Rishabh and Castro, Pablo Samuel and Evci, Utku},
  booktitle={International Conference on Machine Learning},
  pages={32145--32168},
  year={2023},
  organization={PMLR}
}

@article{zhou2022mixture,
  title={Mixture-of-experts with expert choice routing},
  author={Zhou, Yanqi and Lei, Tao and Liu, Hanxiao and Du, Nan and Huang, Yanping and Zhao, Vincent and Dai, Andrew M and Le, Quoc V and Laudon, James and others},
  journal={Advances in Neural Information Processing Systems},
  volume={35},
  pages={7103--7114},
  year={2022}
}

@inproceedings{shazeer2017,
title={Outrageously Large Neural Networks: The Sparsely-Gated Mixture-of-Experts Layer},
author={Noam Shazeer and *Azalia Mirhoseini and *Krzysztof Maziarz and Andy Davis and Quoc Le and Geoffrey Hinton and Jeff Dean},
booktitle={International Conference on Learning Representations},
year={2017},
url={https://openreview.net/forum?id=B1ckMDqlg}
}

@inproceedings{
roller2021hash,
title={Hash Layers For Large Sparse Models},
author={Stephen Roller and Sainbayar Sukhbaatar and Arthur Szlam and Jason E Weston},
booktitle={Advances in Neural Information Processing Systems},
editor={A. Beygelzimer and Y. Dauphin and P. Liang and J. Wortman Vaughan},
year={2021},
url={https://openreview.net/forum?id=lMgDDWb1ULW}
}

@inproceedings{
  liu2023sparsity,
  title={Sparsity-constrained optimal transport},
  author={Tianlin Liu, Joan Puigcerver, Mathieu Blondel},
  booktitle={Proceedings of the Eleventh International Conference on Learning Representations (ICLR)},
  year={2023},
  url={https://openreview.net/forum?id=yHY9NbQJ5BP}
}

@inproceedings{lewis2021base,
  title={Base layers: Simplifying training of large, sparse models},
  author={Lewis, Mike and Bhosale, Shruti and Dettmers, Tim and Goyal, Naman and Zettlemoyer, Luke},
  booktitle={International Conference on Machine Learning},
  pages={6265--6274},
  year={2021},
  organization={PMLR}
}

@inproceedings{
puigcerver2024from,
title={From Sparse to Soft Mixtures of Experts},
author={Joan Puigcerver and Carlos Riquelme Ruiz and Basil Mustafa and Neil Houlsby},
booktitle={The Twelfth International Conference on Learning Representations},
year={2024},
url={https://openreview.net/forum?id=jxpsAj7ltE}
}

@article{fedus2022switch,
  title={Switch transformers: Scaling to trillion parameter models with simple and efficient sparsity},
  author={Fedus, William and Zoph, Barret and Shazeer, Noam},
  journal={Journal of Machine Learning Research},
  volume={23},
  number={120},
  pages={1--39},
  year={2022}
}

@inproceedings{locmoe_ijcai24,
  added-at = {2024-10-18T00:00:00.000+0200},
  author = {Li, Jing and Sun, Zhijie and He, Xuan and Zeng, Li and Lin, Yi and Li, Entong and Zheng, Binfan and Zhao, Rongqian and Chen, Xin},
  biburl = {https://www.bibsonomy.org/bibtex/231eb1f2b7e1530a1ea6c7f8bac63c9dd/dblp},
  booktitle = {IJCAI},
  ee = {https://www.ijcai.org/proceedings/2024/705},
  interhash = {80869c719e4e0853e0797572075fbeb2},
  intrahash = {31eb1f2b7e1530a1ea6c7f8bac63c9dd},
  isbn = {978-1-956792-04-1},
  keywords = {dblp},
  pages = {6377-6387},
  publisher = {ijcai.org},
  timestamp = {2024-10-21T07:12:03.000+0200},
  title = {LocMoE: A Low-overhead MoE for Large Language Model Training.},
  url = {http://dblp.uni-trier.de/db/conf/ijcai/ijcai2024.html#LiSHZLLZZC24},
  year = 2024
}

@article{MoELLaVA,
  added-at = {2024-11-15T00:00:00.000+0100},
  author = {Lin, Bin and Tang, Zhenyu and Ye, Yang and Cui, Jiaxi and Zhu, Bin and Jin, Peng and Zhang, Junwu and Ning, Munan and Yuan, Li},
  biburl = {https://www.bibsonomy.org/bibtex/2ad730677932bf27f4941e8173c832c25/dblp},
  ee = {https://doi.org/10.48550/arXiv.2401.15947},
  interhash = {5f7ff2d81d138253febbefc1d2944ec5},
  intrahash = {ad730677932bf27f4941e8173c832c25},
  journal = {CoRR},
  keywords = {dblp},
  timestamp = {2024-11-18T07:05:22.000+0100},
  title = {MoE-LLaVA: Mixture of Experts for Large Vision-Language Models.},
  url = {http://dblp.uni-trier.de/db/journals/corr/corr2401.html#abs-2401-15947},
  volume = {abs/2401.15947},
  year = 2024
}

@inproceedings{openmoe,
author = {Xue, Fuzhao and Zheng, Zian and Fu, Yao and Ni, Jinjie and Zheng, Zangwei and Zhou, Wangchunshu and You, Yang},
title = {OpenMoE: an early effort on open mixture-of-experts language models},
year = {2024},
publisher = {JMLR.org},
booktitle = {Proceedings of the 41st International Conference on Machine Learning},
articleno = {2293},
numpages = {31},
location = {Vienna, Austria},
series = {ICML'24}
}

\begin{IEEEbiography}[{\includegraphics
[width=1in,height=1.25in,clip,
keepaspectratio]{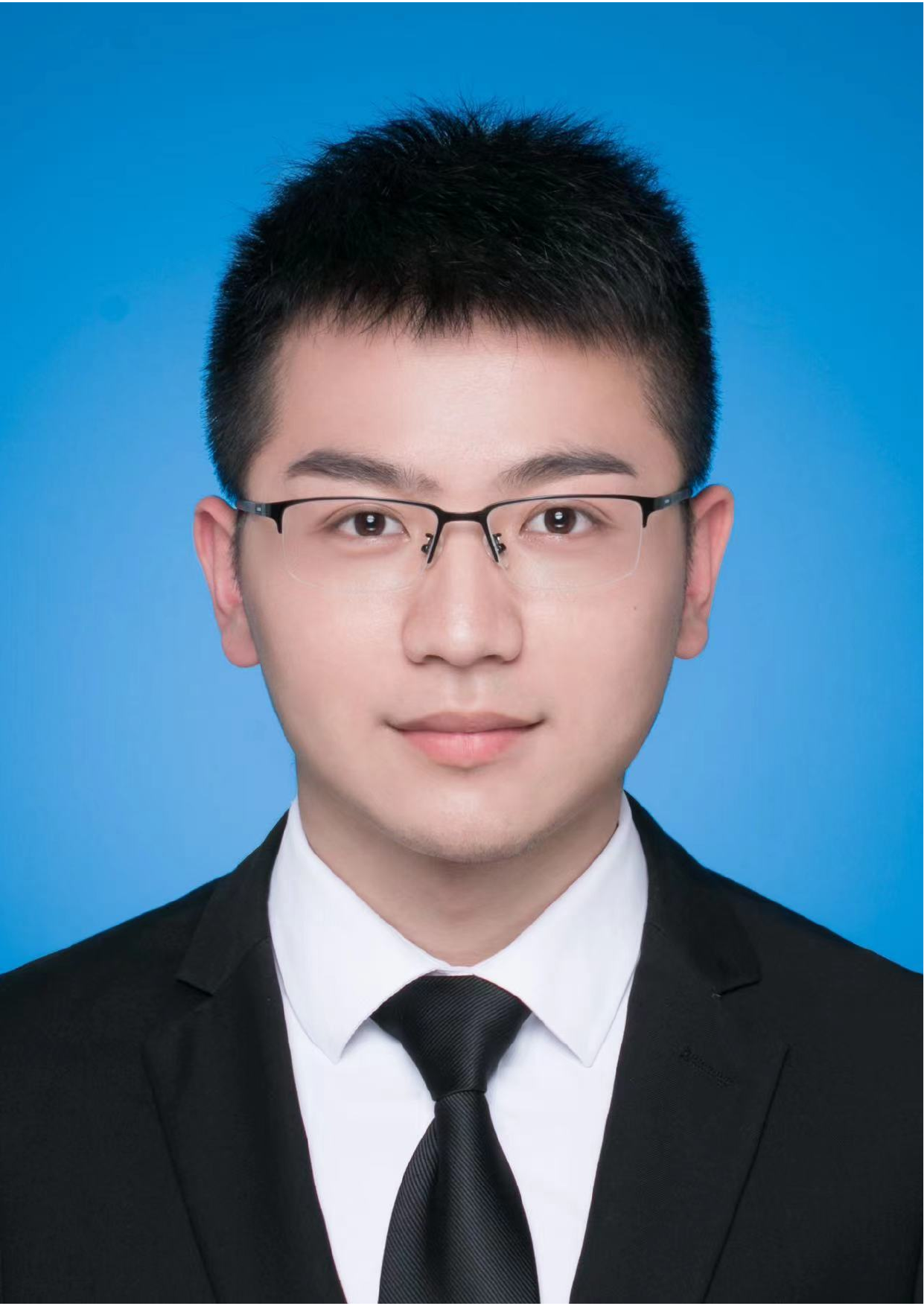}}]
{Zhiqiang He} is currently a Ph.D at the University of Electro-Communications. He received his MS degree in Control Science and Engineering from Northeastern University, Shenyang, China. His research interests focus on deep reinforcement learning and its control applications. He previously worked at Baidu and InspirAI, where he developed a master-level AI for the game Landlord that outperformed professional players.
\end{IEEEbiography}

\vspace{5pt}

\begin{IEEEbiography}
[{\includegraphics[width=1in,height=1.25in,clip,keepaspectratio]{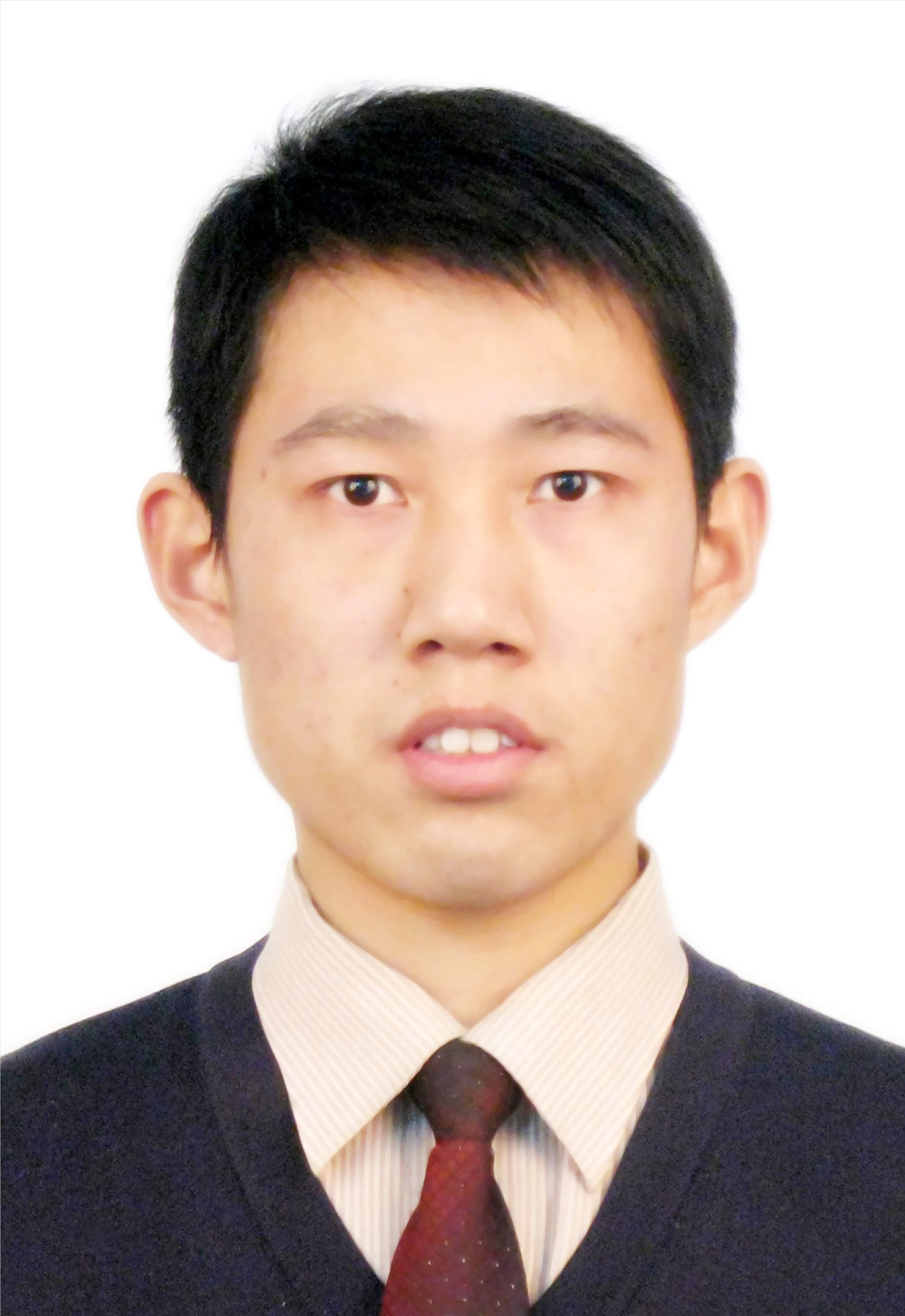}}]
{Zhi Liu} (S’11-M’14-SM’19) received the Ph.D. degree in informatics in National Institute of Informatics. He is currently an Associate Professor at the University of Electro-Communications. His research interest includes video network transmission and MEC. He is now an editorial board member of IEEE Transactions on Multimedia, IEEE Networks and Internet of Things Journal. He is a senior member of IEEE.
\end{IEEEbiography}

\vfill

\newpage
\onecolumn
\newpage
\twocolumn
\appendix

\subsection{Proof of Gradient Co-coercivity Lemma Proof.}\label{appendix_proof_gradient_co_coercivity_lemma}

\textbf{Lemma \ref{GradientCocoercivityLemma}:}  (\textbf{Gradient Co-coercivity Lemma.})
Let $L_t:\mathbb{R}^d\to\mathbb{R}$ be an $L$–smooth function; that is, for all $\boldsymbol{w},\boldsymbol{w}'\in\mathbb{R}^d$, $\|\nabla L_t(\boldsymbol{w})-\nabla L_t(\boldsymbol{w}')\| \le L\|\boldsymbol{w}-\boldsymbol{w}'\|$. Then, for any $\boldsymbol{w}_i^t,\boldsymbol{w}_t^*\in\mathbb{R}^d$, we have
\begin{align}\label{eq:main-ineq_appendix}
\|\nabla L_t(\boldsymbol{w}_i^t) & -\nabla L_t(\boldsymbol{w}_t^*)\|^2  \\ 
& \le L\, (\nabla L_t(\boldsymbol{w}_i^t)-\nabla L_t(\boldsymbol{w}_t^*))^{T} (\boldsymbol{w}_i^t-\boldsymbol{w}_t^*). \nonumber
\end{align}

\begin{proof}
Since $L_t$ is $L$–smooth, we immediately have $\|\nabla L_t(\boldsymbol{w}_i^t)-\nabla L_t(\boldsymbol{w}_t^*)\| \le L\|\boldsymbol{w}_i^t-\boldsymbol{w}_t^*\|$. Multiplying both sides by $\|\boldsymbol{w}_i^t-\boldsymbol{w}_t^*\|$ yields

\begin{align}
    0 & \leq L \left \| \boldsymbol{w}_{i}^{t} - \boldsymbol{w}_{t}^{*} \right \|^{2} -  \left \| \nabla L_{t}(\boldsymbol{w}_{i}^{t}) - \nabla L_{t}(\boldsymbol{w}_{t}^{*}) \right \|  \left \| \boldsymbol{w}_{i}^{t} - \boldsymbol{w}_{t}^{*} \right \| \nonumber \\
    & \leq L \left \| \boldsymbol{w}_{i}^{t} - \boldsymbol{w}_{t}^{*} \right \|^{2} - (\nabla L_{t}(\boldsymbol{w}_{i}^{t}) - \nabla L_{t}(\boldsymbol{w}_{t}^{*}))^{T}(\boldsymbol{w}_{i}^{t} - \boldsymbol{w}_{t}^{*}) \nonumber \\
    & \leq (L \boldsymbol{w}_{i}^{t} - \nabla L_{t}(\boldsymbol{w}_{i}^{t})- L \boldsymbol{w}_{t}^{*} + \nabla L_{t}(\boldsymbol{w}_{t}^{*}))(\boldsymbol{w}_{i}^{t} - \boldsymbol{w}_{t}^{*}). \nonumber
\end{align}

Next, define the auxiliary function $g(\boldsymbol{w}) := \frac{1}{2} L \boldsymbol{w}^{T} \boldsymbol{w} - L_{t}(\boldsymbol{w})$, so that $\nabla g(\boldsymbol{w}) = L\,\boldsymbol{w} - \nabla L_t(\boldsymbol{w})$. Because $L_t$ is $L$–smooth, one may verify that $g(\boldsymbol{w})$ is convex. Hence, for any $\boldsymbol{w}_i^t$ and $\boldsymbol{w}_t^*$ we have $g(\boldsymbol{w}_i^t) \ge g(\boldsymbol{w}_t^*) + \nabla g(\boldsymbol{w}_t^*)^T (\boldsymbol{w}_i^t-\boldsymbol{w}_t^*)$, that is,
\begin{align}\label{eq:convex-g}
\frac{L}{2}&\|\boldsymbol{w}_i^t\|^2 - L_t(\boldsymbol{w}_i^t) \\ 
& \ge \frac{L}{2}\|\boldsymbol{w}_t^*\|^2 - L_t(\boldsymbol{w}_t^*) + \left(L\boldsymbol{w}_t^* - \nabla L_t(\boldsymbol{w}_t^*)\right)^T (\boldsymbol{w}_i^t-\boldsymbol{w}_t^*). \nonumber
\end{align}
Rearranging \eqref{eq:convex-g} gives
\begin{align}\label{eq:diff-1}
\frac{L}{2}\|\boldsymbol{w}_i^t & -\boldsymbol{w}_t^*\|^2 \ge \nonumber \\
& L_t(\boldsymbol{w}_i^t) - L_t(\boldsymbol{w}_t^*) - \nabla L_t(\boldsymbol{w}_t^*)^T (\boldsymbol{w}_i^t-\boldsymbol{w}_t^*).
\end{align}
Now, introduce the point $z = \boldsymbol{w}_t^* + \frac{1}{L}\Bigl(\nabla L_t(\boldsymbol{w}_i^t)-\nabla L_t(\boldsymbol{w}_t^*)\Bigr)$. Using a standard descent argument and the smoothness of $L_t$, one can write
\begin{align}
L_t(\boldsymbol{w}_t^*)-L_t(\boldsymbol{w}_i^t)
&=\Bigl[L_t(\boldsymbol{w}_t^*)-L_t(z)\Bigr] + \Bigl[L_t(z)-L_t(\boldsymbol{w}_i^t)\Bigr] \nonumber\\[1mm]
&\ge -\nabla L_t(\boldsymbol{w}_t^*)^T (z-\boldsymbol{w}_t^*) - \frac{L}{2}\|z-\boldsymbol{w}_t^*\|^2 \nonumber\\[1mm]
&\quad\; + \nabla L_t(\boldsymbol{w}_i^t)^T (z-\boldsymbol{w}_i^t). \label{eq:diff-2}
\end{align}
Observing that $z-\boldsymbol{w}_t^*=\frac{1}{L}\Bigl(\nabla L_t(\boldsymbol{w}_i^t)-\nabla L_t(\boldsymbol{w}_t^*)\Bigr)$ and $z-\boldsymbol{w}_i^t = (\boldsymbol{w}_t^*-\boldsymbol{w}_i^t) + \frac{1}{L}\Bigl(\nabla L_t(\boldsymbol{w}_i^t)-\nabla L_t(\boldsymbol{w}_t^*)\Bigr)$, one obtains after elementary algebra that \eqref{eq:diff-2} can be rearranged to
\begin{align}\label{eq:diff-3}
L_t(\boldsymbol{w}_t^*)&-L_t(\boldsymbol{w}_i^t) \ge \\
&\nabla L_t(\boldsymbol{w}_i^t)^T (\boldsymbol{w}_t^*-\boldsymbol{w}_i^t) + \frac{1}{2L}\|\nabla L_t(\boldsymbol{w}_i^t)-\nabla L_t(\boldsymbol{w}_t^*)\|^2. \nonumber
\end{align}
In other words,
\begin{align}
L_t(\boldsymbol{w}_t^*) &\ge L_t(\boldsymbol{w}_i^t) + \nabla L_t(\boldsymbol{w}_i^t)^T (\boldsymbol{w}_t^*-\boldsymbol{w}_i^t) + \\
& \frac{1}{2L}\|\nabla L_t(\boldsymbol{w}_i^t)-\nabla L_t(\boldsymbol{w}_t^*)\|^2, \nonumber \\[1mm]
L_t(\boldsymbol{w}_i^t) &\ge L_t(\boldsymbol{w}_t^*) + \nabla L_t(\boldsymbol{w}_t^*)^T (\boldsymbol{w}_i^t-\boldsymbol{w}_t^*) + \\
& \frac{1}{2L}\|\nabla L_t(\boldsymbol{w}_i^t)-\nabla L_t(\boldsymbol{w}_t^*)\|^2. \nonumber
\end{align}
Adding the two relations (or, equivalently, combining them appropriately) leads to
\begin{equation}\label{eq:final-step}
L\,\Bigl(\nabla L_t(\boldsymbol{w}_i^t)-\nabla L_t(\boldsymbol{w}_t^*)\Bigr)^T(\boldsymbol{w}_i^t-\boldsymbol{w}_t^*) \ge \|\nabla L_t(\boldsymbol{w}_i^t)-\nabla L_t(\boldsymbol{w}_t^*)\|^2. \nonumber
\end{equation}
This completes the proof of \eqref{eq:main-ineq_appendix}.
\end{proof}

\subsection{Gradient Strong Convexity Lemma Proof.}\label{appendix_proof_strongconvexity_lemma}

\textbf{Lemma \ref{GradientStrongConvexity}:}  (\textbf{Gradient Strong Convexity Lemma.}) Let $L_t:\mathbb{R}^d\to\mathbb{R}$ be a $\mu$–strongly convex function; that is, for all $\boldsymbol{w},\boldsymbol{w}'\in\mathbb{R}^d$, $L_t(\boldsymbol{w}') \ge L_t(\boldsymbol{w}) + \nabla L_t(\boldsymbol{w})^{T} (\boldsymbol{w}'-\boldsymbol{w}) + \frac{\mu}{2}\|\boldsymbol{w}'-\boldsymbol{w}\|^2$. Then, for any $\boldsymbol{w}_i^t,\boldsymbol{w}_t^*\in\mathbb{R}^d$, it holds that
\begin{equation}\label{eq:strong-convexity_appendi}
\left(\nabla L_t(\boldsymbol{w}_i^t) - \nabla L_t(\boldsymbol{w}_t^*)\right)^{T} (\boldsymbol{w}_i^t-\boldsymbol{w}_t^*) \ge \mu \|\boldsymbol{w}_i^t-\boldsymbol{w}_t^*\|^2.
\end{equation}

\begin{proof}
Since $L_t$ is $\mu$–strongly convex, for any $\boldsymbol{w},\boldsymbol{w}'\in\mathbb{R}^d$ we have
\begin{align}
L_t(\boldsymbol{w}') \ge L_t(\boldsymbol{w})& + \nabla L_t(\boldsymbol{w})^{T} (\boldsymbol{w}'-\boldsymbol{w}) + \nonumber \\
& \frac{\mu}{2}\|\boldsymbol{w}'-\boldsymbol{w}\|^2, \label{eq:sc1}\\[1mm]
L_t(\boldsymbol{w}) \ge L_t(\boldsymbol{w}')& + \nabla L_t(\boldsymbol{w}')^{T} (\boldsymbol{w}-\boldsymbol{w}') + \nonumber \\
& \frac{\mu}{2}\|\boldsymbol{w}-\boldsymbol{w}'\|^2. \label{eq:sc2}
\end{align}
Adding \eqref{eq:sc1} and \eqref{eq:sc2} yields $0 \ge \nabla L_t(\boldsymbol{w})^{T} (\boldsymbol{w}'-\boldsymbol{w}) + \nabla L_t(\boldsymbol{w}')^{T} (\boldsymbol{w}-\boldsymbol{w}') + \mu \|\boldsymbol{w}'-\boldsymbol{w}\|^2$. Noting that $\nabla L_t(\boldsymbol{w}')^{T} (\boldsymbol{w}-\boldsymbol{w}') = -\nabla L_t(\boldsymbol{w}')^{T} (\boldsymbol{w}'-\boldsymbol{w})$, we deduce $\left(\nabla L_t(\boldsymbol{w}) - \nabla L_t(\boldsymbol{w}')\right)^{T} (\boldsymbol{w}'-\boldsymbol{w}) \ge \mu \|\boldsymbol{w}'-\boldsymbol{w}\|^2$. Setting $\boldsymbol{w} = \boldsymbol{w}_i^t$ and $\boldsymbol{w}' = \boldsymbol{w}_t^*$ completes the proof.
\end{proof}

\subsection{Expert Tracking Theory Proof}\label{appendix_proof_distill_enhance}

\textbf{Lemma \ref{thm:error-bound}:}  (\textbf{Parameterized Norm Inequality.}) Let $a, b$ be elements of a real inner product space and $\alpha > 0$. Then, $\|a+b\|^2 \leq(1+\alpha)\|a\|^2+\left(1+\alpha^{-1}\right)\|b\|^2$.
\begin{proof}
    First expand the left-hand side, $\|a+b\|^2=\|a\|^2+2\langle a, b\rangle+\|b\|^2$. The right-hand side expands to, $(1+\alpha)\|a\|^2+\left(1+\alpha^{-1}\right)\|b\|^2=\|a\|^2+\alpha\|a\|^2+\|b\|^2+\alpha^{-1}\|b\|^2$. Comparing both sides, it suffices to show, $2\langle a, b\rangle \leq \alpha\|a\|^2+\alpha^{-1}\|b\|^2$. This follows from Young's inequality $2xy \leq \alpha x^{2} + \alpha^{-1} y^{2}$ applied to $x = \| a \|$ and $y = \|b\|$, combined with Cauchy-Schwarz inequality $\langle a, b\rangle \leq\|a\|\|b\|$.
\end{proof}

\textbf{Theorem \ref{thm:error-bound}:}  (\textbf{Expert Tracking Theory.}) 
Under Assumptions (A1)--(A3) and with the update rule, $\boldsymbol{w}_i^{t+1} = \boldsymbol{w}_i^t - \eta\, \nabla L_t\bigl(\boldsymbol{w}_i^t\bigr) + \eta\, \gamma\, \epsilon_i^t, \quad \epsilon_i^t \sim \mathcal{N}\left(0, I_d\right)$, there exists a constant $C>0$ (depending only on $L$ and $\mu$) such that the average squared error satisfies, $\frac{1}{T} \sum_{t=1}^{T} \mathbb{E}\|e_t\|^2 \le C \left(\eta \gamma^2 d + \frac{P_T^2}{T \eta}\right)$, where the error is defined as $e_t = \boldsymbol{w}_i^t - \boldsymbol{w}_t^*$ and the path length of the optimal parameters is $P_T = \sum_{t=1}^{T-1}\|\boldsymbol{w}_{t+1}^* - \boldsymbol{w}_t^*\|$.
\begin{proof}
Define the drift of the optimal parameter as $\Delta_t := \boldsymbol{w}_{t+1}^* - \boldsymbol{w}_t^*$. Then the error update is
\begin{align}
& e_{t+1} = \boldsymbol{w}_i^{t+1} - \boldsymbol{w}_{t+1}^* \nonumber\\[1mm]
&= \boldsymbol{w}_i^t - \eta\, \nabla L_t\bigl(\boldsymbol{w}_i^t\bigr) + \eta\, \gamma\, \epsilon_i^t - \boldsymbol{w}_{t+1}^* \nonumber\\[1mm]
&= \underbrace{\left[\boldsymbol{w}_i^t - \boldsymbol{w}_t^* - \eta\Bigl(\nabla L_t\bigl(\boldsymbol{w}_i^t\bigr) - \nabla L_t\bigl(\boldsymbol{w}_t^*\bigr)\Bigr)\right]}_{=: A_t} - \Delta_t + \underbrace{\eta\, \gamma\, \epsilon_i^t}_{=: E_t} \nonumber \\
& =\underbrace{\left(e_t-\eta\left(\nabla L_t\left(\boldsymbol{w}_i^t\right)-\nabla L_t\left(\boldsymbol{w}_t^*\right)\right)\right)}_{\text {Contraction Term } A_t}-\underbrace{\Delta_t}_{\text {Drift Term }}+\underbrace{\eta \gamma \epsilon_i^t}_{\text {Noise Term }}. \nonumber
\end{align}
Thus, $e_{t+1} = A_t - \Delta_t + E_t$.

\textbf{Step 1: Contraction Property of \(A_t\).} \quad  
We now show that under the conditions of \(\mu\)–strong convexity and \(L\)– smoothness (with assumption \(\eta \le 1/L\)), it holds that $\|A_t\|^2 \le (1 - \mu \eta)\,\|e_t\|^2$. 

Indeed, by definition, $A_t = e_t - \eta\Bigl(\nabla L_t(\boldsymbol{w}_i^t)-\nabla L_t(\boldsymbol{w}_t^*)\Bigr)$. Expanding the square norm, we obtain:

\begin{align}
    \|A_t\|^2 = \|e_t\|^2 -  2\eta\, \langle e_t, & \, \nabla L_t(\boldsymbol{w}_i^t)-\nabla L_t(\boldsymbol{w}_t^*)\rangle   \\ 
    & + \eta^2\, \|\nabla L_t(\boldsymbol{w}_i^t)-\nabla L_t(\boldsymbol{w}_t^*)\|^2. \nonumber
\end{align}
Since \(L_t\) is \(\mu\)–strongly convex, it satisfies (Lemma \ref{GradientStrongConvexity})
\[
\langle e_t,\, \nabla L_t(\boldsymbol{w}_i^t)-\nabla L_t(\boldsymbol{w}_t^*)\rangle \ge \mu\, \|e_t\|^2.
\]
Moreover, by \(L\)–smoothness (Lemma \ref{GradientCocoercivityLemma}), $\|\nabla L_t(\boldsymbol{w}_i^t)-\nabla L_t(\boldsymbol{w}_t^*)\|^2 \le L\, \langle e_t,\, \nabla L_t(\boldsymbol{w}_i^t)-\nabla L_t(\boldsymbol{w}_t^*)\rangle$.

Hence, $\eta^2\, \|\nabla L_t(\boldsymbol{w}_i^t)-\nabla L_t(\boldsymbol{w}_t^*)\|^2 \le \eta^2 L\, \langle e_t,\, \nabla L_t(\boldsymbol{w}_i^t)-\nabla L_t(\boldsymbol{w}_t^*)\rangle$. Substituting these bounds back, we have

\begin{align}
\|A_t\|^2 \le \|e_t\|^2 - 2\eta\, \langle e_t,\, & \nabla L_t(\boldsymbol{w}_i^t)-\nabla L_t(\boldsymbol{w}_t^*)\rangle \nonumber \\
& + \eta^2 L\, \langle e_t,\, \nabla L_t(\boldsymbol{w}_i^t)-\nabla L_t(\boldsymbol{w}_t^*)\rangle. \nonumber
\end{align}

That is, $\|A_t\|^2 \le \|e_t\|^2 - \eta\Bigl(2-\eta L\Bigr) \langle e_t,\, \nabla L_t(\boldsymbol{w}_i^t)-\nabla L_t(\boldsymbol{w}_t^*)\rangle$. Using the strong convexity bound \(\langle e_t,\, \nabla L_t(\boldsymbol{w}_i^t)-\nabla L_t(\boldsymbol{w}_t^*)\rangle \ge \mu\, \|e_t\|^2\) leads to $\|A_t\|^2 \le \left[1 - \eta\mu\Bigl(2-\eta L\Bigr)\right]\,\|e_t\|^2$.

Since \(\eta \le \frac{1}{L}\) implies \(2-\eta L \ge 1\), it follows that $\|A_t\|^2 \le (1 - \eta\mu)\,\|e_t\|^2$.

\textbf{Step 2: Recursive Inequality with Noise-Drift Coupling.} \quad  

Expanding $\|e_{t+1} \|^{2}$:
\begin{align}
    \left\|e_{t+1}\right\|^2 & \leq \\ 
    & \left\|A_t-\Delta_t\right\|^2+2 \eta \gamma\left\langle A_t-\Delta_t, \epsilon_i^t\right\rangle+\eta^2 \gamma^2\left\|\epsilon_i^t\right\|^2. \nonumber
\end{align}
Taking expectations (note $\mathbb{E} \left[ \epsilon_{i}^{t} \right] = 0$):
\begin{equation}
    \mathbb{E}\left\|e_{t+1}\right\|^2 \leq \mathbb{E}\left\|A_t-\Delta_t\right\|^2 + \eta^2 \gamma^2 d.
\end{equation}
Applying Lemma \ref{thm:error-bound}: $\|a+b\|^2 \leq(1+\alpha)\|a\|^2+\left(1+\alpha^{-1}\right)\|b\|^2$ (for $\alpha > 0$) to $\mathbb{E}\left\|A_t-\Delta_t\right\|^2$:
\begin{equation}
    \mathbb{E}\left\|e_{t+1}\right\|^2 \leq(1+\alpha)(1-\mu \eta) \mathbb{E}\left\|e_t\right\|^2+\left(1 \! + \! \alpha^{-1}\right)\left\|\Delta_t\right\|^2 \! + \! \eta^2 \gamma^2 d . \nonumber
\end{equation}

Set $\alpha=\mu \eta / 2$ ensuring $(1 + \alpha)(1 - \mu \eta) \leq 1 - \mu \eta / 2$. This yield:
\begin{equation}
    \mathbb{E}\left\|e_{t+1}\right\|^2 \leq\left(1-\frac{\mu \eta}{2}\right) \mathbb{E}\left\|e_t\right\|^2+\frac{2}{\mu \eta}\left\|\Delta_t\right\|^2+\eta^2 \gamma^2 d .
\end{equation}

\textbf{Step 3: Telescoping Sum and Final Bound.} \quad 

Unrolling the recursion over $t$:
\begin{align}
    \mathbb{E}\left\|e_t\right\|^2 \! \leq\left(1-\frac{\mu \eta}{2}\right)^t \mathbb{E}&\left\|e_0\right\|^2 \! +\! \frac{2}{\mu \eta} \! \sum_{k=0}^{t-1}\left(1-\frac{\mu \eta}{2}\right)^{t-1-k}\left\|\Delta_k\right\|^2 \nonumber \\
    &+ \eta^2 \gamma^2 d \sum_{k=0}^{t-1}\left(1-\frac{\mu \eta}{2}\right)^k.
\end{align}
Summing and averaging the Error from $t=1$ to $T$. The initial error term can be expressed as:
\begin{equation}
    \frac{1}{T} \sum_{t=1}^T\left(1-\frac{\mu \eta}{2}\right)^t \mathbb{E}\left\|e_0\right\|^2 \leq \frac{1}{T} \cdot \frac{2}{\mu \eta} \mathbb{E}\left\|e_0\right\|^2.
\end{equation}

For the drift term, interchange summation order and utilizing the geometric series formula:
\begin{align}
    \frac{1}{T} \! \sum_{t=1}^T \! \frac{2}{\mu \eta} \! \sum_{k=0}^{t-1} \! \left(1 \! - \! \frac{\mu \eta}{2} \! \right)^{t \! - \! 1 \! - \!k} \!\left\|\!\Delta_k\!\right\|^2 \! \leq \! \frac{1}{T}\! \cdot\! \frac{4}{\mu^2 \eta^2} \! \sum_{k=0}^{T-1} \! \left\|\!\Delta_k\!\right\|^2.
\end{align}
The noise contribution at each time step follows a geometric series: $\frac{1}{T} \sum_{t=1}^{T} \eta^{2} \gamma^{2} d \sum_{k=0}^{t-1} (1 -\frac{\mu \eta}{2})^{k} \leq \frac{1}{T} \cdot \frac{2 \eta \gamma^{2}dT}{\mu} $. For large \(T\) the initial error term is negligible, and after a refined constant adjustment, we arrive at

\begin{align}
    \frac{1}{T} \sum_{t=1}^{T} \mathbb{E}\|e_t\|^2 & \leq \frac{4P_{T}^{2}}{ \mu^{2} \eta^{2} T} + \frac{2 \eta \gamma^{2} d}{\mu}\\
    & \leq C \left(\eta \gamma^2 d + \frac{P_T^2}{T \eta}\right).
\end{align}

Here, we absorb the coefficients of the higher-order terms into a constant $C$ and implicitly impose the step-size constraint $\frac{4}{\mu^{2}\eta^{2}} \leq \frac{C}{\eta}$. In practical algorithms, the step size is usually small (e.g., $\eta \leq \frac{1}{L}$); thus, this constraint naturally holds in most scenarios. In doing so, we sacrifice some precision but are better able to capture the dynamic balance between the two factors.
\end{proof}



\subsection{Appendix of Impact of Shift QoE on RL}
\label{shift_qoe_qoe_problem}

\begin{figure}[!htbp]
\centering
\includegraphics[width=2.5in]{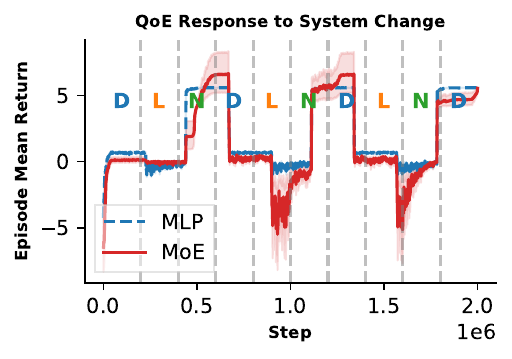}%
\caption{QoE output variation across different network architectures under shifted QoE reward conditions. D, L, and N represent distinct QoE metrics.}
\label{qoe_change_question}
\end{figure}

In this subsection, we explain why we use the mean of the action outputs as the metric for evaluating the MLP's inability to handle QoE shifts. When the objective coefficients change abruptly, overall QoE may fluctuate even if the MLP’s output actions remain unchanged, as shown in Fig. \ref{qoe_change_question}. Such QoE fluctuations are uninformative. For example, if the coefficient of the bitrate term in the QoE function changes from 1 to 6, the QoE contribution of bitrate rises sixfold, even though the policy is effectively unresponsive. Therefore, during objective-function shifts, performance cannot be judged solely by QoE values, as they may convey a false sense of progress. This phenomenon may partly explain why many learning-based methods struggle to generalize under dynamically changing QoE objectives.

\subsection{Impact of QoE Coefficient Variations on Plasticity}
\label{diverse_beta}

In this subsection, we analyze different values of the QoE coefficient $\beta$ to investigate whether the issue of plasticity loss persists under varying QoE preferences. Specifically, we conduct experiments with $\beta = 5$ and $\beta = 7$. As shown in Figure~\ref{beta}, the MLP network produces nearly identical action outputs across different $\beta$ values, indicating a lack of plasticity. This demonstrates that the problem of plasticity loss exists regardless of the specific QoE coefficient used.

\begin{figure}[!htbp]
\centering
\includegraphics[width=2.5in]{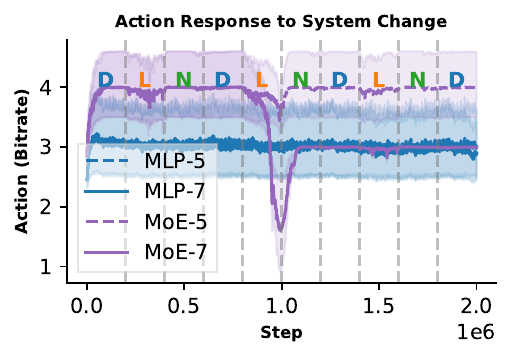}%
\caption{Action output variation across different $\beta$.}
\label{beta}
\end{figure}

\subsection{Algorithm pseudocode}
\label{Algorithm pseudocode}

To facilitate implementation, Algorithm \ref{alg:ppo_pamoe} provides complete pseudocode with explicit definitions of routing and plasticity-aware noise update rules.

\begin{algorithm}
\caption{PPO with Plasticity Aware MoE}
\label{alg:ppo_pamoe}

\KwIn{Env $env$; policy params $\pi_{\omega}$ (router + actor experts); value params $V_{\omega}$ (router + critic experts); rollout length $T$; update epochs $K$.}
\KwOut{Optimized policy $\boldsymbol{\omega}$}

\For{iteration $=1,2,\dots$}{
  Reset trajectory buffer; $\mathcal{B} \gets \varnothing$
  
  \For{$t=1,\dots,T$}{
    Observe state $s_t$\;

    Compute router features $h^{\pi_{\omega}}_t \gets f_{\text{router}}^{\pi_{\omega}}(s_t)$\;
    Compute clean logits $u^{\pi_{\omega}} \gets W^{\pi_{\omega}}_{\text{topk}} h^{\pi_{\omega}}_t$\;
    Add router noise scale $z^{\pi_{\omega}}$\;
    $n^\star \gets \arg\max_j \big(u^{\pi_{\omega}}_j + z^{\pi_{\omega}}_j\big)$\tcp*{Top-1 expert index}
    $e^{\pi_{\omega}}_t \gets \text{Expert}^{\pi_{\omega}}_{n^\star}(s_t)$ \tcp*{selected actor expert}
    $\ell_t \gets W_{\text{policy}} e^{\pi_{\omega}}_t$\tcp*{action logits}
    Sample $a_t \sim \text{Categorical}(\text{logits}=\ell_t)$ and save $\log\pi^{\omega}(a_t|s_t)$\;

    Compute value $h^{V_{\omega}}_t,\,u^{V_{\omega}},\,z^{V_{\omega}}$ analogously\;
    $m^\star \gets \arg\max_j \big(u^{V_{\omega}}_j + z^{V_{\omega}}_j\big)$\;
    $e^{V_{\omega}}_t \gets \text{Expert}^{V_{\omega}}_{m^\star}(s_t)$\;
    $v_t \gets W_{\text{value}} e^{V_{\omega}}_t$\;

    Step env with $a_t$ to get $r_t, s_{t+1}, d_{t+1}$\;
    Push $(s_t,a_t,r_t,d_{t+1},\log\pi_\theta(a_t|s_t), V_t)$ into $\mathcal{B}$\;
    \If{$d_{t+1}$}{reset env and continue}
  }
  
  Compute advantages $A_t$ and returns $R_t$ from $\mathcal{B}$;

  \For{epoch $=1,\dots,K$}{
    \For{each minibatch $B\subset \mathcal{B}$}{
      Update $\boldsymbol{w}$ using noise injection by Eq. (\ref{ppoloss})\;
    }
  }
  
  (optional) Save checkpoints periodically\;
}
\end{algorithm}

\subsection{Expert Contributions to Mitigating Plasticity Loss in PA-MoE} \label{Expert_Contributions_to_Mitigating_Plasticity_Loss_in_PA_MoE}

In this subsection, we introduce the Dormant Neuron Ratio as a quantitative measure of plasticity loss in neural networks. It captures the proportion of inactive neurons during training, reflecting reduced expressiveness and adaptability, especially in continual or multitask learning. Tracking neuron dormancy is thus essential for assessing model capacity utilization in complex modular architectures such as PA-MoE. To formalize this, we introduce the following definition:


\begin{definition}[\textbf{Dormant Neuron}] \label{dormant_neuron} Consider a neural network layer $l$ consisting of $H_l$ neurons. We define the dormancy index $s_{l,i}$ of the $i$-th neuron in layer $l$ as $s_{l,i} = \frac{\mathbb{E}_{\mathbf{x}\in D}|h_{l,i}(\mathbf{x})|}{\frac{1}{H_{l}}\sum_{j \in h}\mathbb{E}_{\mathbf{x}\in D}|h_{l, j}(\mathbf{x})|}$, where $h_{l,i}(\mathbf{x})$ denotes the activation of neuron $(l,i)$ given input $\mathbf{x}$, and $\mathbb{E}_{\mathbf{x} \in D}$ denotes the expectation over input samples drawn from the data distribution $D$.
\end{definition}

A neuron is deemed dormant if its dormancy index $s_{l,i}$ is below a threshold, i.e., its average activation is much lower than the layer’s overall average. A high dormant ratio indicates plasticity loss, with many neurons contributing little to the model’s output.


The experimental results are visualized in Figure~\ref{dormantandzerogradient}, where we plot the Layer Dormant (LD) ratios for each expert in the PA-MoE framework. These visualizations cover both the policy and value networks and provide an insightful cross-layer analysis.

\begin{figure}[!htbp]
\centering
\includegraphics[width=\linewidth]{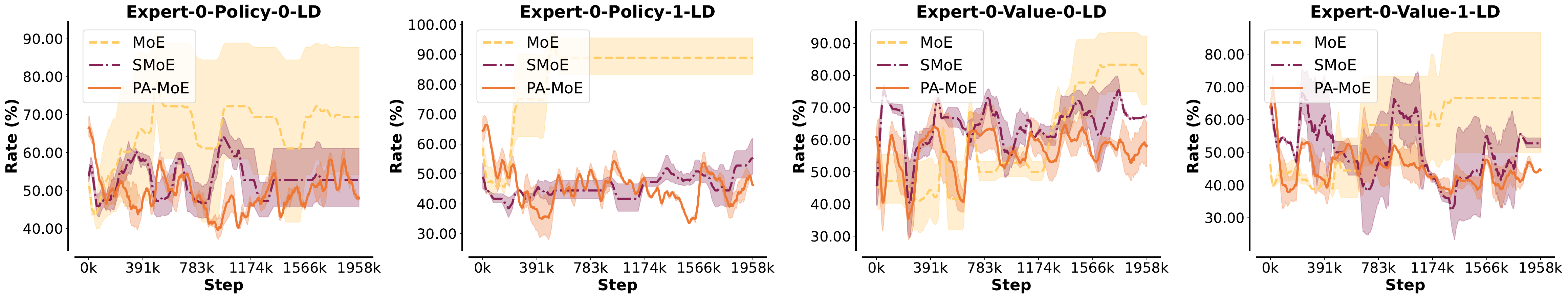}
\includegraphics[width=\linewidth]{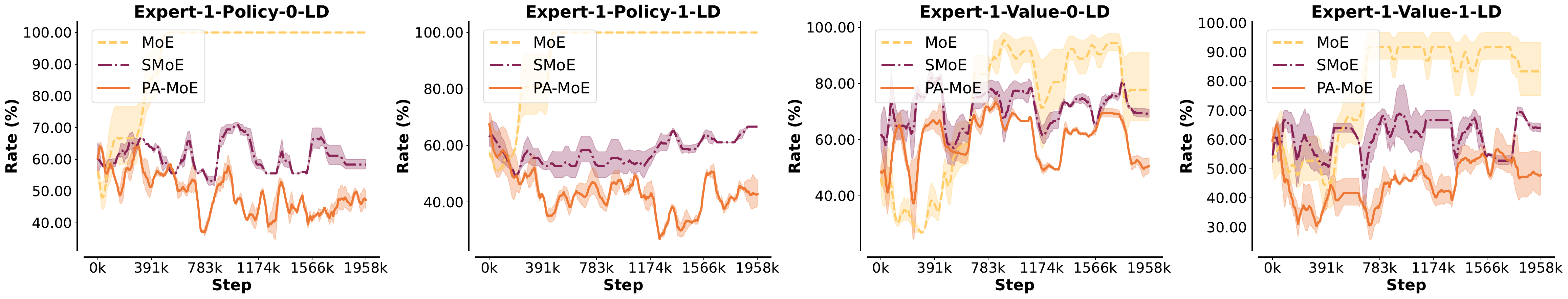}
\includegraphics[width=\linewidth]{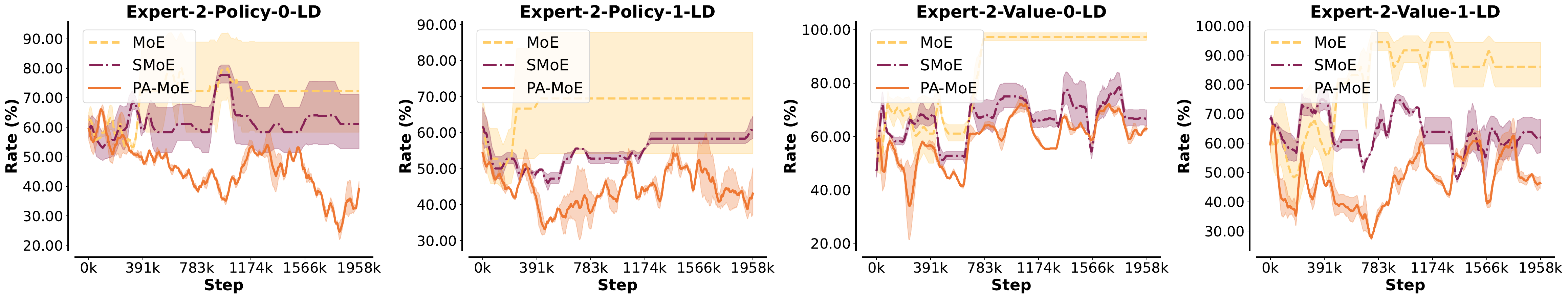}
\caption{Visualization results of the dormant neuron (Layer Dormant, LD) ratios across different layers of the policy and value networks for various experts.}
\label{dormantandzerogradient}
\end{figure}

As shown in the Figure~\ref{dormantandzerogradient}, PA-MoE consistently maintains a lower proportion of dormant neurons across different experts and layers compared to baseline models. This suggests that PA-MoE’s architectural design—particularly its noise-injection mechanism and adaptive routing—helps in better utilization of network capacity, encouraging more balanced activation across neurons and reducing the risk of premature convergence or feature stagnation.

\subsection{Layer-wise Analysis of Expert Rank in PA-MoE} \label{rank}

To better understand plasticity loss, we analyze the feature rank of hidden activations in addition to dormant neurons. While the Dormant Neuron Ratio shows how many neurons are inactive, it does not indicate whether active neurons produce diverse or redundant outputs. Therefore, we propose the Rank of Batch Hidden Layer Output as a complementary diagnostic of representation richness.



\begin{definition}[{\textbf{Rank of Batch Hidden Layer Output}}]
\label{rank_define} Let a hidden layer consist of $H$ neurons. For a batch of inputs of size $B$, the layer's output forms a matrix $M \in \mathbb{R}^{B \times H}$, where each row represents the activation vector of a single sample. We define the rank of $M$ as the number of linearly independent rows (or columns), which is equivalent to the number of nonzero singular values obtained via singular value decomposition (SVD). Formally, if $\{\sigma_{i}\}$ are the singular values of $M$, then $\operatorname{rank}(M)=\#\left\{i \mid \sigma_i>0\right\}$. This metric measures the diversity and dimensionality of hidden representations, indicating the model’s ability to capture distinct input features.

\end{definition}

\begin{figure}[!htbp]
\centering
\includegraphics[width=\linewidth]{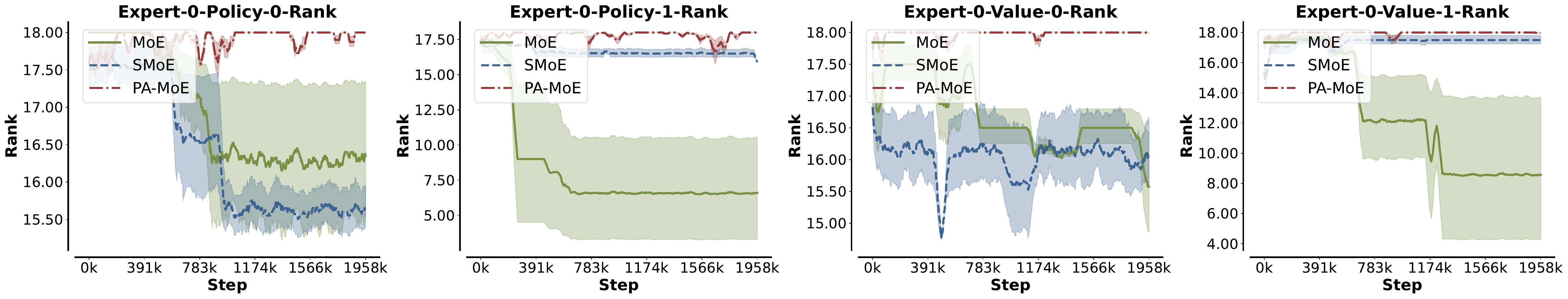}
\includegraphics[width=\linewidth]{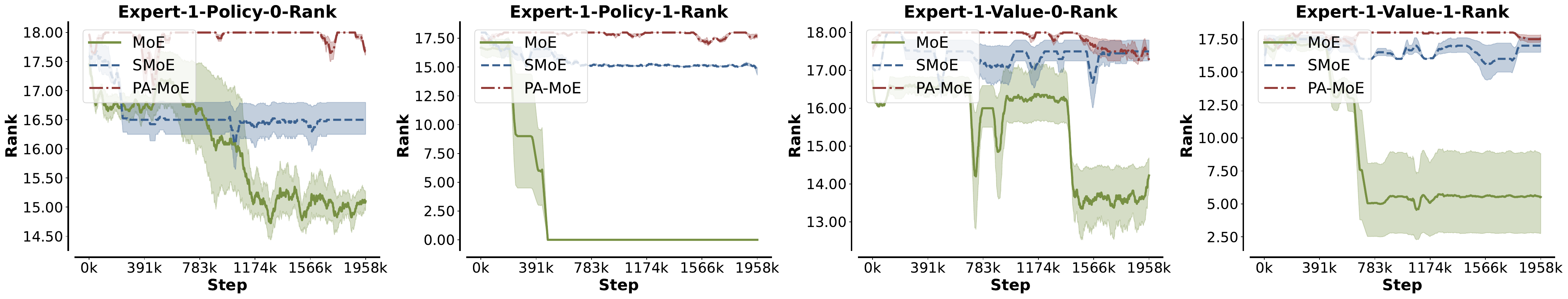}
\includegraphics[width=\linewidth]{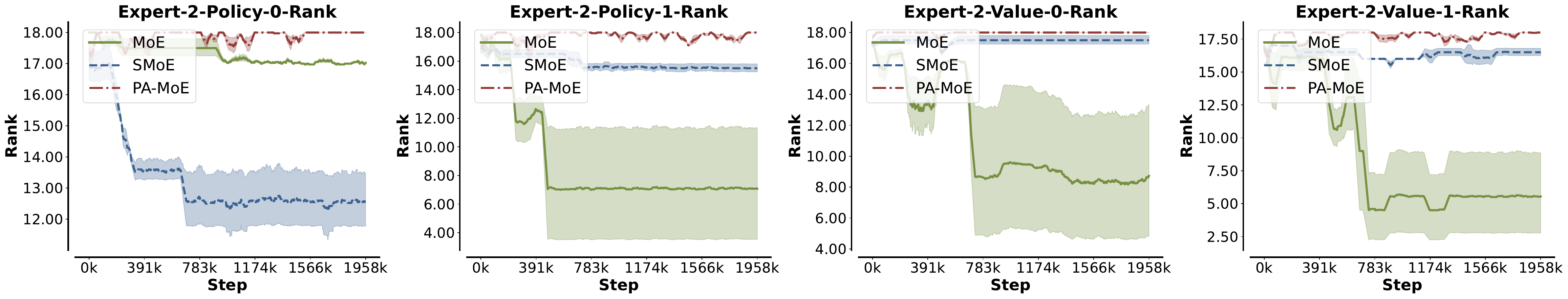}
\caption{Visualization results of the feature rank (Rank) across different layers of the policy and value networks for various experts.}
\label{rank_fig}
\end{figure}

Figure~\ref{rank_fig} illustrates the evolution of feature rank across different experts under various algorithms. It is evident that our approach, PA-MoE, consistently maintains a higher rank, effectively mitigating the loss of plasticity.

\subsection{Layer-wise Analysis of Expert Effective Rank in PA-MoE} \label{effective_rank}

To further understand the representational capacity of different models, we analyze the effective rank of hidden features across the network layers. The effective rank serves as a proxy for the diversity and richness of learned representations. By evaluating this metric, we gain insights into how well each method utilizes the available feature space.

\begin{definition}[{\textbf{Effective Rank}}]
\label{eff_rank} Let $\{\sigma_{i}\}_{i=1}^{n}$ be the singular values of a matrix. Define the normalized singular values as $p_{i} = \frac{|\sigma_{i}|}{\sum_{j=1}^{n}|\sigma_{j}|}$. The Shannon entropy of these normalized values is computed as $H = - \sum_{i=1}^{n} p_{i} \log(p_{i})$, with the convention that $0 \log0=0$. The effective rank is then defined as $r_{\text{eff}} = e^{H}$. This measure reflects the effective dimensionality of the matrix, capturing the spread and diversity of its singular value distribution.
\end{definition}

\begin{figure}[!htbp]
\centering
\includegraphics[width=\linewidth]{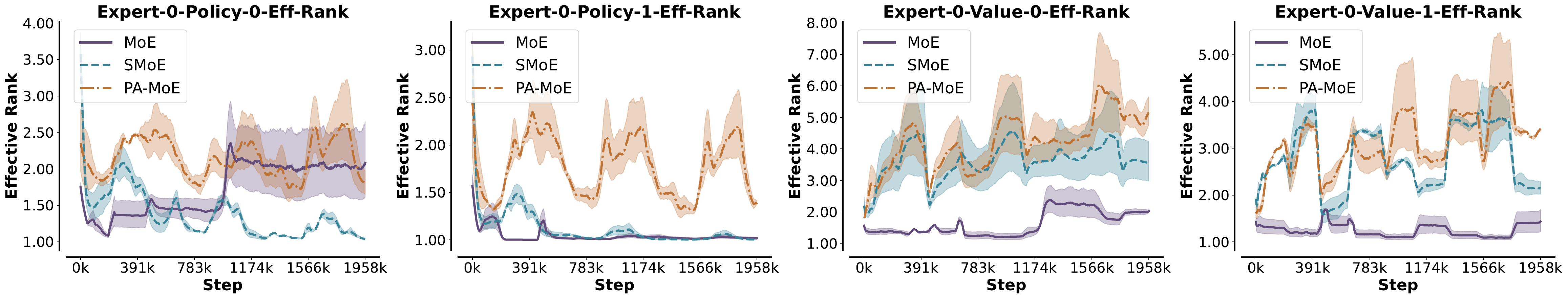}
\includegraphics[width=\linewidth]{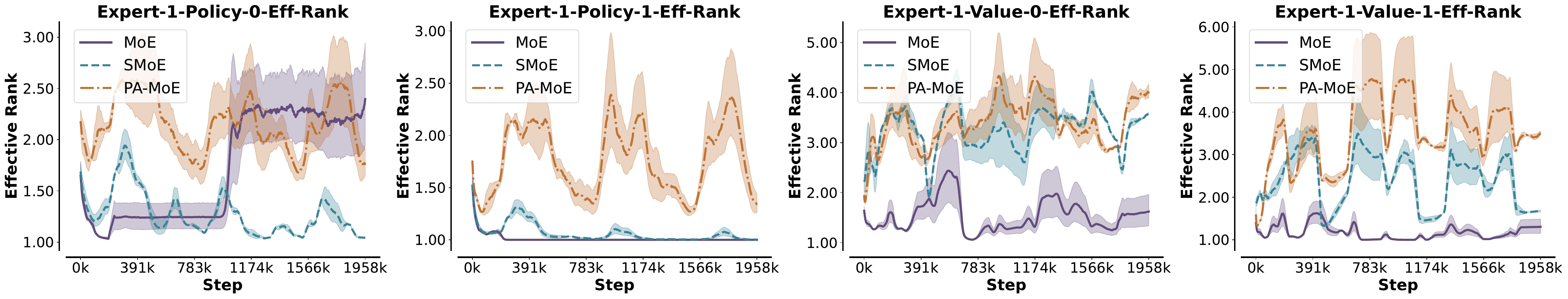}
\includegraphics[width=\linewidth]{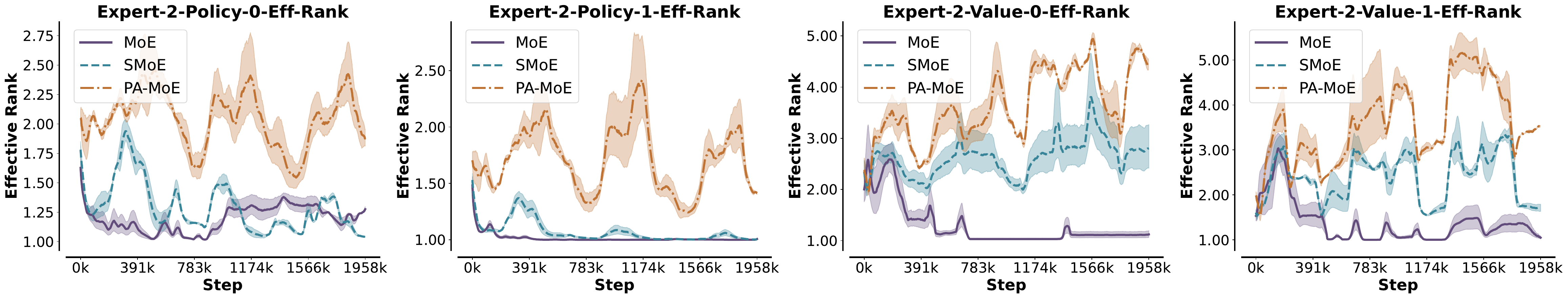}
\caption{Visualization results of the effective rank (Eff-Rank) across different layers of the policy and value networks for various experts.}
\label{eff_rank_fig}
\end{figure}

Figure~\ref{eff_rank_fig} shows that our method consistently yields a higher effective rank than others. This rank reflects the intrinsic dimensionality of hidden representations, with higher values indicating richer and more diverse features. By preserving plasticity, our approach enables more effective adaptation to varying conditions and captures a broader range of data characteristics.

\subsection{Layer-wise Analysis of Expert Approximate Rank in PA-MoE} \label{approx_rank}

To complement effective rank, we analyze the approximate rank across layers. While effective rank captures entropy-based diversity, approximate rank highlights variance concentration, showing how many principal directions significantly shape the feature space and how compact yet expressive the representations are under different expert configurations.


\begin{definition}[{\textbf{Approximate Rank}}]
Let $\{\sigma_{i}\}_{i=1}^{n}$ be the singular values of the feature matrix $M$. The squared singular values $\sigma_{i}^{2}$ represent the variance contributed by each singular direction. Normalize these values as $p_{i}=\frac{\sigma_{i}^{2}}{\sum_{j=1}^{n}\sigma^{2}}$.

The approximate rank is the smallest $r$ such that $\sum_{i=1}^{r}p_{i} \geq \text{prop}$ (typically 0.99), capturing the minimal number of singular components needed to explain a specified proportion of variance and thus quantifying the effective dimensionality of $M$.

\end{definition}

\begin{figure}[!htbp]
\centering
\includegraphics[width=\linewidth]{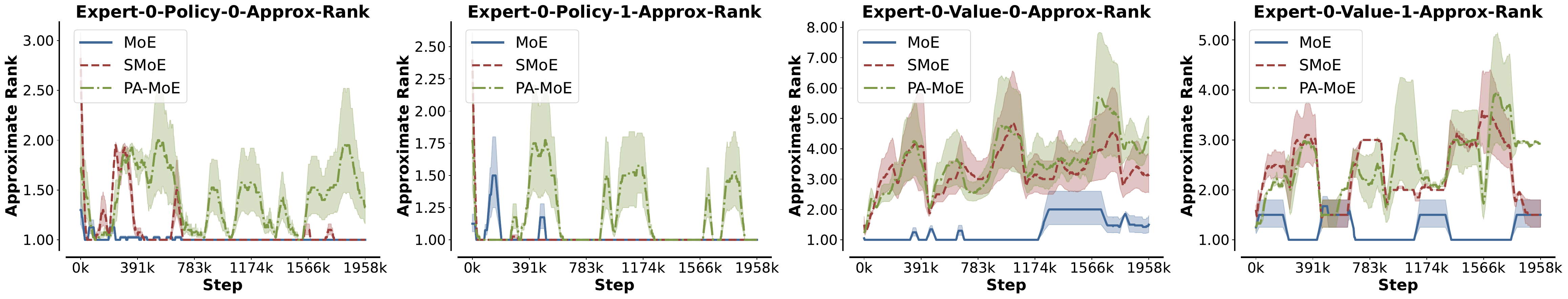}
\includegraphics[width=\linewidth]{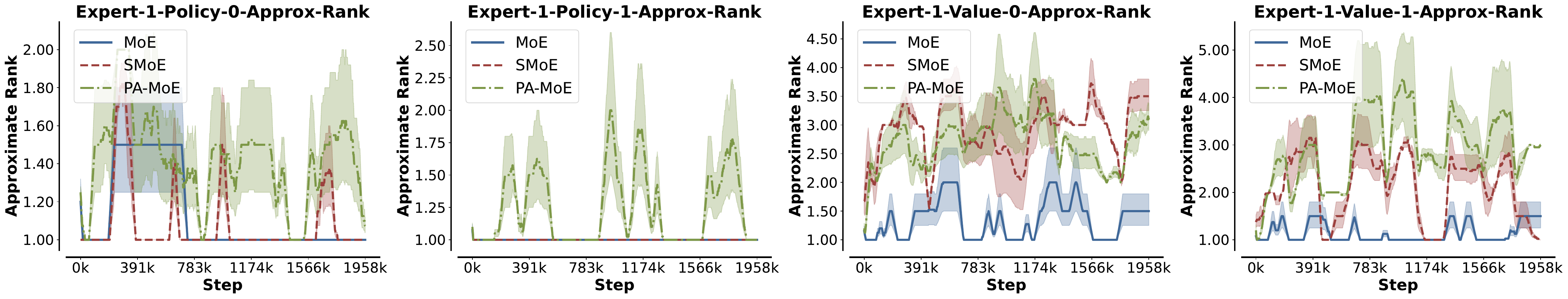}
\includegraphics[width=\linewidth]{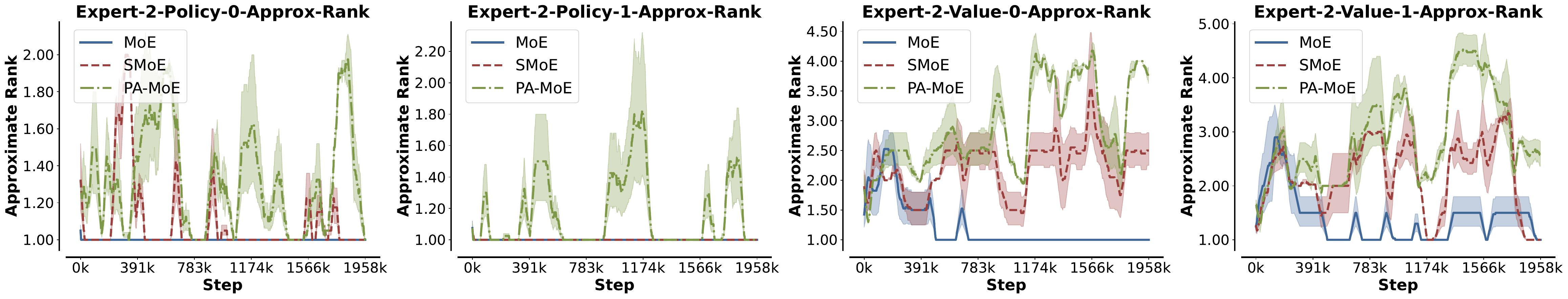}
\caption{Visualization results of the approximate rank (Approx-Rank) across different layers of the policy and value networks for various experts.}
\label{approx_rank_fig}
\end{figure}

Figure~\ref{approx_rank_fig} shows that our proposed PA-MoE approach consistently achieves a higher approximate rank than other methods, reflecting its ability to capture a greater proportion of the feature variance and thereby better preserve model plasticity.

\subsection{Visualization of Expert Weight Distribution in PA-MoE}
\label{Expert_weight_to_Mitigating_Plasticity_Loss_in_PA_MoE}

\begin{figure}[!htbp]
\centering
\includegraphics[width=\linewidth]{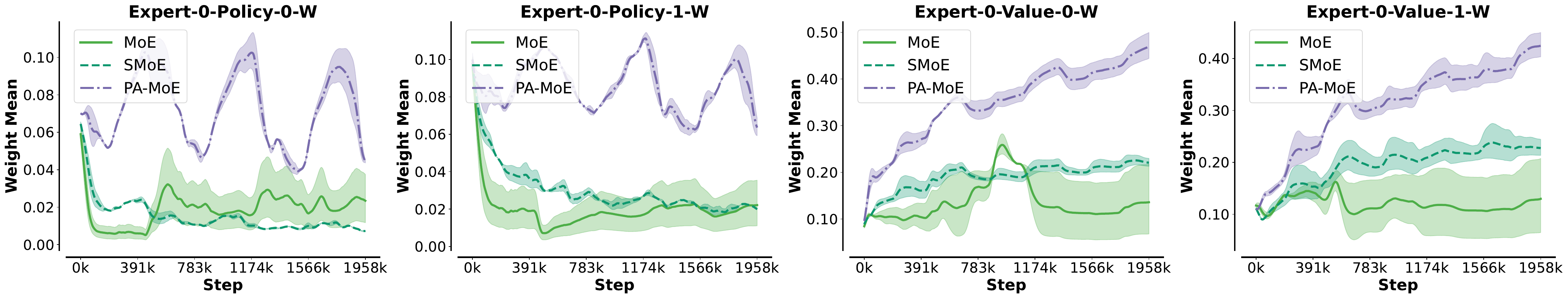}
\includegraphics[width=\linewidth]{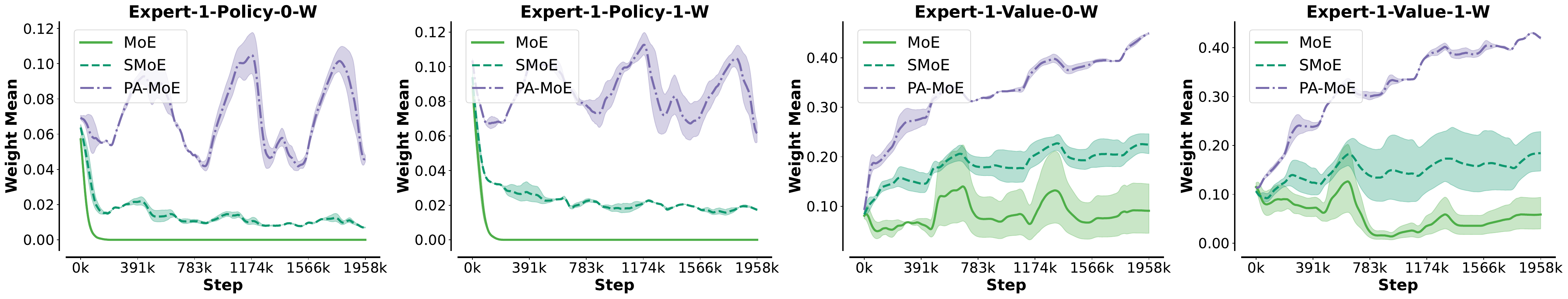}
\includegraphics[width=\linewidth]{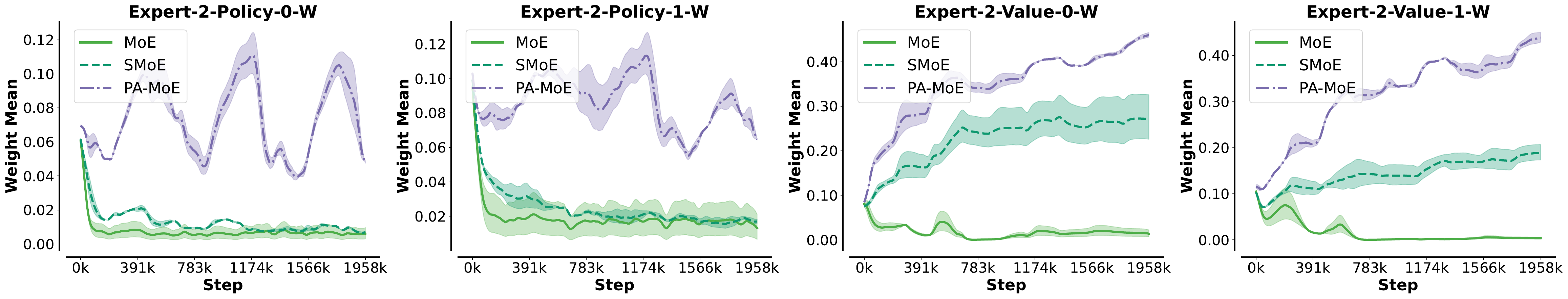}
\caption{Visualization results of the weight mean (W) across different layers of the policy and value networks for various experts.}
\label{weight}
\end{figure}

Figure~\ref{weight} visualizes weight dynamics across layers and experts under QoE shifts. Compared to the baseline, PA-MoE shows consistently larger weights due to its noise-injection-based forgetting mechanism: the network amplifies critical weights to offset noise, maintaining a balance between retention and forgetting essential for plasticity in non-stationary environments.


\end{document}